\documentclass[a4paper,11pt,onecolumn,unpublished]{quantumarticle}
\pdfoutput=1
\usepackage[T1]{fontenc}
\usepackage[utf8]{inputenc}
\usepackage{amsmath,amssymb,amsthm,mathtools}
\usepackage{microtype,mathrsfs,xcolor,needspace}
\usepackage{algorithm,algpseudocode}
\usepackage[numbers,sort&compress]{natbib}
\usepackage{hyperref}
\hypersetup{
  pdftitle={Sharp universal death of entanglement threshold for Pauli Hamiltonians},
  pdfauthor={Bobak Kiani},
  colorlinks=true,
  linkcolor=blue,
  citecolor=blue,
  urlcolor=cyan
}
\allowdisplaybreaks[2]
\newtheorem{theorem}{Theorem}[section]
\newtheorem{lemma}[theorem]{Lemma}
\newtheorem{proposition}[theorem]{Proposition}
\newtheorem{corollary}[theorem]{Corollary}
\theoremstyle{remark}
\newtheorem{remark}[theorem]{Remark}
\numberwithin{equation}{section}
\DeclareMathOperator{\Tr}{Tr}
\DeclareMathOperator{\supp}{supp}
\DeclareMathOperator{\arctanh}{arctanh}
\DeclareMathOperator{\Sym}{Sym}

\newcommand{\normp}[1]{\left\lVert #1\right\rVert_{\mathrm P,1}}

\newcommand{\comp}{\mathcal C}
\newcommand{\ket}[1]{\lvert #1\rangle}
\newcommand{\bra}[1]{\langle #1\rvert}

\title{Sharp universal death of entanglement threshold for Pauli Hamiltonians}
\author{Bobak T. Kiani}
\email{b.kiani@bowdoin.edu}
\affil{Department of Computer Science, Bowdoin College}
\date{September 9, 2026}
\begin{document}
\maketitle

\begin{abstract}
We determine the exact universal high-temperature separability threshold for Pauli Hamiltonians of bounded degree $\Delta\ge2$. If every coefficient in the Hamiltonian has magnitude at most one and each term has overlapping support with at most $\Delta$ other terms, the Gibbs state is a mixture of product Pauli eigenstates whenever
\[
    \beta \le z_\Delta
    :=\operatorname{arctanh}\left[\max_{0\le x \le 1}x\left(\frac{1-x}{1+x}\right)^{\Delta-1} \right].
\]
For every $\beta>z_\Delta$, a finite commuting Hamiltonian with maximum overlap degree at most $\Delta$ has an entangled Gibbs state. At any fixed $\beta<z_\Delta$ strictly below the threshold, a classical polynomial-time algorithm produces samples from a distribution over product Pauli eigenstates approximating the Gibbs state in trace distance.
\end{abstract}

{\small \noindent
\textbf{Use of artificial intelligence:}
This paper developed out of interactions with ChatGPT 6. I asked the model to find a simple proof of high-temperature separability\footnote{\href{https://chatgpt.com/share/6a9d7033-3b48-83e9-aa68-a8741635eb48}{Here} is a record of the conversation that initiated this study. 
This then split into different threads. The final main theorem was \href{https://chatgpt.com/s/t_6aa41d57c0e08191b998e599ad9265b8}{proven} after a bit of back and forth.}. 
It subsequently proposed the proof strategy and drafted the central arguments that form the basis of this paper. Throughout, I evaluated proofs, pointed out gaps and weaknesses, and requested revisions. A draft of this manuscript was generated with ChatGPT 6 and heavily revised by me. I have checked all proofs and take full responsibility for the final paper. 

\section{Introduction}
\label{sec:intro}

The death of entanglement refers to a phenomenon where a Gibbs state of a quantum system becomes completely unentangled at sufficiently high temperature. States in this completely unentangled regime are fully separable and can be prepared as a mixture of product states.
In a seminal work, Bakshi, Liu, Moitra, and Tang established this phenomenon for local Hamiltonians with bounded interaction degree and gave a classical algorithm for preparing their high-temperature Gibbs states~\cite{BLMT}. We give a different proof of the death of entanglement for Pauli Hamiltonians which determines the exact universal separability threshold at every fixed interaction degree. At bounded degree, this decomposition also admits efficient approximate classical sampling at temperatures above the threshold.
Similar to several classical threshold phenomena~\cite{ScottSokal,Weitz,LSSContraction,ShaoSun,PetersRegtsIsing}, a fixed point of a tree recursion identifies the universal threshold. Here the recursion describes an eigenvalue of the partially transposed Gibbs state of a commuting tree Hamiltonian used in the partial transpose test for entanglement~\cite{Peres,HORODECKI19961}. For every inverse temperature above the threshold, this recursion shows a finite tree Hamiltonian has an entangled Gibbs state at that inverse temperature.

More formally, consider a Hamiltonian on $n$ qubits with a specified Pauli decomposition,
\begin{equation}
 H=\sum_{i\in V}J_iP_i,\qquad
 \rho_\beta(H)=\frac{e^{-\beta H}}{\Tr e^{-\beta H}},\qquad \beta\ge0,
 \label{eq:H}
\end{equation}
where $J_i$ are real coeffients and each $P_i$ is a nonidentity Hermitian Pauli matrix. 
The \emph{term-overlap graph} $G$ has one vertex for each term and contains an edge between two distinct terms when they both act on at least one common qubit. Its maximum degree is at most $\Delta$. We write $N_G[i]$ for the closed neighborhood, including $i$, and $N_G[U]=\bigcup_{i\in U}N_G[i]$.

For $k$-local Pauli Hamiltonians with coefficient magnitudes at most one, \cite{BLMT} established separability when $\beta \le 1/(100k\Delta)$.
Putterman, Zlokapa, and Cotler removed the additional locality factor, proving
separability for $\beta\le1/(96\Delta)$ and treated settings with long-range interactions of bounded total strength at each site~\cite{PZC}. 
They also show that the scaling $O(1/\Delta)$ cannot be improved, but leave open the question about the optimal constant~\cite[Section~1.4]{PZC}.
Both works construct decompositions of Gibbs states algorithmically through successive pinning steps. Their coefficient bounds ensure positivity of the final branches of the decomposition. Our proof instead decomposes the entire expansion of $e^{-\beta H}$ before proving positivity. We express the Gibbs operator as a sum of separable components with real coefficients which are subsequently shown to be positive. The coefficients are partition functions of an independent set polynomial over vertices which are each connected sets of Hamiltonian terms. This connects the proof of separability to the independence polynomial and Shearer's condition for positivity~\cite{Shearer,ScottSokal}.

We now state the precise thresholds. A state is \emph{fully separable} if it is a convex combination of tensor products of single-qubit states. The separable Gibbs states we study here will be convex combinations of product Pauli eigenstates. For $\Delta\ge2$, define
\begin{equation}
\begin{split}    
 z_\Delta
 &:=\operatorname{arctanh}\left[\max_{0\le x \le 1}x\left(\frac{1-x}{1+x}\right)^{\Delta-1} \right]\\
 &=\operatorname{arctanh}\left[x_\Delta\left(\frac{1-x_\Delta}{1+x_\Delta}\right)^{\Delta-1}\right],
 \label{eq:firstzdelta}
\end{split}
\end{equation}
where $x_\Delta=\sqrt{(\Delta-1)^2+1}-\Delta+1$.
Our main result establishes $z_\Delta$ as the universal threshold for separability.

\begin{theorem}[Exact universal separability threshold]
\label{thm:fixed}
Assume Hamiltonian $H$ has coefficients $|J_i|\le J$ and term-overlap graph of maximum degree $\Delta\ge2$. 
Then $\rho_\beta(H)$ is a mixture of product Pauli eigenstates whenever
\begin{equation}
 \beta J\le z_\Delta.
 \label{eq:fixed}
\end{equation}
Conversely, for every $\beta J>z_\Delta$, there is a commuting Pauli Hamiltonian on finitely many qubits with coefficient magnitudes $J$ and maximum overlap degree at most $\Delta$ whose Gibbs state is entangled at that temperature.
\end{theorem}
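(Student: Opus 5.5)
The plan has two halves: an explicit separable decomposition for $\beta J\le z_\Delta$, and an explicit tree counterexample for $\beta J>z_\Delta$. Throughout I rescale so that $J=1$ and write $t=\tanh\beta$.

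\textbf{Sufficiency.} I will rewrite $e^{-\beta H}$ so that the only operators that appear are of the form $\tfrac12(I\pm P_i)$ and products of them over clusters of terms.

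First, for a single term, $e^{-\beta J_iP_i}=\cosh(\beta J_i)\bigl(I-\tanh(\beta J_i)P_i\bigr)$. The projector $\tfrac12(I\pm P_i)$ onto an eigenspace of the product Pauli $P_i$ is a uniform mixture of product eigenstates of its single-qubit factors, with the product of local signs fixed.

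For non-commuting $H$ I expand $e^{-\beta H}$ by the Dyson/Trotter series and group each word by the connected components, in the term-overlap graph $G$, of the set of terms it uses. Within a connected cluster $C$, any product of the $P_i$ is a single Pauli string up to a phase. Summing the series over words that use exactly the cluster $C$ therefore gives an operator of the form $\sum_{\pm}c_C^\pm\,\Pi_C^\pm$, where $\Pi_C^\pm$ are products of local Pauli-eigenstate projectors on $\supp(C)$.

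Disjoint, non-adjacent clusters act on disjoint qubits and commute. The Gibbs operator then becomes a sum over configurations of product Pauli eigenstates. The weight of each configuration is a polymer partition function: an independence polynomial on the graph whose vertices are connected subsets of $V$ and whose edges join subsets that are adjacent or overlapping in $G$. The activities are signed, with magnitudes controlled by $t$.

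Separability then reduces to showing that all these configuration weights are nonnegative. I will prove this via a Weitz/Scott--Sokal style tree recursion. Pinning terms one at a time, the ratio of partition functions with term $i$ in its two states obeys
\[
R_i=t\prod_{j}\frac{1-R_j}{1+R_j}
\]
over at most $\Delta-1$ children in the self-avoiding-walk tree, with $\Delta$ at the root. Inductively $R_i\in[0,x]$ whenever $x\bigl(\frac{1-x}{1+x}\bigr)^{\Delta-1}\ge t$ has a solution $x\le1$. Since the maximizer is $x_\Delta$, this solution exists exactly when $t\le\tanh z_\Delta$. The root, with $\Delta$ neighbours, still gives $R<1$, which keeps the denominators $1+R_j$ positive and the weights $1-R$ nonnegative. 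This is the Shearer-type positivity region.

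\textbf{Necessity.} For $\beta>z_\Delta$ I take a commuting Hamiltonian on a $(\Delta-1)$-ary tree of depth $L$: terms $X\otimes X$ and $Z\otimes Z$-type Pauli strings arranged so that each term overlaps at most $\Delta$ others. I then apply the partial transpose across the root edge. Because the terms commute, $e^{-\beta H}=\prod_i\cosh\beta\,(I-tP_i)$.

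Under partial transpose, $Y$-type factors flip sign. The smallest eigenvalue of $\rho^{T_A}$ is then computed by the same recursion, now with a sign that pushes $R$ upward: the root ratio is $R_L=f^{(L)}(0)$, where $f(x)=t\bigl(\frac{1+x}{1-x}\bigr)^{\Delta-1}$ or its dual form. When $t>\tanh z_\Delta$ the map has no fixed point in $[0,1)$, so the iterates exceed $1$ after finitely many steps. That produces a negative eigenvalue, so the state is NPT and hence entangled, for finite $L$.

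\textbf{Main obstacle.} I expect the hardest step to be the sufficiency step for non-commuting $H$. One must organize the series so that (i) every cluster contribution really is a combination of product-eigenstate projectors with real coefficients, and (ii) the resulting activities are dominated by the commuting tree recursion with exactly $t=\tanh\beta$, not by a worse bound such as $e^{\beta}-1$, so that the sharp constant survives.

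My first attempt will be to expand in the basis $\tfrac12(I\pm P_i)$ term by term using the identity for $e^{-\beta J_iP_i}$, interleaved via a Lie--Trotter limit. Each Trotter factor is then already a positive combination of the two projectors, with weights $1\pm t/m$. This makes the sign structure manifest, and the recursion bound should pass to the limit uniformly in $m$.
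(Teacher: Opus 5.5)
Your necessity half follows the paper's route, with one fix noted at the end. The sufficiency half, however, is missing its load-bearing steps.

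\emph{The decomposition.} The cluster contributions are not of the form $\sum_\pm c_C^\pm\Pi_C^\pm$. For a non-commuting cluster, the connected piece $D_\gamma=\sum_{B\subseteq\gamma}(-1)^{|\gamma|-|B|}F_B$ is a real combination of up to $2^{|\gamma|}$ Pauli strings and is generally indefinite, hence not separable. Trotterization does not make positivity manifest either: products of eigenprojectors of anticommuting Paulis, e.g.\ $\tfrac14(I+X)(I+Z)$, are not even Hermitian. The paper's device is the shift $R_\gamma=r_\gamma I+D_\gamma$ with $r_\gamma\ge\normp{D_\gamma}$, which is separable by Lemma~\ref{lem:cone}. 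This pushes every sign into scalar coefficients $Q_G(S)$, a polymer polynomial with activities $-r_\gamma$.

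\emph{The activity bound.} You leave open how small $r_\gamma$ can be, and that is exactly where the sharp constant lives. Bounding words one at a time gives per-term activity $e^{z}-1$ from a Trotter expansion, or $\tanh z\,(1+2\tanh(z/2))$ from inclusion--exclusion without cancellation. Both exceed $\tanh z$ at second order, so either would only prove separability below a strictly smaller threshold at each fixed $\Delta$. The paper proceeds in three steps. It factors $D_U$ over components of the anticommutation graph, which gives exactly $\tanh z_i$ for every term commuting with the rest of $U$. It charges each nontrivial component only $\eta\alpha^{|C|-1}$, using the $\tfrac13$ cancellation for symmetrized products (Lemma~\ref{lem:sym}). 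It then needs the red/blue two-quantity recursion for $C_p,O_p$ in Lemma~\ref{lem:rootbound} to show these inflated activities still keep the connected-set sum at most $x$.

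\emph{The positivity step.} This does not work as written. Weitz-style pinning on the self-avoiding-walk tree requires a two-spin system with pairwise interactions. But $Q_G$ weights a set $U$ by $(-1)^{c(G[U])}$ (times $w_A(U)$ in general). On a graph with cycles, $c(G[U])$ equals $|U|-|E(G[U])|$ plus the cycle rank of $G[U]$, which is not a product of vertex and edge factors. So there is no per-vertex ratio obeying a product-over-neighbours recursion. Your recursion is also inverted: $R_i=t\prod_j\frac{1-R_j}{1+R_j}$ leaves $[0,t]$ invariant for every $t<1$. It would therefore ``prove'' separability at all temperatures, contradicting your own necessity argument. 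On trees the correct recursion is $t\prod_j\frac{1+R_j}{1-R_j}$, the same map as in the partial-transpose computation.

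The paper replaces pinning by a deletion induction with four ingredients:
\begin{itemize}
\item the identity $Q_G=Q_{G-v}-\sum_{\gamma\ni v}r_\gamma Q_{G-N_G[\gamma]}$;
\item the inductive hypothesis $Q_G/Q_{G-v}\ge1-x$ at vertices of degree at most $b$;
\item the count $|N_G[\gamma]\setminus\{v\}|\le b|\gamma|$, which telescopes to $Q_{G-N_G[\gamma]}/Q_{G-v}\le(1-x)^{-b|\gamma|}$;
\item the path-tree encoding of connected sets (Lemma~\ref{lem:lift}), which reduces everything to $C_p=u\prod_i(1+C_i)$ with $u=t/(1-x)^b$.
\end{itemize}

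\emph{The necessity fix.} The transposed subsystem must contain exactly one qubit of every edge, so that $Q_U^{T_{\mathsf A}}=(-1)^{|E(T[U])|}Q_U$. The paper puts $A_e,B_e$ on each edge, with $XX$ on outgoing and $ZZ$ on incoming edges, and transposes all $A_e$. A cut splitting only the root edge produces a single sign. The resulting eigenvalue condition $1-2s-s^2<0$, i.e.\ $s>\sqrt2-1$, is independent of depth and far above $t_\Delta$.
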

For example, $z_2=\tfrac14\log2\approx 0.173$ and $z_3\approx 0.0904$. 
The threshold $z_\Delta$ is derived through analyzing fixed points of a recursion. Namely, let $b=\Delta-1$ and define
\begin{equation}
 \begin{split}
 t_\Delta&
 =x_\Delta\left(\frac{1-x_\Delta}{1+x_\Delta}\right)^b,
 \qquad z_\Delta=\arctanh t_\Delta.
 \end{split}
 \label{eq:tDelta}
\end{equation}
The value $t_\Delta$ corresponds to the largest $t$ for which a recursion on a tree $x_{h+1}=t((1+x_h)/(1-x_h))^b$ has a fixed point for $x_h\in(0,1)$.

There is also a useful formulation for weighted Pauli Hamiltonians. Let $\Lambda$ denote the maximum sum of absolute coefficients over a neighborhood of any given term:
\begin{equation}
 \Lambda=\max_{i\in V}\sum_{j\in N_G[i]}|J_j|.
 \label{eq:Lambda}
\end{equation}
\begin{corollary}[Optimal asymptotic constant]
\label{thm:weighted}
For every Pauli Hamiltonian~\eqref{eq:H}, $\rho_\beta(H)$ is a mixture of product Pauli eigenstates whenever
\begin{equation}
 \beta\Lambda\le\frac1{2e}.
 \label{eq:weighted}
\end{equation}
The constant cannot be increased uniformly over Pauli Hamiltonians. For every $c>1/(2e)$, there is a Hamiltonian on finitely many qubits that is entangled at $\beta< c/\Lambda$.
\end{corollary}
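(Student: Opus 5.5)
The plan is to prove the two halves separately. The converse follows from the converse part of Theorem~\ref{thm:fixed} together with the large-$\Delta$ asymptotics of $z_\Delta$. The sufficiency part needs a weighted version of the positivity criterion behind Theorem~\ref{thm:fixed}.

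\emph{Converse.} Write $b=\Delta-1$. For large $b$ the maximizer $x_\Delta=\sqrt{b^2+1}-b$ behaves like $1/(2b)$, and $\left(\frac{1-x}{1+x}\right)^b=\exp(-2bx+O(bx^3))$. Hence
\[
t_\Delta=\frac{1+o(1)}{2eb},\qquad z_\Delta=\frac{1+o(1)}{2e(\Delta-1)},
\]
so $(\Delta+1)z_\Delta\to 1/(2e)$. Given $c>1/(2e)$, fix $J=1$ and choose $\Delta$ large enough that $(\Delta+1)z_\Delta<c$. Then pick $\beta$ with $z_\Delta<\beta<c/(\Delta+1)$. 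Theorem~\ref{thm:fixed} supplies a finite commuting Hamiltonian with coefficients of magnitude $1$ and overlap degree at most $\Delta$ whose Gibbs state at this $\beta$ is entangled. For it, $\Lambda\le\Delta+1$, so $\beta<c/(\Delta+1)\le c/\Lambda$, as required.

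\emph{Sufficiency.} The direct route does not work, because $\Lambda$ only bounds $(\Delta+1)\max|J_i|$ in the wrong direction. Instead I would go back to the mechanism of the main proof. There $e^{-\beta H}$ is written as a combination of product-Pauli-eigenstate projectors whose coefficients are independence-type polynomials in the activities $t_i=\tanh(\beta|J_i|)$. Positivity then holds under a tree-recursion, Dobrushin-type condition of the form: there exist $x_i\in[0,1)$ with
\[
t_i\le x_i\prod_{j\in N_G(i)}\frac{1-x_j}{1+x_j}.
\]
The uniform case $x_i\equiv x_\Delta$ recovers $t_\Delta$. I would state and use this criterion in its inhomogeneous form, which I expect the main argument to give verbatim.

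The key choice is $x_j=\tanh(e\beta|J_j|)$, since then $(1-x_j)/(1+x_j)=e^{-2e\beta|J_j|}$ exactly. The product over open neighbors equals $\exp\!\big(-2e\beta(\Lambda_i-|J_i|)\big)$, where $\Lambda_i=\sum_{j\in N_G[i]}|J_j|\le\Lambda$. Using $\beta\Lambda\le 1/(2e)$, this is at least $e^{-1}e^{2e\beta|J_i|}$. So it suffices to verify, for $u=\beta|J_i|\in[0,1/(2e)]$, the one-variable inequality
\[
e\tanh u\le \tanh(eu)\,e^{2eu}.
\]
This holds to first order with equality at $u=0$. For $u>0$ the right side gains from the factor $e^{2eu}$ while $\tanh u\le u$, so I would check it by comparing derivatives, or by the bounds $\tanh(eu)\ge eu-(eu)^3/3$ and $e^{2eu}\ge1+2eu$.

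The main obstacle is confirming that the positivity criterion from the proof of Theorem~\ref{thm:fixed} really extends to non-uniform activities with the same recursion. It may be stated only for the homogeneous tree bound $t\le t_\Delta$. In that case I would redo its inductive step, whether a Shearer or tree-recursion bound on ratios of the coefficient polynomials, with vertex-dependent $x_i$. That step only uses local products over neighbors, so it should go through unchanged.
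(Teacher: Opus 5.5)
Your converse is correct, and slightly more direct than the paper's (the paper reruns the cavity recursion at $s=(1+1/b)t_\Delta$ and computes $\Lambda=b+2$ on the resulting tree); $\Lambda\le\Delta+1$ and $(\Delta+1)z_\Delta\to1/(2e)$ suffice. The sufficiency half has a genuine gap. It rests on an inhomogeneous positivity criterion that the paper does not prove, and your claim that the inductive step ``goes through unchanged'' holds only for commuting Hamiltonians. There the activities are exactly $\prod_{i\in\gamma}\tanh z_i$, and the deletion induction of Proposition~\ref{prop:scalar} does adapt. You prove $Q_G/Q_{G-v}\ge1-x_v$ at every vertex, and charge $t_a\prod_{w\in N_G(a)\setminus\{a'\}}(1-x_w)^{-1}$ to a path-tree node with endpoint $a$ and parent endpoint $a'$. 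For anticommuting terms, however, the activities are $t^{|\gamma|}w_A(\gamma)$, and the paper's control of $w_A$ uses the homogeneous critical point essentially. The bound $z<x/2$ in \eqref{eq:zbound} (needed for $a_i\le\alpha$) and the margins \eqref{eq:Rbound}--\eqref{eq:closedcompare} all come from $2bx=1-x^2$ at $x=x_\Delta$. Your criterion with general $x_i$ does not supply these: if the neighbours of $i$ have tiny $x_j$, it only gives $t_i\lesssim x_i$, so $z_i<x_i/2$ can fail. Lemmas~\ref{lem:components} and~\ref{lem:rootbound} would therefore have to be rebuilt with vertex-dependent $\alpha_i$, $y_i$ and new scalar margins. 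That is plausible for your particular choice $x_j=\tanh(e\beta|J_j|)$, since $\sum_{j\in N_G(i)}x_j\le 1/2$, but it is new work, not a verbatim consequence. Also, with the product over all $\Delta$ neighbours, the best uniform choice gives $t_{\Delta+1}$, not $t_\Delta$. The paper's exponent $b=\Delta-1$ comes from needing the ratio bound only at vertices that have already lost a neighbour.

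The idea that makes all of this unnecessary is term splitting. Replace each $J_iP_i$ by $N_i=\lceil|J_i|/\varepsilon\rceil$ copies of $(J_i/N_i)P_i$. The Hamiltonian, and hence $\rho_\beta(H)$, is unchanged, and every coefficient is at most $\varepsilon$. Since copies of overlapping terms are adjacent, the new maximum degree is $\Delta_\varepsilon=\max_i\sum_{j\in N_G[i]}N_j-1$, with $\varepsilon(\Delta_\varepsilon+1)\to\Lambda$ as $\varepsilon\to0$. Because $\Delta z_\Delta\to1/(2e)$, for $\beta\Lambda<1/(2e)$ and small $\varepsilon$ you get $\beta\varepsilon\le z_{\Delta_\varepsilon}$. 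Theorem~\ref{thm:fixed}, which already handles anticommutation, then applies directly. The case $\beta\Lambda=1/(2e)$ follows because the convex hull of the finitely many product Pauli eigenstate projectors is closed. This is exactly how the paper gets around the ``wrong direction'' obstacle you identified: splitting makes $(\Delta+1)\max_i|J_i|$ asymptotically equal to $\Lambda$.
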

Since $\Lambda\le J(\Delta+1)$, a simple sufficient condition is $\beta J(\Delta+1)\le1/(2e)$. For $k$-local Pauli Hamiltonians with per-site strength $B=\max_v\sum_{i:v\in\supp(P_i)}|J_i|$, the inequality $\Lambda\le kB$ gives threshold $\beta kB\le1/(2e)$.

The positive decomposition also gives an efficient approximate sampler
at every fixed relative distance below the threshold.

\begin{theorem}[Classical preparation below the threshold]
\label{thm:sampling}
Fix $\Delta\ge2$ and $\delta\in(0,1)$. Given Hamiltonian $H$, assume $|J_i|\le J$ and the term-overlap graph has
maximum degree at most $\Delta$. For $\beta\ge0$ satisfying
\begin{equation}
 \beta J\le(1-\delta)z_\Delta,
 \label{eq:sampling-temperature}
\end{equation}
and accuracy $\epsilon\in(0,1/2)$, a randomized classical algorithm
outputs samples from a distribution $\mathcal D$ over products of single-qubit Pauli eigenstates $\ket\psi$ such that
\begin{equation}
 \frac12\left\|\mathbb E_{\mathcal D}\,\ket\psi\bra\psi-\rho_\beta(H)\right\|_1
 \le\epsilon.
 \label{eq:sampling-guarantee}
\end{equation}
For fixed $\Delta,\delta$, the running time to produce a sample is at most $\operatorname{poly}(n,m,1/\epsilon)$.\footnote{See Remark~\ref{rem:alg-precision} for dependence on numerical precision.}
\end{theorem}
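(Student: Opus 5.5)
The sampler will follow the same positive decomposition that proves the separability half of Theorem~\ref{thm:fixed}. That argument writes $e^{-\beta H}$ as a sum of separable components $\sum_{\mathcal S} w(\mathcal S)\, \Pi_{\mathcal S}$. Here $\mathcal S$ ranges over collections of pairwise non-overlapping connected sets of terms, i.e.\ independent sets in the polymer graph. Each $\Pi_{\mathcal S}$ is a normalized mixture of product Pauli eigenstates, and the weights $w(\mathcal S)$ are products of polymer activities. These weights are nonnegative when $\beta J\le z_\Delta$, by the tree-recursion/Shearer-type argument. The plan is to sample a polymer configuration $\mathcal S$ with probability proportional to $w(\mathcal S)$ and then sample a product Pauli eigenstate from $\Pi_{\mathcal S}$. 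The second step factorizes over polymers and the free qubits: uncovered qubits get a uniformly random Pauli eigenstate, and each polymer gets a local distribution on its support. Since polymers have bounded size after truncation, this step is cheap. The real work is approximately sampling the polymer gas.

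First I would prove a decay estimate under the strict margin $\beta J\le(1-\delta)z_\Delta$. The activity of a connected polymer $\gamma$ with $|\gamma|=k$ terms should satisfy a bound like $|a(\gamma)|\le C\,(1-\delta')^k\,\mu^k$, with $\mu$ the critical activity scale. This comes from the expansion of $\tanh(\beta J_i)$-type factors: each term contributes at most $\tanh(\beta J)\le(1-\delta)t_\Delta$ up to a constant. The number of connected sets of size $k$ containing a given term is at most $(e\Delta)^k$. Combining the two, the margin $\delta$ gives a Kotecký--Preiss-type condition with slack. Next I truncate to polymers of size at most $L=O_{\Delta,\delta}(\log(nm/\epsilon))$. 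The discarded mass is at most $m\,(1-\delta')^{L}\le \epsilon/2$ in total variation of the weight measure, hence in trace distance of the mixtures. The truncated gas has polynomially many polymers, about $m\,(e\Delta)^L=\operatorname{poly}(m,1/\epsilon)$, and each can be enumerated and its activity computed in polynomial time.

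To sample the truncated gas I would use Glauber dynamics on polymer configurations: add or remove a polymer with acceptance proportional to its activity, subject to compatibility. I would prove rapid mixing by path coupling. Adding a polymer blocks incompatible polymers, and the total blocked activity near any polymer is bounded by the Kotecký--Preiss sum, which is $<1$ with slack depending on $\delta$. This gives contraction and mixing time $O(N\log(N/\epsilon))$ with $N$ the number of polymers. An alternative is sequential sampling via marginals computed from a truncated cluster expansion of $\log Z$, which converges with margin $\delta$. I would keep this as a fallback.

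The main obstacle is the first step, showing the margin below $z_\Delta$ translates into \emph{absolute} convergence or contraction uniformly in $n$. Positivity at the threshold comes from an exact tree-recursion fixed point, not from an absolute bound. Near $z_\Delta$ the naive polymer counting bound $(e\Delta)^k$ may be too lossy to reach $(1-\delta)z_\Delta$. I would first try to reuse the recursion itself: show that for $t<t_\Delta$ the map $x\mapsto t((1+x)/(1-x))^b$ is a strict contraction near its stable fixed point. That gives a uniform Lipschitz constant $<1$, which yields correlation decay for the ratios of partition functions. Those ratios can then drive a Weitz-style self-avoiding-walk tree computation of marginals in time $n^{O(\log(1/\delta)\ldots)}$, which is polynomial for fixed $\Delta,\delta$.
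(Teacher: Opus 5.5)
\textbf{There is a genuine gap, and it starts with the structure of the measure you want to sample.}

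\emph{The target law is not a hard-core polymer gas.} The weights in Proposition~\ref{prop:positive} are not products of polymer activities. They are $\pi(\Gamma)\propto Q_G\bigl(V\setminus N_G[\bigcup_{\gamma\in\Gamma}\gamma]\bigr)\prod_{\gamma\in\Gamma}a_\gamma$, where $a_\gamma=r_\gamma+\tau(D_\gamma)>0$ and $Q_G$ is the independence polynomial of all residual polymers at the \emph{negative} activities $-r_\gamma$. Positivity of $\pi$ is a global, Shearer-type fact, which is why the tree argument is needed at all. Consequently, Glauber dynamics with acceptance proportional to local activities targets the wrong law. The correct Metropolis ratio for adding $\gamma$ is $a_\gamma Q_G(S\setminus N_G[\gamma])/Q_G(S)$, which is exactly the partition-function ratio you were trying to avoid. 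Path coupling via Kotecký--Preiss therefore does not apply.

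\emph{Your quantitative tools also fall short of the threshold, as you suspected.} Kotecký--Preiss-type bounds match $z_\Delta$ only asymptotically in $\Delta$ (the $1/(2e)$ of Corollary~\ref{thm:weighted}) and lose a constant factor at fixed $\Delta$. For instance, take a long path ($\Delta=2$) with activities $s^{|\gamma|}$ and weight $a(\gamma)=a|\gamma|$. The condition summed over polymers incompatible with a single term already fails for $s\gtrsim0.09$, whereas $t_2\approx0.17$.

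\emph{The Weitz fallback is not available either.} The map $x\mapsto s((1+x)/(1-x))^b$ is the recursion for $q_T$ on commuting trees only. On a general $G$, the deletion recurrence~\eqref{eq:delete} runs over entire connected sets $\gamma\ni v$, with non-local factors $Q_{G-N_G[\gamma]}$ and weights $w_A(\gamma)$. No self-avoiding-walk tree identity reduces it to a scalar recursion. In addition, the conditional quantities you would need carry signed weights ($-r_\gamma$ and $\tau(D_\gamma)$). This lies outside the positive two-spin setting where correlation decay yields counting.

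\textbf{The missing idea is that Proposition~\ref{prop:scalar} already supplies absolute control, through the Scott--Sokal characterization of Shearer's region.} Positivity of $Q_{G[S],A[S]}(s)$ for all $0\le s\le t$ and every induced subgraph implies, via \cite[Theorem~2.10]{ScottSokal} as in Lemma~\ref{lem:alg-polydisc}, that $\Xi(\mathbf v)\ne0$ on the whole complex polydisc $|v_\gamma|\le t^{|\gamma|}w_A(\gamma)$.

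The paper then lowers the shifts to $r_\gamma=\sigma^{|\gamma|}w_A(\gamma)$ with $\sigma=(t+\tanh\beta J)/2$. These still dominate $\normp{D_\gamma}$ and create the margin $\mathcal R=t/\sigma\ge1+\delta/4$. Every conditional partition function $\mathcal Z(S,B)$ is $\Xi$ at weights $0$, $-r_\gamma$ or $\tau(D_\gamma)$, all of modulus at most $r_\gamma$. Hence $f_{S,B}(q)$ is zero-free on $|q|\le\mathcal R$.

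This single fact yields three ingredients:
\begin{enumerate}
\item Barvinok interpolation of $\log\mathcal Z(S,B)$ to order $O(\delta^{-1}\log(m/(\delta\xi)))$, computed by enumerating connected sets.
\item A \emph{conditional} tail bound $\sum_{|\gamma|>K}p_\gamma\le2/(\mathcal R^{K+1}-1)$, obtained from the linear perturbation $\mathcal Z_h=\mathcal Z(S,B)(1+hp_{>K})$. This replaces your KP-based truncation.
\item A sequential rule for the smallest undecided label $v$: declare it unselected with probability $\mathcal Z(S,B\cup\{v\})/\mathcal Z(S,B)$, or select $\gamma\ni v$ with probability $a_\gamma\mathcal Z(S_\gamma,B_\gamma)/\mathcal Z(S,B)$.
\end{enumerate}

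Your fallback of sequential sampling from a truncated expansion of $\log Z$ is structurally this algorithm. What it lacks is the source of convergence up to $(1-\delta)z_\Delta$. That source is Shearer-type zero-freeness together with the reduced shift $\sigma$, not a Kotecký--Preiss bound.
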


Section~\ref{sec:sampling} proves this result by sampling connected sets of terms and then product Pauli eigenstates from their separable factors.

\subsection{Proof sketch and relation to earlier work}

To prove separability, one must show the Gibbs state can be decomposed as
\[
\rho_\beta(H)=\sum_{\Gamma} \pi(\Gamma) \rho_\Gamma,
\]
where $\pi(\Gamma)$ is a (classical) distribution over separable components $\rho_\Gamma$. 
Let us rewrite the Hamiltonian to pass signs into Pauli terms so
\[
 -\beta H=\sum_{i\in V}z_iS_i,
 \qquad \text{where } z_i=\beta|J_i|,
 \text{ and } S_i=-\operatorname{sgn}(J_i)P_i.
\]
The starting point for our decomposition is the inclusion--exclusion formula which states that for a function defined on subsets $S\subseteq V$:
\[
 F(V) = \sum_{U\subseteq V} \sum_{B\subseteq U}(-1)^{|U|-|B|}F(B).
\]
Following formulations of quantum cluster expansions~\cite{Park,MH,NF}, we apply this identity with $F(U)=\frac{\exp(\sum_{i\in U}z_iS_i)}{\prod_{i\in U}\cosh z_i}$, thereby decomposing the unnormalized Gibbs state as
\[
\frac{e^{-\beta H}}{\prod_{i\in V}\cosh(z_i)}=\sum_{U\subseteq V}D_U, \quad \text{where } D_U=\sum_{B\subseteq U}(-1)^{|U|-|B|} \,\frac{\exp(\sum_{i\in B}z_iS_i)}{\prod_{i\in B}\cosh z_i}.
\]
Division by $\cosh(z_i)$ factors simplifies later calculations. 
Let $G[U]$ be the term-overlap graph of the Hamiltonian restricted to vertices $U\subseteq V$.
Different connected components of the term-overlap graph $G[U]$ act on disjoint sets of qubits so $D_U$ factorizes as
\[
D_U=\prod_{\gamma\in\comp(G[U])}D_\gamma.
\]
Here $\comp$ denotes the set of connected components of $G[U]$. Each connected component $\gamma$ is denoted as a ``polymer".
As of now $D_\gamma$ may not be separable so we shift each $D_\gamma$ to $-r_\gamma I + (r_\gamma I + D_\gamma)$ where $r_\gamma>0$ is chosen large enough so that $(r_\gamma I + D_\gamma)$ is separable. Decomposing $D_\gamma = \sum_P d_P P$ in the Pauli basis, it suffices for separability to choose $r_\gamma > \sum_P |d_P|$ (larger than the Pauli $1$-norm of $D_\gamma$).
Inserting these shifts and collecting terms, we obtain the key decomposition 
\begin{equation}
\frac{e^{-\beta H}}{\prod_{i\in V}\cosh(z_i)} = \sum_{\Gamma\ \mathrm{compatible}}
 Q_G\!\left(V\setminus N_G\!\left[\bigcup_{\gamma\in\Gamma}\gamma\right]\right)
 \prod_{\gamma\in\Gamma}(r_\gamma I + D_\gamma),
 \label{eq:proofidea_decomp}
\end{equation}
where $Q_G(S_\Gamma)$ is a polynomial equal to 
\[
    Q_G(S_\Gamma)=\sum_{\substack{\Gamma\ \mathrm{compatible}\\\gamma\subseteq S_\Gamma\ (\gamma\in\Gamma)}} \prod_{\gamma\in\Gamma}(-r_\gamma).
\]
Above, $S_\Gamma:=V\setminus N_G\!\left[\bigcup_{\gamma\in\Gamma}\gamma\right]$ is the set of terms in $V$ that are not adjacent to any terms in the polymers $\gamma \in \Gamma$.
Equation~\eqref{eq:proofidea_decomp} is the final decomposition giving a simple criteria for separability:
\[
 Q_G(S)\ge0\ \text{for every }S\subseteq V
 \quad\Longrightarrow\quad
 \rho_\beta(H)\ \text{is fully separable}.
\]
Thus, separability is governed by positivity of a classical independent set polynomial $Q_G(S)$ for which there are well-studied tools to analyze its positivity~\cite{SokalZeros,ScottSokal,Weitz}.
To proceed with the sketch, let us restrict to the commuting setting, where all terms in $H$ mutually commute. Here, 
\[
 D_U=\left(\prod_{i\in U}\tanh z_i\right)\prod_{i\in U}S_i.
\]
For commuting Hamiltonians, $r_\gamma=t_\Delta^{|\gamma|}$ suffices to guarantee separability of $r_\gamma I + D_\gamma$ and the corresponding polynomial is 
\[
 Q_G(S)=q_{G[S]}(t_\Delta),\qquad
 q_G(t_\Delta)=
 \sum_{U\subseteq V(G)}(-1)^{c(G[U])}t_\Delta^{|U|}.
\]
where $c(G[U])$ counts connected components. 
This is a special case (i.e. the case $Q(G;t_\Delta,-1)$) of the subgraph component polynomial previously studied in~\cite{TAM}.
We show that it is positive on every graph of maximum degree at most $\Delta$, giving separability whenever $\beta J\le z_\Delta$.
For noncommuting Hamiltonians, an additional term $r_\gamma = t_\Delta^{|\gamma|}w_A(\gamma)$ is needed where $w_A(\gamma)$ is a correction to control for anticommutation (see Section~\ref{sec:noncommuting}). 

To prove positivity of $q_G(t_\Delta)$, the idea is to inductively remove vertices from $G$ showing
\[
\frac{q_G(t_\Delta)}{q_{G-v}(t_\Delta)} \ge 1-x_\Delta>0 \quad \text{whenever } \operatorname{deg}(v)\le \Delta - 1,
\]
where $G-v$ is the graph with vertex $v$ removed. 
To do this, we can expand $q_G(t_\Delta)$ to obtain
\[
 q_G(t_\Delta)=q_{G-v}(t_\Delta)
 -\sum_{\substack{\gamma\ni v\\G[\gamma]\ \mathrm{connected}}}
          t_\Delta^{|\gamma|}q_{G-N_G[\gamma]}(t_\Delta).
\]
Using the inductive hypothesis,
\[
 1-\frac{q_G(t_\Delta)}{q_{G-v}(t_\Delta)}
 \le\sum_{\substack{\gamma\ni v\\G[\gamma]\ \mathrm{connected}}}
             u^{|\gamma|},
 \qquad u=\frac{t_\Delta}{(1-x_\Delta)^{\Delta-1}}.
\]
It remains to bound this sum of positive weights.
Inspired by~\cite{Weitz,SokalZeros}, we can bound the above sum with an encoding of connected components as paths which are joined in a spanning tree.
The auxiliary tree has a separate vertex for each path, making the recursive count easier.
Setting $C_p$ as the sum of $u^{|U|}$ over connected sets containing a node $p$ within its subtree, then
\[
 C_p=u\prod_{i\ \mathrm{child\ of}\ p}(1+C_i).
\]
Each child is either omitted or contributes one of its connected sets so we find
\[
 u(1+x_\Delta)^{\Delta-1}\le x_\Delta
 \quad\Longleftrightarrow\quad
 t_\Delta\le x_\Delta\left(\frac{1-x_\Delta}{1+x_\Delta}\right)^{\Delta-1}.
\]
This explains the threshold in~\eqref{eq:firstzdelta} which is the required bound for the inductive hypothesis to apply. 
Once this is established, vertices are inductively removed with $q_G(t_\Delta)/q_{G-v}(t_\Delta)>0$ at each step thereby guaranteeing $q_G(t_\Delta)>0$.
Section~\ref{sec:scalar} gives the complete deletion induction.

\begin{figure}[H]
\centering
\begin{tikzpicture}[
  x=1cm,y=1cm,
  term/.style={circle,fill=black,inner sep=2pt},
  qubit/.style={circle,draw,fill=white,minimum size=8mm,inner sep=1pt},
  every node/.style={font=\small},
  every path/.style={line width=0.6pt}
]
\node[font=\small\bfseries] at (0,0.82) {Term-overlap tree};
\node[term,label=above:{$Q_r$}] (r) at (0,0) {};
\node[term,label=left:{$Q_u$}] (u) at (-1.3,-2.30) {};
\draw (u) circle[radius=0.14];
\node[term,label=right:{$Q_v$}] (v) at (1.3,-2.30) {};
\node[term,label=below:{$Q_{u_1}$}] (u1) at (-1.95,-4.60) {};
\node[term,label=below:{$Q_{u_2}$}] (u2) at (-0.65,-4.60) {};
\node[term] (v1) at (0.65,-4.60) {};
\node[term] (v2) at (1.95,-4.60) {};
\draw[->] (r)--node[left,pos=.49] {$e_0$}(u);
\draw[->] (r)--(v);
\draw[->] (u)--node[left,pos=.50] {$e_1$}(u1);
\draw[->] (u)--node[right,pos=.50] {$e_2$}(u2);
\draw[->] (v)--(v1); \draw[->] (v)--(v2);
\node[align=center] at (0,-5.83)
  {$b=2$, $\Delta=3$\\Two physical qubits per edge.};

\node[font=\small\bfseries] at (6.90,0.82) {The six-qubit term $Q_u$};
\node[term,label=above:{$Q_r$}] (rr) at (6.90,0) {};
\node[qubit] (a0) at (6.10,-1.02) {$A_0$};
\node[qubit] (b0) at (7.70,-1.02) {$B_0$};
\node[term,label=left:{$Q_u$}] (uu) at (6.90,-2.30) {};
\draw (uu) circle[radius=0.14];
\draw (rr)--node[left,pos=.45] {$X$}(a0);
\draw (rr)--node[right,pos=.45] {$X$}(b0);
\draw (a0)--node[left,pos=.50] {$Z$}(uu);
\draw (b0)--node[right,pos=.50] {$Z$}(uu);
\draw[dashed] (rr)--(8.10,-0.05);
\node[anchor=west,font=\scriptsize,align=left] at (8.14,-0.05)
  {to the other\\edge};
\node[qubit] (a1) at (4.50,-3.55) {$A_1$};
\node[qubit] (b1) at (5.95,-3.55) {$B_1$};
\node[qubit] (a2) at (7.85,-3.55) {$A_2$};
\node[qubit] (b2) at (9.30,-3.55) {$B_2$};
\draw (uu)--node[above left,pos=.62] {$X$}(a1);
\draw (uu)--node[left,pos=.62] {$X$}(b1);
\draw (uu)--node[right,pos=.62] {$X$}(a2);
\draw (uu)--node[above right,pos=.62] {$X$}(b2);
\node[term,label=below:{$Q_{u_1}$}] (uu1) at (5.225,-4.60) {};
\node[term,label=below:{$Q_{u_2}$}] (uu2) at (8.575,-4.60) {};
\draw (a1)--node[left,pos=.50] {$Z$}(uu1);
\draw (b1)--node[right,pos=.50] {$Z$}(uu1);
\draw (a2)--node[left,pos=.50] {$Z$}(uu2);
\draw (b2)--node[right,pos=.50] {$Z$}(uu2);
\node at (6.90,-5.80)
  {$Q_u=Z_{A_0}Z_{B_0}X_{A_1}X_{B_1}X_{A_2}X_{B_2}$};
\end{tikzpicture}
\caption{A commuting tree Hamiltonian, illustrated for $b=2$. 
Solid vertices represent Hamiltonian terms and there are two qubits, an $A$ and $B$ qubit, per edge (see open circles on the right hand side). The right panel expands the three edges incident to $u$. Incoming and outgoing edges apply $ZZ$ and $XX$ respectively on a vertex. 
The term $Q_u$ therefore has a $ZZ$ applied to the qubit pair on its incoming edge $e_0$, and $XX$ applied to each qubit pair on its outgoing edges $e_1,e_2$.}
\label{fig:proofidea-tree}
\end{figure}

\paragraph{A tree Hamiltonian with entanglement above the threshold}
A Hamiltonian, whose term-overlap graph is a $b$-ary tree as depicted in Figure~\ref{fig:proofidea-tree}, verifies entanglement above the threshold.
Here, $b=\Delta-1$ and there are two qubits $A_e,B_e$ on each edge $e$. The Hamiltonian $H_T$ is
\[
 H_T=-\sum_{v\in V(T)}Q_v,
 \qquad
 Q_v=\prod_{e\ \mathrm{outgoing\ from}\ v}X_{A_e}X_{B_e}
     \prod_{e\ \mathrm{incoming\ to}\ v}Z_{A_e}Z_{B_e}.
\]
All terms commute because Paulis disagree on either zero or two qubits.
The terms are also independent stabilizer generators~\cite{Gottesman}, so their Gibbs state is $\rho_\beta(H_T)=2^{-N}\prod_v(I+sQ_v)$ where $s=\tanh\beta$ and $N=2|E(T)|$.

Using the positive partial transpose test~\cite{Peres,HORODECKI19961} to test for entanglement as in~\cite{Kay}, we transpose all the $A_e$ qubits and determine whether the Gibbs state has a negative eigenvalue thereby confirming it has entanglement. 
Eigenbases of $H_T$ are indexed by signs for each term $Q_v$. 
On a joint eigenvector with every $Q_v=-1$, the partially transposed state has eigenvalue proportional to
\[
 \sum_{U\subseteq V(T)}
 (-1)^{|U|+|E(T[U])|}s^{|U|}
 =q_T(s).
\]
The extra sign $(-1)^{|E(T[U])|}$ comes from the fact that transposes of Pauli $Y^T=-Y$. 
A Pauli $Y$ is on every edge with both endpoints in $U$.
The polynomial above can be recursively computed at height $h$ by splitting it as $q_h=A_h+B_h$, according to whether the root is absent ($A_h$) or present ($B_h$) in $U$.
Joining $b$ child trees gives
\[
 A_{h+1}=(A_h+B_h)^b,
 \qquad B_{h+1}=-s(A_h-B_h)^b.
\]
The ratio $x_h=-B_h/A_h$ thereby lends the recursion
\[
 x_0=s,
 \qquad
 x_{h+1}=s\left(\frac{1+x_h}{1-x_h}\right)^b.
\]
Positive evaluations of $q_h$ imply that $1-x_h$ is also positive.
For $s>t_\Delta$, one finds that $x_h$ eventually exceeds $1$ as this recursion has no fixed point in $(0,1)$ and the iterates continually increase. Thus every $\beta>z_\Delta$ admits a finite tree Hamiltonian with an entangled Gibbs state.
Section~\ref{sec:tree} gives the details.

\paragraph{Sampling the separable state}
Our starting point for sampling is the decomposition $\rho_\beta(H)=\sum_\Gamma\pi(\Gamma)\rho_\Gamma$.
The sampler first samples $\Gamma$ according to $\pi$ and then a product state whose average is $\rho_\Gamma$.
Taking traces in~\eqref{eq:proofidea_decomp} gives both $\pi$ and $\rho_\Gamma$ as
\[
 \begin{aligned}
 \pi(\Gamma)&\propto Q_G(S_\Gamma)
       \prod_{\gamma\in\Gamma}\bigl(r_\gamma+2^{-n}\operatorname{Tr}(D_\gamma)\bigr),\\
 \rho_\Gamma&\propto
       \prod_{\gamma\in\Gamma}(r_\gamma I+D_\gamma).
 \end{aligned}
\]
The harder step is sampling a set of polymers $\Gamma$. We choose polymers one at a time, expressing each conditional probability as a ratio of polymer partition functions. Our positivity bound from earlier and the Scott--Sokal criterion for the independence polynomial~\cite{ScottSokal} gives a zero-free region for these partition functions. We can thus estimate the probabilities using Barvinok's method~\cite{BarvinokBook,HMS} and its efficient implementations for bounded degree models~\cite{PatelRegts,HPR,YYZ} (see also~\cite{PZC,MM,Chen2025Convergence}).
Once $\Gamma$ is chosen, sampling from $\rho_\Gamma$ is simpler. Each factor $r_\gamma I+D_\gamma$ is a combination of terms which are either $I$ or $I\pm P$ for a pauli $P$. In both of these cases, one can sample from the Pauli product states in the support of those projectors. The details are in Section~\ref{sec:sampling}.

\paragraph{Related work} 
Following \cite{MH,NF}, we use inclusion--exclusion to decompose the Gibbs operator into components on connected sets of terms as described in the sketch above. 
For quantum Hamiltonians, the inclusion--exclusion construction originates in Park's work~\cite{Park} and is also studied in Nguyen's thesis~\cite[Chapter~6]{NguyenThesis} and the cluster expansion formulations of Nguyen--Fern\'andez~\cite{NF} and Mann and Minko~\cite{MM}.
Our sampling algorithm leverages zero-freeness to implement Barvinok's algorithm~\cite{BarvinokBook}.
Classical approximation based on zero-freeness for quantum partition functions was developed by Harrow, Mehraban, and Soleimanifar~\cite{HMS} with related polymer implementations in~\cite{MH,MM,PZC,MannWaite}.

The algebraic recursions and identities that prove positivity of the independent set polynomial are similar to those in the classical setting and underlying proofs of Shearer's condition~\cite{Shearer,ScottSokal}. 
Tree recursion also arises in the analysis of thresholds in the classical Ising model~\cite{PetersRegtsIsing,LSSFisher} and algorithms for approximate counting~\cite{Weitz}. 
Our use of these identities is to produce a separable decomposition. The tree comparison follows the method in Sokal~\cite[Proposition~4.2]{SokalZeros} with a choice of spanning tree that also accounts for anticommutation among the components.
In the commuting case, the polynomials in this analysis are a special case of the subgraph component polynomial of Tittmann, Averbouch, and Makowsky~\cite{TAM}. 

Kay’s analysis of thermal graph states~\cite{Kay} on trees uses the partial transpose polynomial that is the basis for our sharpness argument on tree Hamiltonians. Our construction realizes this polynomial with term-overlap graph that is the tree $T$ itself rather than joining terms up to distance two.
The constant $1/(2e)$ in Corollary~\ref{thm:weighted} also occurs in the zero-free analysis of~\cite[Appendix~B]{ZK} when applying the Koteck\'y--Preiss condition~\cite{KP}.

Other related death of entanglement settings have been studied.
Negari, Lessa, and Sahu~\cite{Symmetry} study separability in fixed symmetry sectors including in fermionic systems. Our Gibbs state is on the full Hilbert space in contrast.
\cite{PZC} also study a related ``death of magic'' transition, where at sufficiently high temperature the Gibbs state becomes a mixture of stabilizer states.

\section{Notation and setting}
\label{sec:notation}

Throughout, we consider Pauli Hamiltonians of the form $H=\sum_{i\in V}J_iP_i$ where $J_i \in \mathbb{R}$ such that $|J_i|\le J$, and $P_i$ is an $n$-qubit Pauli. The number of terms is $m=|V|$.
We assume $J_i \neq 0$ for all $i \in V$ as we can always discard
zero coefficients. If $m=0$, the Gibbs state is trivially equal to $I/2^n$.
To simplify the analysis we will at times rewrite Hamiltonians by removing signs from the coefficients and passing them into the Paulis:
\begin{equation}
 -\beta H=\sum_{i\in V}z_iS_i,\qquad
 S_i=-\operatorname{sgn}(J_i)P_i,\qquad z_i=\beta|J_i|\le z.
 \label{eq:reparameterization}
\end{equation}
Above, $z_i > 0$ and pairs $S_i, S_j$ have the same anti-commutation relations as those of $P_i, P_j$. 

The term-overlap graph $G=(V,E)$ has one node for each Pauli term in $H$ and edge set
\[
 E=\bigl\{\{i,j\}:i\ne j,\
       \supp(P_i)\cap\supp(P_j)\ne\varnothing\bigr\}.
\]
We use a degree bound $\Delta\ge2$ satisfying
$\deg_G(i)=|N_G[i]|-1\le\Delta$ for every $i$.
The weighted parameter
$\Lambda=\max_i\sum_{j\in N_G[i]}|J_j|$ controls the strength of terms in the neighborhood of any term.

A related graph is the anticommutation graph.
The anticommutation graph $A=(V,E_A)$ has edge $\{i,j\}\in E_A$ exactly when $P_iP_j=-P_jP_i$ so $E_A\subseteq E$.
For $U\subseteq V$, $G[U]$ denotes the induced graph, $G-U=G[V\setminus U]$,
and $\comp(G[U])$ is its set of connected components. 

An $n$-qubit operator $M \in \mathbb{C}^{2^n \times 2^n}$ is separable if it is a nonnegative sum of product-state projectors, i.e. of the form $M=\sum_{i} q_i \rho_i^{(1)}\otimes\cdots\otimes\rho_i^{(n)}$ where $q_i\ge0$ and $\rho_i^{(1)}, \dots, \rho_i^{(n)}$ are rank $1$ projectors. Dividing any nonzero such operator by its trace gives a fully separable state.
Any operator can be decomposed as a weighted sum of Paulis which gives the so-called Pauli $1$-norm:
\begin{equation}
 M=\sum_P m_PP,\qquad \normp M=\sum_P|m_P|.
 \label{eq:paulinorm}
\end{equation}
Coefficients may be complex unless $M$ is Hermitian. Since a product of Pauli strings is a phase times another Pauli string, the Pauli $1$-norm has the submultiplicativity property
\begin{equation}
 \normp{MN}\le\normp M\normp N.
 \label{eq:paulisubmultiplicativity}
\end{equation}
We denote the normalized trace by $\tau(M)=2^{-n}\Tr M$. 

\paragraph{Threshold parameters.}
Various constants will be used in determining thresholds. We list them here.
At times, we will abbreviate the constants in~\eqref{eq:tDelta} by removing their $\Delta$ subscript:
\begin{equation}
 b=\Delta-1,\quad x=x_\Delta,\quad t=t_\Delta,\quad z=z_\Delta.
 \label{eq:parameters}
\end{equation}
Here $x$ is the critical tree fixed point, $t$ the corresponding weight, and $z=\arctanh t$ the key temperature threshold. 
Note that $2bx=1-x^2$ and  $0<x\le\sqrt2-1<\tfrac12$.
Two further parameters will be introduced later to track cancellations in anticommuting components:
\begin{equation}
 \alpha=1+\frac x2,\qquad \eta=\frac13+\frac x2.
 \label{eq:alphaeta}
\end{equation}
All these quantities depend only on $\Delta$ and not on the temperature.

\section{Proof of separability}
\label{sec:proof}
The proof of separability is given in four steps.
Section~\ref{sec:positive} shows the main decomposition of the Gibbs states into a weighted sum of separable components. The weights are given by the evaluation of a polynomial over connected components of terms in the Hamiltonian. The rest of the proof aims to show these weights are non-negative.
To handle noncommutation, we show that cancellations from noncommuting sets of Paulis helps in bounding contribution from those terms in Section~\ref{sec:noncommuting}. 
Section~\ref{sec:counting} and Section~\ref{sec:scalar} then prove the needed positivity by relating the polynomial on the overlap graph of the Hamiltonian to one defined on trees and showing this tree polynomial is positive.

Throughout, we will use the parameters listed in \eqref{eq:parameters}--\eqref{eq:alphaeta} and the reparameterized form of the Hamiltonian  $-\beta H=\sum_i z_iS_i$ from~\eqref{eq:reparameterization}.
We assume a nontrivial setting with $\beta>0$ and at least one term in the Hamiltonian ($m\ge1$).

\subsection{A decomposition into separable factors}
\label{sec:positive}

We first record a Lemma that when given Hermitian $A$, applies a shift $rI+A$ with $r$ large enough so that the shifted matrix is separable (see also~\cite[Lemma~8]{PZC}).
\begin{lemma}[Separable correction magnitude]
\label{lem:cone}
If $A$ is Hermitian and $r\ge\normp A$, then $rI+A$ is a separable operator composed of product Pauli eigenstate projectors.
\end{lemma}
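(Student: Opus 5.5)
The plan is to expand $A$ in the Pauli basis and distribute the shift $rI$ across the Pauli terms. Since $A$ is Hermitian and the Pauli strings form a Hermitian orthogonal basis, we can write $A=\sum_P a_P P$ with real coefficients $a_P$. Here the sum runs over all $n$-qubit Pauli strings, including the identity, and $\normp A=\sum_P|a_P|$.

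We then rewrite
\[
 rI+A=\Bigl(r-\normp A\Bigr)I+\sum_P |a_P|\bigl(I+\operatorname{sgn}(a_P)P\bigr),
\]
where terms with $a_P=0$ are dropped. The first coefficient is nonnegative by the hypothesis $r\ge\normp A$. It then suffices to show two things: that $I$ is separable with product Pauli eigenstate projectors, and that $I\pm P$ is as well for every Pauli string $P=\sigma_1\otimes\cdots\otimes\sigma_n$. Nonnegative combinations of such operators keep the same form, which gives the lemma.

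\textbf{The identity.} We have $I=\bigotimes_k(\ket0\bra0+\ket1\bra1)$. Expanding the tensor product gives a sum of $2^n$ product $Z$-eigenstate projectors, each with coefficient one.

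\textbf{The operators $I\pm P$.} This is the key step. Each single-qubit factor $\sigma_k\in\{I,X,Y,Z\}$ has spectral decomposition $\sigma_k=\Pi_k^+-\Pi_k^-$, with $\Pi_k^\pm$ rank-one Pauli eigenstate projectors. For $\sigma_k=I$ we take the $Z$ eigenprojectors with $\Pi^-$ replaced by $-\Pi^-$, i.e. we use the decomposition $I=\Pi^++\Pi^-$ instead. Expanding the tensor product, $P$ is diagonal in the product basis $\bigotimes_k\Pi_k^{s_k}$. On each such product projector, $P$ acts as the sign $\prod_{k:\sigma_k\ne I}s_k$.

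It follows that
\[
 I\pm P=\sum_{s\in\{\pm\}^n}\Bigl(1\pm\prod_{k:\sigma_k\neq I}s_k\Bigr)\bigotimes_k\Pi_k^{s_k}.
\]
Every coefficient is $0$ or $2$, so it is nonnegative. Hence $I\pm P$ equals twice the sum of the product Pauli eigenstate projectors lying in the $\pm1$ eigenspace of $P$.

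This is the only point requiring care, but it is routine. Combining the decompositions of $I$ and $I\pm P$ with the nonnegative weights above finishes the proof.
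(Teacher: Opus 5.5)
Your proof is correct and follows the paper's argument: expand $A$ in the Pauli basis with real coefficients, write $rI+A=(r-\normp A)I+\sum_P |a_P|\bigl(I+\operatorname{sgn}(a_P)P\bigr)$, and note that each $I\pm P$ is a nonnegative combination of product Pauli eigenstate projectors. You spell out the diagonalization of $I\pm P$ in the product eigenbasis, which the paper states in one line. Your formula also covers the case $P=I$ through the empty product, giving $2I$ or $0$ as in the paper.
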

\begin{proof}
Write $A=\sum_Pa_PP$. Its coefficients are real, and
\[
 rI+A=\left(r-\sum_P|a_P|\right)I
 +\sum_{P:a_P\ne0}|a_P|\bigl(I+\operatorname{sgn}(a_P)P\bigr).
\]
Each $I\pm P$ is positive and diagonal in a product of single-qubit Pauli eigenbases. For $P=I$, it is $2I$ or zero. Every summand is therefore separable.
\end{proof}
Now, we decompose the Gibbs state into separable components.
For each set $U$ of terms, we first divide $\cosh$ factors and define the terms in the inclusion-exclusion argument~\cite{MM,NF}:
\begin{equation}
 F_U=\frac{\exp(\sum_{i\in U}z_iS_i)}{\prod_{i\in U}\cosh z_i},
 \qquad D_U=\sum_{B\subseteq U}(-1)^{|U|-|B|}F_B,
 \qquad F_\varnothing=D_\varnothing=I.
 \label{eq:FD}
\end{equation}
The Gibbs state is then proportional to a sum over $D_U$:
\begin{equation}
 \frac{e^{-\beta H}}{\prod_{i\in V}\cosh(z_i)} = F_V=\sum_{U\subseteq V}D_U.
 \label{eq:inversion}
\end{equation}
To see this, note that
\begin{equation}
    \sum_{U\subseteq V}D_U = \sum_{U\subseteq V}\sum_{B\subseteq U}(-1)^{|U|-|B|}F_B,
\end{equation}
and the coefficient of $F_B$ in the above is $\sum_{U:B\subseteq U\subseteq V}(-1)^{|U|-|B|}$, which is zero unless $B=V$.

Let $G[U]$ be the term-overlap graph of $H$ induced on $U$. Different connected components of $G[U]$ have terms with disjoint physical supports. Their exponentials, normalizing factors, and subset sums in~\eqref{eq:FD} consequently factor. Following the terminology in the statistical physics literature~\cite{NF,MM}, a nonempty connected set $\gamma$ of terms is denoted a \emph{polymer}. Two polymers are \emph{compatible} if their physical supports are disjoint, equivalently if they neither intersect nor are adjacent in $G$. 
Compatible families are precisely the connected components of $G[U]$ as $U$ ranges over subsets of $V$. I.e., the family corresponds to $U=\bigcup_{\gamma\in\Gamma}\gamma$. We have
\begin{equation}
 D_U=\prod_{\gamma\in\comp(G[U])}D_\gamma,
 \qquad
 F_V=\sum_{\Gamma\ \mathrm{compatible}}\prod_{\gamma\in\Gamma}D_\gamma.
 \label{eq:polymer}
\end{equation}
Here $\comp$ denotes connected components, and the empty product is $I$. This is the connected expansion used in~\cite{NF,MM}.

Choose numbers $r_\gamma\ge\normp{D_\gamma}$ (particular values to be chosen later in~\eqref{eq:activities}), and set
\begin{equation}
 R_\gamma=r_\gamma I+D_\gamma,\qquad
 Q_G(S)=\sum_{\substack{\Gamma\ \mathrm{compatible}\\\gamma\subseteq S\ (\gamma\in\Gamma)}}
 \prod_{\gamma\in\Gamma}(-r_\gamma),\qquad Q_G(\varnothing)=1.
 \label{eq:shifts}
\end{equation}
Lemma~\ref{lem:cone} guarantees that $R_\gamma$ is separable. 
The Gibbs state can be decomposed into separable components $R_\gamma$ and coefficient weights $Q_G(S)$ as we show below.
\begin{proposition}[Gibbs decomposition]
\label{prop:positive}
For the choices above,
\begin{equation}
 \frac{e^{-\beta H}}{\prod_{i\in V}\cosh(z_i)}=
 F_V=\sum_{\Gamma\ \mathrm{compatible}}
 Q_G\!\left(V\setminus N_G\!\left[\bigcup_{\gamma\in\Gamma}\gamma\right]\right)
 \prod_{\gamma\in\Gamma}R_\gamma.
 \label{eq:positive}
\end{equation}
If $Q_G(S)\ge0$ for every $S\subseteq V$, then $\rho_\beta(H)$ is a mixture of product Pauli eigenstates.
\end{proposition}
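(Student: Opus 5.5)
The plan is to start from the polymer expansion~\eqref{eq:polymer}, $F_V=\sum_{\Gamma'}\prod_{\gamma\in\Gamma'}D_\gamma$. Into each factor we substitute $D_\gamma=R_\gamma+(-r_\gamma)I$ and expand. Each compatible family $\Gamma'$ then splits into an ordered pair $(\Gamma,\Gamma'')$ with $\Gamma\sqcup\Gamma''=\Gamma'$. Polymers in $\Gamma$ contribute $R_\gamma$, and polymers in $\Gamma''$ contribute the scalar $-r_\gamma$. This yields
\[
F_V=\sum_{\Gamma\ \mathrm{compatible}}\Bigl(\prod_{\gamma\in\Gamma}R_\gamma\Bigr)\sum_{\Gamma''}\prod_{\gamma\in\Gamma''}(-r_\gamma),
\]
where the inner sum runs over families $\Gamma''$ such that $\Gamma\cup\Gamma''$ is compatible and disjoint from $\Gamma$.

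The key combinatorial step is to characterize this inner range. A polymer $\gamma''$ is compatible with every $\gamma\in\Gamma$ exactly when it neither intersects nor is adjacent to $\bigcup_{\gamma\in\Gamma}\gamma$. That is, $\gamma''\subseteq S:=V\setminus N_G[\bigcup_{\gamma\in\Gamma}\gamma]$. The remaining condition is mutual compatibility within $\Gamma''$. Distinctness from $\Gamma$ is automatic, since $S$ is disjoint from every $\gamma\in\Gamma$. Hence the inner sum is exactly $Q_G(S)$ as defined in~\eqref{eq:shifts}, which proves~\eqref{eq:positive}. The empty family contributes $1$, consistent with $Q_G(\varnothing)=1$. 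One point to check is that the product of $R_\gamma$ over $\Gamma$ is well defined irrespective of order. It is, because compatible polymers have disjoint physical supports, so the $R_\gamma$ act on disjoint qubits and commute.

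For the second claim, assume $Q_G(S)\ge0$ for all $S$. By Lemma~\ref{lem:cone}, each $R_\gamma$ is a nonnegative combination of products of single-qubit Pauli eigenprojectors supported on the qubits of $\gamma$, tensored with the identity elsewhere. The identity on the remaining qubits is itself a sum of product Pauli eigenprojectors, since $I=\ket0\bra0+\ket1\bra1$ on each qubit. Because the supports of the $R_\gamma$ are disjoint, $\prod_{\gamma\in\Gamma}R_\gamma$ is a tensor product of such operators on disjoint qubit sets. It is therefore again a nonnegative combination of product Pauli eigenstate projectors. Multiplying by $Q_G(S)\ge0$ and summing over $\Gamma$ preserves this property. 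Dividing by the positive factor $\prod_i\cosh z_i$ and by $\Tr e^{-\beta H}>0$ then gives $\rho_\beta(H)$ as a convex combination of product Pauli eigenstates.

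The only delicate step is the bookkeeping in the reorganization: verifying that the map $\Gamma'\mapsto(\Gamma,\Gamma'')$ is a bijection onto pairs with $\Gamma$ compatible and $\Gamma''$ a compatible family inside $S$. I expect this to be routine once compatibility is phrased as ``disjoint and nonadjacent in $G$.''
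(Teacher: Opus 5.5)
Your proposal is correct and follows the paper's proof: you substitute $D_\gamma=R_\gamma-r_\gamma I$ into the polymer expansion, expand, identify the coefficient of $\prod_{\gamma\in\Gamma}R_\gamma$ as $Q_G(V\setminus N_G[\bigcup_{\gamma\in\Gamma}\gamma])$, and use disjoint supports together with Lemma~\ref{lem:cone} for separability, while also spelling out the bijection $\Gamma'\mapsto(\Gamma,\Gamma'')$ and the compatibility characterization that the paper leaves implicit. One trivial slip is that, since $F_V=e^{-\beta H}/\prod_i\cosh z_i$, you should multiply by $\prod_i\cosh z_i$ rather than divide by it before normalizing, or more simply write $\rho_\beta(H)=F_V/\Tr F_V$.
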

\begin{proof}
Substitute $D_\gamma=R_\gamma-r_\gamma I$ in~\eqref{eq:polymer}, we get
\[
F_V=\sum_{\Gamma\ \mathrm{compatible}}\prod_{\gamma\in\Gamma}(R_\gamma-r_\gamma I).
\]
Expand the product above, a polymer will contribute either as $R_\gamma$ or $-r_\gamma I$. Summing $-r_\gamma$ contributions gives the coefficient in~\eqref{eq:positive}. Within each product $\prod_{\gamma\in\Gamma}R_\gamma$ in~\eqref{eq:positive}, the $R_\gamma$ are separable and have disjoint physical supports, so their product is separable. $\rho_\beta(H)$ is proportional to $F_V$ so $\rho_\beta(H)$ is separable if $Q_G(S) \ge 0$ for every $S\subseteq V$.
\end{proof}
The quantum separability problem has now been reduced to determining whether the coefficients given by the polynomial $Q_G(S)$ for every $S\subseteq V$ are non-negative. The rest of the proof chooses the shifts carefully enough that every one of these coefficients is positive.

\begin{remark}[The commuting case] The proofs that follow greatly simplify in the commuting case.
If the terms in $H$ mutually commute, then
\[
 F_U=\prod_{i\in U}(I+\tanh z_i\,S_i),\qquad
 D_U=\left(\prod_{i\in U}\tanh z_i\right)\prod_{i\in U}S_i.
\]
Thus $r_\gamma=t^{|\gamma|}$ suffices to guarantee $r_\gamma\ge\normp{D_\gamma}$. The corresponding polynomial is
\begin{equation}
 q_G(s)=\sum_{U\subseteq V(G)}(-1)^{c(G[U])}s^{|U|},
 \label{eq:q}
\end{equation}
where $c$ counts connected components, with $c(G[\varnothing])=0$. This is the specialization $Q(G;s,-1)$ of the subgraph component polynomial~\cite{TAM}. For example, a three-vertex path has $q_{P_3}(s)=1-3s-s^2-s^3$. We will prove its positivity at $s=t_\Delta$ as a special case of the same argument that handles noncommuting terms.
\end{remark}

\subsection{Cancellations from anticommuting terms}
\label{sec:noncommuting}

In the commuting setting, one could set $r_\gamma=t^{|\gamma|}$ and proceed to show the polymoial $Q_G(S)\ge 0$ in Proposition~\ref{prop:positive}. 
For noncommuting terms, this nice factorization and choice of $r_\gamma$ is unavailable.
To handle the anti-commuting setting, we will set $r_\gamma=t^{|\gamma|}w_A(\gamma)$ where $w_A(\gamma)$ is a sufficiently bounded correction that is detailed here.

Recall that $A\subseteq G$ is the graph on term labels joining anticommuting pairs. Multiplying a Pauli by a sign does not change this graph. 
For a list of Hermitian Paulis $S_1, \dots, S_M$, allowing repetitions, define its symmetrized product by
\[
 \Sym(S_1,\ldots,S_M)=\frac1{M!}\sum_{\pi\in\mathfrak S_M}S_{\pi(1)}\cdots S_{\pi(M)}.
\]
\begin{lemma}[Cancellation in a symmetrized product]
\label{lem:sym}
If the list of Pauli strings $S_1,\ldots,S_M$ contains at least one anticommuting pair, then
\begin{equation}
 \normp{\Sym(S_1,\ldots,S_M)}\le\frac13.
 \label{eq:symgap}
\end{equation}
\end{lemma}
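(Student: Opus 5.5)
The plan is to reduce the statement to a purely combinatorial sign average over permutations, and then bound that average. Reordering Pauli strings only introduces signs: for every $\pi\in\mathfrak S_M$ we have $S_{\pi(1)}\cdots S_{\pi(M)}=\varepsilon(\pi)\,S_1\cdots S_M$ with $\varepsilon(\pi)\in\{\pm1\}$. Concretely, $\varepsilon(\pi)=(-1)^{\mathrm{inv}_A(\pi)}$, where $\mathrm{inv}_A(\pi)$ counts the anticommuting pairs $\{i,j\}$ (edges of the anticommutation graph restricted to the list, with repeated entries never anticommuting) whose relative order $\pi$ reverses. Since $S_1\cdots S_M$ is a phase times a single Pauli string,
\[
 \normp{\Sym(S_1,\ldots,S_M)}=\Bigl|\frac1{M!}\sum_{\pi}\varepsilon(\pi)\Bigr| .
\]
So it suffices to show $|\mathbb E_\pi\varepsilon(\pi)|\le\frac13$ for a uniformly random permutation whenever the anticommutation graph $K$ on $\{1,\ldots,M\}$ has at least one edge. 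I would first record that this is tight: for a path $a-b-c$ with $M=3$, the average is $-\frac13$. Any argument that loses constants is therefore useless, and we need an exact computation or an exact worst-case reduction.

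To make the average tractable, I would realize the uniform permutation by i.i.d.\ uniform times $U_1,\ldots,U_M\in[0,1]$ and order the elements by their times. Then, up to a global sign fixed by the identity ordering,
\[
 \varepsilon=\prod_{\{i,j\}\in E(K),\,i<j}\operatorname{sgn}(U_j-U_i).
\]
I would then argue by induction on the number of vertices of $K$. Vertices isolated in $K$ do not affect $\varepsilon$, so they can be deleted, and we may assume every vertex has $K$-degree at least one. Pick a vertex $v$ and condition on all other times. Then $\mathbb E_{U_v}\prod_{j\sim v}\operatorname{sgn}(U_v-U_j)$ equals $\sum_{\ell=0}^{k}(-1)^{\ell}g_\ell$. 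Here $k=\deg_K(v)$, and $g_\ell$ are the successive spacings between the order statistics of the neighbours' times, with $g_0$ and $g_k$ the end gaps. This makes the $v$-contribution an explicit alternating polynomial in the neighbours' times.

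For a single edge the result is $1-2|U_a-U_b|$, whose mean is $\frac13$. For a star the alternating sums of uniform spacings have mean $0$ or $\frac{1}{k+1}$. I then want to show that plugging this linear-in-the-times factor back into the remaining average never makes the magnitude exceed the single-edge/path value $\frac13$.

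The main obstacle is this last step: controlling the interaction between the factor coming from $v$ and the signs from the remaining edges. The crude bound via the operator norm of the kernel $\operatorname{sgn}(x-y)$ only gives $2/\pi$, which is too weak. Conditioning too much also fails: fixing the neighbours' times can give a conditional average of $\frac12$. So the averaging over the neighbours' positions must be kept.

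What I would try first is to choose $v$ to be a leaf (or a minimum-degree vertex) of $K$, attached to a neighbour $u$. Integrating out $U_v$ replaces the edge $\{u,v\}$ by the factor $1-2U_u$ or $2U_u-1$ (up to sign), according to which side $v$ is counted on. The claim then reduces to bounding $\bigl|\mathbb E[(1-2U_u)\,\varepsilon_{K-v}]\bigr|$ by $\frac13$. For that I would use $|1-2U_u|\le 1$ together with a symmetry $U\mapsto 1-U$ of all times, which flips $\operatorname{sgn}$ on every edge. This symmetry forces the reduced averages to vanish or to pick up the mean of $1-2U_u$ against a $\pm1$ function of order statistics. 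The remaining bound can then be checked to be at most $\int_0^1|1-2y|\,y\,dy$-type quantities equal to $\frac13$.

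If no leaf exists, so that every vertex lies on a cycle, I would instead pair each $\pi$ with the permutation obtained by swapping the positions of an anticommuting pair $a,b$. This cancels all permutations in which no element between $a$ and $b$ is adjacent in $K$ to exactly one of them. I would then bound the surviving fraction directly by $\frac13$.
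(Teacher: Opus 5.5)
There is a genuine gap. Your reduction is correct: every reordering of the list equals $\pm S_1\cdots S_M$, so $\normp{\Sym(S_1,\ldots,S_M)}=|\mathbb E_\pi\varepsilon(\pi)|$. Your star computation is also correct: the alternating sum of $k+1$ uniform spacings has mean $0$ or $1/(k+1)$, which is exactly the paper's star case.

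The problem is the step you flag as the main obstacle. It is never carried out, and the tools you offer for it do not suffice.

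\emph{The leaf reduction does not close.} After integrating out a leaf $v$ with neighbour $u$, you must bound $\mathbb E[(1-2U_u)\,\varepsilon_{K-v}]$. This is a \emph{weighted} sign average. Your induction hypothesis is about unweighted averages, so it does not cover this quantity. The tools you list do not fill the hole:
\begin{itemize}
\item The bound $|1-2U_u|\le1$ only gives $\mathbb E|1-2U_u|=1/2$.
\item The reflection $U\mapsto1-U$ multiplies the integrand by $(-1)^{|E(K)|}$. It therefore only disposes of the case $|E(K)|$ odd and says nothing when $|E(K)|$ is even.
\item The integral $\int_0^1|1-2y|\,y\,dy$ equals $1/4$, not $1/3$, and you give no argument that it dominates anything.
\end{itemize}

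\emph{The leafless plan fails concretely.} Take the $4$-cycle with edges $ab,bc,cd,da$ and swap the pair $(a,b)$. Both $c$ and $d$ are adjacent to exactly one of $a,b$. So the permutations your involution leaves uncancelled are those with $c$ or $d$ between $a$ and $b$, a fraction $1/2>1/3$. The true value there is exactly $1/3$, so further exact cancellation among the survivors is needed, and you do not supply it.

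As a side correction: for a single edge the average is $0$. The function $1-2|U_a-U_b|$ with mean $1/3$ arises for a vertex with two neighbours $a,b$.

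\emph{The missing idea} is to decompose by which element comes \emph{first}, instead of integrating out the time of a fixed vertex. Given its first entry, the rest of a uniform permutation is a uniform permutation of the remaining entries. Hence
\[
 \Sym(S_1,\ldots,S_M)=\frac1M\sum_{i=1}^M S_i\,\Sym(S_1,\ldots,S_{i-1},S_{i+1},\ldots,S_M),
\]
which is an average of Paulis times symmetrized products of shorter lists, with no weights. The argument then splits into cases:
\begin{itemize}
\item \emph{Connected, not a star.} Deleting any vertex leaves an edge. Induction on $M$ together with $\normp{S_iX}\le\normp{X}$ gives the bound $1/3$ at once.
\item \emph{Disconnected.} $\Sym$ factorizes over the components, since they commute and a uniform permutation induces independent uniform orders on them. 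Induction applies to a component containing an edge, and the other factors have norm at most $1$.
\item \emph{Star.} Your spacing computation settles it.
\item \emph{Base case.} $M=2$ gives $0$.
\end{itemize}
This is how the paper proves the lemma.
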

\begin{proof}
Induct on the number $M$ of occurrences. For $M=2$, the symmetrized product of an anticommuting pair $\Sym(S_1,S_2)=0$. Form the anticommutation graph on $S_1,\ldots,S_M$, treating repeated strings as separate vertices. If its components are $C_1,\ldots,C_r$ with $r>1$, then terms in different components commute, and averaging gives
\[
 \Sym(S_1,\ldots,S_M)=\prod_{j=1}^r\Sym\bigl((S_i)_{i\in C_j}\bigr).
\]
At least one component contains an edge. Apply induction to that component and use the trivial bound $\normp{\Sym}\le1$ for each remaining factor.

If the graph is a star, its leaves commute and its center anticommutes with every leaf. Averaging over the center's position gives a fixed Pauli product times
$M^{-1}\sum_{k=0}^{M-1}(-1)^k$. This is zero for even $M$ and has magnitude at most $1/3$ for odd $M\ge3$.

In a connected graph that is not a star, deleting any vertex leaves an edge. Group permutations by their first occurrence:
\[
 \Sym(S_1,\ldots,S_M)=\frac1M\sum_{i=1}^M
 S_i\,\Sym(S_1,\ldots, S_{i-1}, S_{i+1},\ldots,S_M).
\]
Each remaining list contains an anticommuting pair, so induction and submultiplicativity from \eqref{eq:paulisubmultiplicativity} bound every summand by $1/3$.
\end{proof}

Recall $\alpha=1+x/2$ and $\eta=1/3+x/2$ from~\eqref{eq:alphaeta}.
For a vertex set $U$, define
\begin{equation}
 w_A(U)=\prod_{\substack{C\in\comp(A[U])\\ |C|\ge2}}\eta\,\alpha^{|C|-1}.
 \label{eq:w}
\end{equation}
When the set $\{C\in\comp(A[U]):|C|\ge2\}$ is empty, $w_A(U)=1$. A nontrivial component contributes one factor $\eta$, followed by one factor $\alpha$ for each additional vertex.

\begin{lemma}[Bound on $\normp{D_U}$]
\label{lem:components}
For every $U\subseteq V$ and $\beta J\le z_\Delta$,
\begin{equation}
 \normp{D_U}\le t^{|U|}w_A(U).
 \label{eq:componentbound}
\end{equation}
\end{lemma}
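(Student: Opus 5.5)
The plan is to expand every exponential in \eqref{eq:FD} as a power series in the $z_i$, to perform the inclusion--exclusion over $B$ coefficient by coefficient, and then to apply the triangle inequality together with Lemma~\ref{lem:sym}. Write $\exp(\sum_{i\in B}z_iS_i)=\sum_{k}\bigl(\prod_i z_i^{k_i}/k_i!\bigr)\Sym(k)$. Here $k=(k_i)_{i\in B}$ ranges over multiplicity vectors, and $\Sym(k)$ is the symmetrized product of the list in which $S_i$ appears $k_i$ times. This holds because the coefficient of a given multiset in $(\sum z_iS_i)^M/M!$ is exactly the average over orderings. Substituting into $D_U$ and exchanging sums, $D_U=\sum_{k:\supp k\subseteq U}c_k\Sym(k)$. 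The coefficient is obtained by summing over $B$ with $\supp k\subseteq B\subseteq U$:
\[
 c_k=\prod_{i}\frac{z_i^{k_i}}{k_i!}\prod_{i\in\supp k}\operatorname{sech}z_i\prod_{i\in U\setminus\supp k}\bigl(\operatorname{sech}z_i-1\bigr).
\]

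The first reduction is factorization over anticommutation components. Terms in different components of $A[U]$ commute, so $\Sym(k)$ factorizes over these components, and $c_k$ factorizes as well. Hence $D_U=\prod_{C\in\comp(A[U])}D_C$, where $D_C$ is defined by \eqref{eq:FD} with $U$ replaced by $C$. By \eqref{eq:paulisubmultiplicativity} it suffices to prove $\normp{D_C}\le t^{|C|}$ for singletons and $\normp{D_C}\le t^{|C|}\eta\alpha^{|C|-1}$ for components with $|C|\ge2$. For a singleton, $D_{\{i\}}=\tanh(z_i)S_i$, and $\tanh z_i\le\tanh z=t$ because $z_i\le\beta J\le z_\Delta$.

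For a component with $|C|\ge2$, group the multiplicity vectors by their support $T\subseteq C$ and bound $\normp{\Sym(k)}\le1$ in general. When $T=C$, connectivity of $A[C]$ guarantees an anticommuting pair, so Lemma~\ref{lem:sym} gives the sharper bound $\normp{\Sym(k)}\le 1/3$. Summing the absolute coefficients over $k$ with support exactly $T$ uses $\sum_{k\ge1}z^k/k!=e^z-1$, and yields
\[
 \normp{D_C}\le\frac13\prod_{i\in C}a(z_i)+\sum_{T\subsetneq C}\prod_{i\in T}a(z_i)\prod_{i\in C\setminus T}b(z_i),
\]
where $a(z)=(e^z-1)\operatorname{sech}z$ and $b(z)=1-\operatorname{sech}z$. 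Both $a$ and $b$ are increasing, so one may set every $z_i=z$. Then $a(z)=\tanh z+(\cosh z-1)/\cosh z$ exceeds $t$ only by $O(t^2)$, and $b(z)=O(t^2)$. One should also use a sharper accounting for the terms with $T\neq C$: whenever $T$ itself still contains an anticommuting pair, the factor $1/3$ applies there too, and otherwise the deficit $|C\setminus T|\ge1$ brings in at least one factor $b=O(t^2)$.

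The main obstacle is to make this elementary estimate land exactly on $\eta\alpha^{|C|-1}t^{|C|}$ with $\alpha=1+x/2$ and $\eta=1/3+x/2$, using only $t\le x$, $0<x\le\sqrt2-1$, and $2bx=1-x^2$. My first attempt would be an induction on $|C|$ that adds one vertex at a time along a spanning tree of $A[C]$. Each new vertex $i$ contributes a factor at most $a(z)+b(z)$ to the total. One then checks the scalar inequalities $a(z)\le t(1+t/2)\le t\alpha$ and $a(z)+b(z)\le t\alpha$, using $\cosh z-1\le z^2/2\cdot\cosh z$ and $z\le \arctanh x$ to convert powers of $z$ into powers of $t$. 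The base case of an anticommuting pair requires $\frac13a^2+2ab+b^2\le\eta\alpha t^2$. The $x/2$ in $\eta$ is exactly the slack needed to absorb the non-full-support terms, which are $O(t^3)$, relative to $t^2$.
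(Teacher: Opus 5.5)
Your setup is the paper's. Grouping the Taylor words by support $T$ is the paper's $W_B$ expansion, written with $a=h/c$ and $b=(c-1)/c$. Fixing one anticommuting pair $p,q\in C$ gives exactly \eqref{eq:pairdiscount}:
$\normp{D_C}\le(\tfrac13a_pa_q+a_pb_q+b_pa_q+b_pb_q)\prod_{i\ne p,q}(a_i+b_i)$.
So your two targets, $a+b\le\alpha t$ and $\tfrac13a^2+2ab+b^2\le\eta\alpha t^2$, are the right ones. No spanning-tree induction is needed, since the bound already factors.

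The gap is in how you propose to verify these targets. Both are false if all you use is $t\le x$, equivalently $z\le\arctanh x$. At $z_i=z$ one has $a=t+1-\sqrt{1-t^2}$ and $b=1-\sqrt{1-t^2}$. Hence $a+b=t+t^2+O(t^4)$, while $\alpha t=t+xt/2$, so you need $t\lesssim x/2$. For instance, $t=x=0.4$ gives $a+b\approx0.567>0.48=\alpha t$. The base case likewise reads $\tfrac13t^2+\tfrac43t^3+O(t^4)\le(\tfrac13+\tfrac23x+\tfrac14x^2)t^2$, which again forces $t\lesssim x/2$. Your intermediate claim $a\le t(1+t/2)$ is also false, because $1-\sqrt{1-t^2}>t^2/2$; it is not needed, though.

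The missing ingredient is the paper's opening estimate. Bernoulli's inequality and $2bx=1-x^2$ give $\bigl((1+x)/(1-x)\bigr)^b\ge1+2bx/(1-x)=2+x$. Hence $t\le x/(2+x)$ and $z\le\tfrac12\log(1+x)<x/2$.

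With this in hand, set $A=1+2\tanh(z/2)$, so that $A\le1+z\le\alpha$. Then $a+b=tA\le\alpha t$ and $a=t(A+1)/2$. The base case becomes $A^2-(A+1)^2/6\le\eta\alpha$. This holds because the left side is increasing in $A$ for $A\ge1$, and $\eta\alpha-\alpha^2+(\alpha+1)^2/6=(\alpha-1)^2/6\ge0$. This is exactly how the paper closes the argument, after dividing by $\prod_i\tanh z_i$. With this one estimate added, your route goes through.
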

\begin{proof}
The identity $2bx=1-x^2$ and Bernoulli's inequality give
\[
 \left(\frac{1+x}{1-x}\right)^b
 \ge1+\frac{2bx}{1-x}=2+x.
\]
Thus, from~\eqref{eq:tDelta},
\begin{equation}
 t\le\frac{x}{2+x},\qquad
 z=\arctanh t\le\frac12\log(1+x)<\frac x2.
 \label{eq:zbound}
\end{equation}

To bound a set $U$ containing an anticommuting pair, start from the unnormalized sum
\[
 W_U=\sum_{B\subseteq U}(-1)^{|U|-|B|}\exp\!\left(\sum_{i\in B}z_iS_i\right).
\]
Expanding the Taylor series for each $\exp\!\left(\sum_{i\in B}z_iS_i\right)$ above, an ordered Pauli product in the Taylor expansion, with set of labels $L\subseteq U$, is multiplied by
\[
 \sum_{B:L\subseteq B\subseteq U}(-1)^{|U|-|B|}
 =\begin{cases}1,&L=U,\\0,&L\ne U.\end{cases}
\]
Only products using every label survive. If label $i$ appears $m_i\ge1$ times and $M=\sum_i m_i$, there are $M!/\prod_i m_i!$ distinct orderings. Their average is the symmetrized product of the list with those repetitions. Together, this gives
\begin{equation}
 W_U=\sum_{(m_i)_{i\in U}\ge1}
 \left(\prod_{i\in U}\frac{z_i^{m_i}}{m_i!}\right)
 \Sym\bigl(S_i\text{ repeated }m_i\text{ times},\ i\in U\bigr).
 \label{eq:words}
\end{equation}
If $U$ contains an anticommuting pair, every list contains that anticommuting pair since each label appears at least once. Lemma~\ref{lem:sym} gives
\begin{equation}
 \normp{W_U}\le\frac13\prod_{i\in U}(e^{z_i}-1).
 \label{eq:allorders}
\end{equation}
Without an anticommuting pair in $U$, the same argument gives the bound without $1/3$.
Set $c_i=\cosh z_i$ and $h_i=e^{z_i}-1$. Expanding the definitions in~\eqref{eq:FD} yields
\begin{equation}
 D_U=\frac1{\prod_{i\in U}c_i}
 \sum_{B\subseteq U}W_B\prod_{i\in U\setminus B}(1-c_i).
 \label{eq:normalizeW}
\end{equation}
Fix an anticommuting pair $p,q\in U$. Apply~\eqref{eq:allorders} to every $B$ containing both, and the ordinary bound to all other $B$. Since $c_i-1\ge0$, the resulting subset sum is
\begin{equation}
 \normp{D_U}\le\frac1{\prod_{i\in U}c_i}
 \left[\prod_{i\in U}(h_i+c_i-1)
 -\frac23h_ph_q\prod_{i\in U\setminus\{p,q\}}(h_i+c_i-1)\right].
 \label{eq:pairdiscount}
\end{equation}
For $a_i=1+2\tanh(z_i/2)$, the identities
$(h_i+c_i-1)/\sinh z_i=a_i$ and $h_i/\sinh z_i=(a_i+1)/2$ give
\begin{equation}
 \frac{\normp{D_U}}{\prod_{i\in U}\tanh z_i}
 \le\left(\prod_{i\in U}a_i\right)
 \left[1-\frac{(a_p+1)(a_q+1)}{6a_pa_q}\right].
 \label{eq:localai}
\end{equation}
By~\eqref{eq:zbound}, $1\le a_i\le1+z_i\le\alpha$. The bracket is nonnegative and increases with $a_p,a_q$. Hence the last expression is at most
\begin{equation}
 \alpha^{|U|-1}\left(\frac56\alpha-\frac13-\frac1{6\alpha}\right)
 =\alpha^{|U|-1}\left(\frac13+\alpha-1-\frac{(\alpha-1)^2}{6\alpha}\right)
 \le\eta\alpha^{|U|-1}.
 \label{eq:componentcost}
\end{equation}

Finally, different components of $A[U]$ commute with each other. For every $B\subseteq U$, both its exponential and its $\cosh$ factors split across these components, so the inclusion--exclusion sum does too and
\begin{equation}
 D_U=\prod_{C\in\comp(A[U])}D_C.
 \label{eq:redfactor}
\end{equation}
On a singleton $C=\{i\}$, $D_C=\tanh z_i\,S_i$. On each larger component, use~\eqref{eq:componentcost}. Submultiplicativity and $\tanh z_i\le t$ prove~\eqref{eq:componentbound}.
\end{proof}

The factorization in~\eqref{eq:redfactor} is used only to bound a norm: anticommutation components may have overlapping physical supports. In contrast, the compatible factors in Proposition~\ref{prop:positive} have disjoint supports, which is what permits multiplying separable operators.

Returning to~\eqref{eq:shifts}, we can now choose
\begin{equation}
 r_\gamma=t^{|\gamma|}w_A(\gamma).
 \label{eq:activities}
\end{equation}
These activities depend only on the induced graphs on $\gamma$, so they are unchanged in every residual graph containing $\gamma$. Since the weights also multiply over components of $G[U]$, the scalar coefficient from the graph polynomial in Proposition~\ref{prop:positive} can be written as
\begin{equation}
 Q_{G,A}(s)=\sum_{U\subseteq V(G)}(-1)^{c(G[U])}s^{|U|}w_A(U).
 \label{eq:coloredQ}
\end{equation}
The value needed in~\eqref{eq:positive} is $s=t$. Evaluations of the polynomial depend on $G$ and may have either sign so we now show that $Q_{G,A}(t)$ is strictly positive.

\subsection{Counting connected sets by trees}
\label{sec:counting}

The sign in~\eqref{eq:coloredQ} records the number of components of $G[U]$,
while $w_A(U)$ records the sizes of the components of $A[U]$.
The deletion argument in the next subsection separates subsets $U$
according to the connected component $\gamma$ containing a specified
vertex $v$. We therefore need to bound sums over connected sets
$\gamma\ni v$. To do so, we encode each such set by a spanning tree
that preserves its red-component weight.

Color the edges of $A$ red and the other edges of $G$ which are not in $A$ blue. To specify the weight on both graphs and trees, recall $\alpha=1+x/2$ and $\eta=1/3+x/2$ from~\eqref{eq:alphaeta} and denote
\begin{equation}
 g(s)=\begin{cases}1,&s=1,\\ \eta\alpha^{s-1},&s\ge2,\end{cases}
 \qquad
 w_{\rm red}^{F}(U)=\prod_{K\in\mathcal C_{\rm red}^{F}(U)}g(|K|).
 \label{eq:redweight}
\end{equation}
Here $F$ is any graph with red and blue edges, and $\mathcal C_{\rm red}^{F}(U)$ is the set of connected components formed by its red edges with both endpoints in $U$, including isolated vertices. Thus $w_{\rm red}^{G}(U)=w_A(U)$ from~\eqref{eq:w}. 

The counting idea is related to the proof of Sokal's Proposition~4.2~\cite{SokalZeros}. That proposition bounds connected subgraphs and trees through a specified vertex. The proof then encodes a chosen spanning tree by paths in a covering tree. We use that encoding idea, with two changes: simple paths suffice here, and the spanning tree is chosen to preserve every red component. Our encoding takes the form below.

\begin{lemma}[A weight-preserving tree comparison]
\label{lem:lift}
Let $G=(V,E)$ be a finite simple graph, let $A=(V,E_A)$ with subgraph $E_A\subseteq E$ marking the red edges. Fix a root $v=v_0\in V$ (paths will be rooted at $(v_0)$). Define a rooted tree $\mathcal T_v$ as follows. Its vertices are the simple paths
\begin{equation}
 p=(v_0,\ldots,v_k),\qquad v_0,\ldots,v_k\text{ distinct},\quad
 \{v_{i-1},v_i\}\in E\ (1\le i\le k).
 \label{eq:pathtree}
\end{equation}
The root of $\mathcal T_v$ is $(v_0)$. The parent of a path is obtained by deleting its last vertex. The edge from $(v_0,\ldots,v_{k-1})$ to $(v_0,\ldots,v_k)$ has the color of $\{v_{k-1},v_k\}$ in $G$. Define the endpoint map $\pi(p)=v_k$.

For every connected vertex set $\gamma\ni v$, one can choose a connected vertex set $\widehat\gamma\subseteq V(\mathcal T_v)$ containing the root such that $\pi$ restricts to a bijection from $\widehat\gamma$ to $\gamma$ and maps its red components bijectively onto those of $A[\gamma]$. In particular,
\begin{equation}
 \sum_{\substack{\gamma\subseteq V,\ v\in\gamma\\G[\gamma]\ \mathrm{connected}}}
 u^{|\gamma|}w_A(\gamma)
 \le
 \sum_{\substack{U\subseteq V(\mathcal T_v),\ (v_0)\in U\\
                 \mathcal T_v[U]\ \mathrm{connected}}}
 u^{|U|}w_{\rm red}^{\mathcal T_v}(U),
 \qquad u\ge0.
 \label{eq:lift}
\end{equation}
The tree is finite. Its root has $\deg_G(v)$ children, and every other vertex has at most $b=\Delta-1$ children when $G$ has maximum degree at most $\Delta$.
\end{lemma}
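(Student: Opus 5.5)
Given a connected $\gamma\ni v$, I will first build a spanning tree $T_\gamma$ of $G[\gamma]$ rooted at $v$ that contains a spanning tree of every red component of $A[\gamma]$. The construction is a modified search: start with $S=\{v\}$ and first add all of $v$'s red component by red edges, for instance by BFS inside $A[\gamma]$. While $S\ne\gamma$, pick any edge of $G[\gamma]$ from $S$ to some $w\notin S$, add it (in whichever color it has), and then immediately grow by red edges to add the entire red component of $w$ inside $\gamma$. Because red components are added whole at the moment they are first touched, every red component $K$ of $A[\gamma]$ is spanned by red tree edges. Each non-red-internal tree edge joins two different red components. Such an edge may be a red edge of $G$ only if it lies inside a red component, but any red edge between two vertices of $\gamma$ lies in a common red component. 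So the tree edges between different red components are all blue, and the red components of $T_\gamma$ (with respect to its red edges) are exactly those of $A[\gamma]$.

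Next I will lift $T_\gamma$ into $\mathcal T_v$. For each $u\in\gamma$, let $p(u)$ be the unique path in $T_\gamma$ from $v$ to $u$. It is a simple path in $G$, hence a vertex of $\mathcal T_v$. Set $\widehat\gamma=\{p(u):u\in\gamma\}$. This set is closed under taking parents, since prefixes of tree paths are tree paths, so it is connected and contains the root $(v_0)$. The map $\pi$ sends $p(u)\mapsto u$, so it is a bijection $\widehat\gamma\to\gamma$. The induced subgraph $\mathcal T_v[\widehat\gamma]$ consists exactly of the parent edges $p(\mathrm{parent}(u))\to p(u)$, because two paths in $\mathcal T_v$ are adjacent only as prefix and extension. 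Under $\pi$ these edges correspond bijectively to the edges of $T_\gamma$, with the same colors. Consequently the red components of $\widehat\gamma$ map bijectively onto those of $T_\gamma$, and hence onto those of $A[\gamma]$, with equal sizes. This gives $w^{\mathcal T_v}_{\rm red}(\widehat\gamma)=w_A(\gamma)$ and $|\widehat\gamma|=|\gamma|$.

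The map $\gamma\mapsto\widehat\gamma$ is injective, since $\pi(\widehat\gamma)=\gamma$. All terms are nonnegative for $u\ge0$, so \eqref{eq:lift} follows by dropping the unused $U$ on the right-hand side. Finiteness holds because simple paths in the finite graph $G$ are finite in number. The root $(v_0)$ has one child for each neighbor of $v$. A path ending at $v_k$ with $k\ge1$ cannot extend back to $v_{k-1}$, so it has at most $\deg(v_k)-1\le\Delta-1=b$ children.

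The main obstacle is the spanning-tree step, specifically confirming that no red edge is forced between different red components and that the red components of the tree are not split; the "absorb whole red components" search handles this. The remaining steps are bookkeeping.
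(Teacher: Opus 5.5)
Your proof is correct and follows essentially the same route as the paper. You build a spanning tree of $G[\gamma]$ whose red components coincide with those of $A[\gamma]$, lift it to the prefix-closed set of root paths in $\mathcal T_v$, and finish with the same injectivity, finiteness and degree arguments. The paper gets the spanning tree from a Kruskal-type scan with red edges ordered before blue ones, while your search absorbs each red component whole; your check that every edge joining whole red components must be blue gives the same property.
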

\begin{proof}
Fix an order of the edges of $G$ with red edges before the blue edges. For each $\gamma$, start with no edges on its vertex set, and consider all edges of $G[\gamma]$, red edges first and then blue edges, in their fixed orders. Keep an edge precisely when its endpoints are not yet connected by previously kept edges.

After all red edges have been considered, the kept edges connect each component of $A[\gamma]$. Otherwise a red edge joining two remaining pieces would have been kept. They contain no cycle, so within each red component they form a spanning tree. Every subsequent blue edge either joins two current pieces or is omitted. Since $G[\gamma]$ is connected, the final kept graph $S_\gamma$ is a spanning tree. Its red components are exactly those of $A[\gamma]$.

Root $S_\gamma$ at $v_0$. For each $a\in\gamma$, let $p_a$ be its unique path from $v_0$ in $S_\gamma$, and set
\[
 \widehat\gamma=\{p_a:a\in\gamma\}.
\]
These paths are simple, and every prefix of a path is another member of $\widehat\gamma$. Thus they form a rooted subtree of $\mathcal T_v$. The endpoint map is a color-preserving isomorphism from this subtree to $S_\gamma$. It consequently preserves the red-component sizes, giving
\[
 |\widehat\gamma|=|\gamma|,
 \qquad w_{\rm red}^{\mathcal T_v}(\widehat\gamma)=w_A(\gamma).
\]
Also $\pi(\widehat\gamma)=\gamma$, so the encoding is injective. Summing over its images gives the left-hand side of~\eqref{eq:lift}; the other terms on the right are nonnegative.

Every path is connected to the root $(v_0)$, and there are no cycles. Also, every path has length at most $|V|-1$. To verify degree bounds, note that at the root there are $\deg_G(v)\le \Delta$ choices of children in the paths. Elsewhere the parent vertex is unavailable, leaving at most $\Delta-1$ choices.
\end{proof}

With the encoding of connected vertices $\gamma$ into $\mathcal T_v$ above, we now bound the sum $\sum_{\substack{\gamma\ni v\\G[\gamma]\ \mathrm{connected}}}
 u^{|\gamma|}w_A(\gamma)$ with a bound on the corresponding contributions from terms in $\mathcal T_v$.

\begin{lemma}[Bound on the colored tree sum]
\label{lem:rootbound}
With the parameters in~\eqref{eq:parameters} and~\eqref{eq:alphaeta}, set
\begin{equation}
 u=\frac{t}{(1-x)^b}=\frac{x}{(1+x)^b}.
 \label{eq:u}
\end{equation}
For every finite graph $G$ of maximum degree at most $\Delta$, with any fixed red/blue edge coloring,
\begin{equation}
 \sum_{\substack{\gamma\ni v\\G[\gamma]\ \mathrm{connected}}}
 u^{|\gamma|}w_A(\gamma)
 \begin{cases}
 \le x,&\deg_G(v)\le b,\\
 <x(1+x),&\deg_G(v)=b+1.
 \end{cases}
 \label{eq:rootbound}
\end{equation}
\end{lemma}
\begin{proof}
By Lemma~\ref{lem:lift}, let us recall that for $u \ge 0$, we have the bound
\[
\sum_{\substack{\gamma\subseteq V,\ v\in\gamma\\G[\gamma]\ \mathrm{connected}}}
 u^{|\gamma|}w_A(\gamma)
 \le
 \sum_{\substack{U\subseteq V(\mathcal T_v),\ o\in U\\
                 \mathcal T_v[U]\ \mathrm{connected}}}
 u^{|U|}w_{\rm red}^{\mathcal T_v}(U).
\]
The right hand side is defined on a finite rooted colored tree whose vertices apart from the root have at most $b$ children.
We bound this through induction over a recursion starting from a tree with just one root vertex.

\emph{Define two sums in the recursion.}
For a node $p$, let $\mathcal T_p$ consist of $p$ and all its descendants, and let $\mathscr U_p$ be its connected vertex sets containing $p$. For $U\in\mathscr U_p$, denote the red component containing $p$ by $K_p(U)$. Define
\begin{align}
 C_p&=\sum_{U\in\mathscr U_p}
       u^{|U|}\prod_{K\in\mathcal C_{\rm red}^{\mathcal T_p}(U)}g(|K|),
       \label{eq:Cdefinition}\\
 O_p&=\sum_{U\in\mathscr U_p}
       u^{|U|}\alpha^{|K_p(U)|}
       \prod_{\substack{K\in\mathcal C_{\rm red}^{\mathcal T_p}(U)\\K\ne K_p(U)}}g(|K|).
       \label{eq:Odefinition}
\end{align}
Thus $C_p$ is the sum we want. The auxiliary sum $O_p$ replaces the weight of the root's red component by $\alpha^{|K_p(U)|}$. It is the appropriate weight when that component is joined to a selected parent by a red edge.

\emph{Derive the recursions.}
Let $\mathcal B_p$ and $\mathcal R_p$ be the blue and red children of $p$. The same choice of an empty or nonempty subtree at each child gives
\begin{align}
 C_p&=u\prod_{i\in\mathcal B_p}(1+C_i)
 \left[1+\eta\left(\prod_{j\in\mathcal R_p}(1+O_j)-1\right)\right],
 \label{eq:Crec}\\
 O_p&=\alpha u\prod_{i\in\mathcal B_p}(1+C_i)
                   \prod_{j\in\mathcal R_p}(1+O_j).
 \label{eq:Orec}
\end{align}
Let us explain this recursion.
Across any blue edge, the child's red component is separated from that of the root, explaining the $(1+C_i)$ terms. 
Across a selected red edge, the parent and child share a red component and we use $O_j$ to account for this.
In~\eqref{eq:Crec}, the leading $1$ inside the brackets $[1+\eta(\prod_{j\in\mathcal R_p}(1+O_j)-1)]$ accounts for the choice of no red child. 
The product $\prod_{j\in\mathcal R_p}(1+O_j)-1$ then sums all nonempty choices of red children.
Each already starts starts with one red component at $p$, so the entire sum is multiplied by $\eta$ once. In~\eqref{eq:Orec}, the component already extends to the parent, and $p$ contributes $\alpha$ regardless of its children. At a leaf, $C_p=u$ and $O_p=\alpha u$.

For example, a root with two red leaf children has
\[
 C_p=u+2\eta\alpha u^2+\eta\alpha^2u^3.
\]

\emph{Check two scalar inequalities.}
Set $y=3x/2$ and $R=(1+x)/(1+y)$. To show $C_p<x$ and $O_p<y$, we need to establish a few inequalities. The elementary inequality $(1+s)^{-b}\ge1-bs$ for $s\ge0$ follows from convexity. Together with $2bx=1-x^2$, it gives
\begin{equation}
 R^b\ge1-\frac{bx}{2(1+x)}=\frac{3+x}{4}.
 \label{eq:Rbound}
\end{equation}
Since $\alpha=1+x/2$, $\eta=1/3+x/2$, and $0<x<1/2$, we obtain
\begin{equation}
 \frac32R^b-\alpha\ge\frac{1-x}{8}>0,
 \qquad
 \frac23R^b-\eta\ge\frac{1-2x}{6}>0.
 \label{eq:twomargins}
\end{equation}
The first inequality yields
\begin{equation}
 \alpha u(1+y)^b=\frac{\alpha x}{R^b}<y.
 \label{eq:openbound}
\end{equation}
For the other recursion, if $1\le j\le b+1$, then
\begin{equation}
 \frac{(1+x)^j-1}{(1+y)^j-1}
 =\frac23\frac{\sum_{r=0}^{j-1}(1+x)^r}{\sum_{r=0}^{j-1}(1+y)^r}
 \ge\frac23R^{j-1}\ge\frac23R^b>\eta.
 \label{eq:closedcompare}
\end{equation}
The first inequality holds term by term: $(1+x)^r=R^r(1+y)^r\ge R^{j-1}(1+y)^r$. Equivalently,
\[
 1+\eta\bigl((1+y)^j-1\bigr)\le(1+x)^j,
 \qquad 0\le j\le b+1,
\]
with equality only when $j=0$.

\emph{Induct from the leaves.}
We prove $C_p<x$ and $O_p<y$ whenever the subtree rooted at $p$ has at most $b$ children at every node. At a leaf, $C_p=u<x$, and~\eqref{eq:openbound} gives $O_p=\alpha u<y$.

Suppose the bounds hold at the children of $p$. If $p$ has $d\ge1$ children, of which $j$ are red, both recursions are strictly increasing in each child quantity. Thus~\eqref{eq:Crec} and~\eqref{eq:closedcompare} give
\begin{equation}
 C_p<u(1+x)^{d-j}
       \left[1+\eta\bigl((1+y)^j-1\bigr)\right]
 \le u(1+x)^d.
 \label{eq:closedbound}
\end{equation}
For $d\le b$, this is at most $u(1+x)^b=x$. Also, because $C_i<x<y$ and $O_j<y$,
\[
 O_p\le\alpha u(1+y)^d\le\alpha u(1+y)^b<y.
\]
This completes the induction.

At the overall root, all children satisfy these strict bounds even when the root has $b+1$ children. Equation~\eqref{eq:closedbound} then gives
\[
 C_o<u(1+x)^{b+1}=x(1+x).
\]
If it has at most $b$ children, the preceding induction gives $C_o<x$. Applying~\eqref{eq:lift} proves~\eqref{eq:rootbound}. 
\end{proof}

\subsection{Positivity by deleting vertices}
\label{sec:scalar}
We now return to $Q_{G,A}(s)$ from Proposition~\ref{prop:positive} and Equation~\eqref{eq:coloredQ} to establish its positivity for $0\le s \le t_\Delta$. The proof below will establish positivity through an inductive argument starting with the full graph and removing vertices from it one at a time.

\begin{proposition}[Positivity at the tree threshold]
\label{prop:scalar}
For every finite graph $G$ of maximum degree at most $\Delta\ge2$ and every subgraph $A\subseteq G$ on the same vertex set,
\begin{equation}
 Q_{G,A}(s)>0\qquad(0\le s\le t_\Delta).
 \label{eq:scalarpositive}
\end{equation}
In particular, this holds for every induced subgraph, keeping the parameters $\alpha,\eta$ fixed as in~\eqref{eq:alphaeta}.
\end{proposition}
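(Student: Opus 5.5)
We prove the statement by induction on $|V(G)|$, simultaneously for all graphs of maximum degree at most $\Delta$ and all red subgraphs $A$. The quantity carried through the induction is the ratio bound
\[
 \frac{Q_{G,A}(s)}{Q_{G-v,A-v}(s)}\ge 1-x>0\quad\text{whenever }\deg_G(v)\le b,
 \qquad
 \frac{Q_{G,A}(s)}{Q_{G-v,A-v}(s)}>1-x(1+x)>0\quad\text{always},
\]
where $A-v$ denotes $A$ restricted to $V\setminus\{v\}$. The second inequality uses $x(1+x)<1$, which holds because $x<1/2$. The base case $Q_{\varnothing}=1$ is immediate. Since every nonempty finite graph has a vertex, applying the second inequality repeatedly gives $Q_{G,A}(s)>0$. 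The ``in particular'' clause is automatic, because induced subgraphs again have maximum degree at most $\Delta$, and $\alpha,\eta$ do not depend on the graph.

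\textbf{Deletion identity.} First split the sum~\eqref{eq:coloredQ} according to whether $v\in U$. When $v\in U$, let $\gamma$ be the component of $G[U]$ containing $v$. The remaining part $U\setminus\gamma$ is an arbitrary subset of $V\setminus N_G[\gamma]$. Both $(-1)^{c}$ and $w_A$ factor over components, and $w_A(\gamma)$ depends only on $A[\gamma]$. This gives
\[
 Q_{G,A}(s)=Q_{G-v}(s)-\sum_{\substack{\gamma\ni v\\ G[\gamma]\text{ connected}}} s^{|\gamma|}w_A(\gamma)\,Q_{G-N_G[\gamma]}(s),
\]
with the restricted red graphs left implicit. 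Dividing by $Q_{G-v}(s)$, which is positive by induction, we need to bound the ratio $Q_{G-N[\gamma]}/Q_{G-v}$ from above.

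\textbf{Telescoping.} Go from $G-v$ down to $G-N[\gamma]$ by deleting the vertices of $N[\gamma]\setminus\{v\}$ one at a time, each time removing a vertex $w$ that has a neighbour already deleted. Such an ordering exists if we first remove the vertices of $\gamma$ in BFS order from $v$ (using that $\gamma$ is connected) and then remove the boundary vertices, each of which is adjacent to $\gamma$. At the moment $w$ is removed, its degree in the current graph is at most $\Delta-1=b$. By the induction hypothesis each step therefore multiplies the quantity by at least $1-x$, so
\[
 \frac{Q_{G-N[\gamma]}}{Q_{G-v}}\le (1-x)^{-(|N[\gamma]|-1)}.
\]

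\textbf{Main obstacle.} This bound loses a factor per \emph{neighbour} of $\gamma$, not just per vertex of $\gamma$, so we cannot directly obtain $u=s/(1-x)^b$ per vertex. We need a sharper accounting. For a tree-shaped spanning structure, a vertex of $\gamma$ has at most $b$ non-parent neighbours, so $|N[\gamma]|-1\le b|\gamma|$, or at most $b|\gamma|+1$ when $v$ itself has degree $b+1$. The clean way to arrange this is to charge each deleted vertex to its discoverer in $\gamma$. Every vertex of $\gamma$ other than $v$ is discovered by its parent, and each vertex of $\gamma$ discovers at most $b$ vertices; $v$ discovers at most $b+1$, or at most $b$ in the case $\deg v\le b$. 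This yields
\[
 s^{|\gamma|}(1-x)^{-(|N[\gamma]|-1)}\le u^{|\gamma|}\cdot c_v,
\]
where $c_v=1$ if $\deg v\le b$ and $c_v=(1-x)^{-1}$ otherwise, using $s\le t$ so that $s(1-x)^{-b}\le u$.

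\textbf{Conclusion.} Lemma~\ref{lem:rootbound} then gives
\[
 1-\frac{Q_{G,A}}{Q_{G-v}}\le x \quad\text{if }\deg v\le b,
\]
and $<x(1+x)/(1-x)$ in the degree-$(b+1)$ case. In the degree-$(b+1)$ case we instead aim for the weaker claim that the ratio is positive: we need $x(1+x)<1-x$, i.e.\ $x^2+2x<1$, which holds since $x\le\sqrt2-1$. Equality would occur only at $x=\sqrt2-1$, but the bound from Lemma~\ref{lem:rootbound} is strict. So the hypothesis I actually carry through the induction is the $1-x$ bound for vertices of degree at most $b$, together with strict positivity in general. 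The delicate point to check carefully is the charging argument in the telescoping step, which has to match exactly the degree structure of $\mathcal T_v$ in Lemma~\ref{lem:lift}.
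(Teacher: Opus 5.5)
Your proposal is correct and follows the paper's proof: the same deletion recurrence, the same inductive ratio bound $Q_G/Q_{G-v}\ge 1-x$ for $\deg_G(v)\le b$, and the same parent-before-children deletion order giving $(1-x)^{-b|\gamma|}$ (or $(1-x)^{-b|\gamma|-1}$ when $\deg_G(v)=b+1$). It also closes the same way, via Lemma~\ref{lem:rootbound} and $x(1+x)/(1-x)\le 1$ with strictness, and your ``discoverer'' charging is just a repackaging of the paper's degree-sum count of $|N_G[\gamma]\setminus\{v\}|$. The one fix is to drop the hypothesis you advertise at the start, $Q_{G,A}/Q_{G-v,A-v}>1-x(1+x)$ for every vertex: your argument only gives $>1-x(1+x)/(1-x)\ge 0$, and the stronger claim is false (for $\Delta=2$ with no red edges and $s=t_2$, the ratio at the middle vertex of a long path tends to $1-x(1+x)/(1-x)=0$), so carry only strict positivity for degree-$(b+1)$ vertices, as your final paragraph already does.
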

\begin{proof}
The case $s=0$ is immediate. Write $Q_G=Q_{G,A}(s)$, restricting $A$ whenever vertices of $G$ are removed. Induct on the number of vertices, proving positivity together with
\begin{equation}
 \frac{Q_G}{Q_{G-v}}\ge1-x\qquad\text{when }\deg_G(v)\le b.
 \label{eq:ratio}
\end{equation}
The empty graph has value one; a single vertex has $Q_G=1-s\ge1-x>0$.

A selected vertex set either avoids $v$, or has a unique connected component $\gamma$ containing $v$. Summing according to that component gives the deletion recurrence
\begin{equation}
 Q_G=Q_{G-v}
 -\sum_{\substack{\gamma\ni v\\G[\gamma]\ \mathrm{connected}}}
 s^{|\gamma|}w_A(\gamma)Q_{G-N_G[\gamma]}.
 \label{eq:delete}
\end{equation}
This is the usual component decomposition~\cite[p.~106]{TittmannBook}, with our component-dependent weights. All smaller-graph values on the right are positive by induction.

If $\deg_G(v)\le b$, the sum of degrees in $\gamma$ is at most $(b+1)|\gamma|-1$. Its at least $|\gamma|-1$ internal edges are counted twice, leaving at most $(b-1)|\gamma|+1$ external neighbors. Adding the other $|\gamma|-1$ vertices of $\gamma$ gives
\begin{equation}
 |N_G[\gamma]\setminus\{v\}|\le b|\gamma|.
 \label{eq:neighborhood}
\end{equation}
If instead $\deg_G(v)=b+1$, the degree sum is at most $(b+1)|\gamma|$, so there are at most $(b-1)|\gamma|+2$ external neighbors and the bound becomes $b|\gamma|+1$.

Starting from $G-v$, delete the remaining vertices of $\gamma$ along a spanning tree rooted at $v$, always removing a parent before its children. Then delete its external neighbors. Every deleted vertex has already lost a neighbor, so its current degree is at most $b$. Apply~\eqref{eq:ratio} at each deletion. Multiplying the resulting ratios gives
\begin{equation}
 \frac{Q_{G-N_G[\gamma]}}{Q_{G-v}}
 \le\begin{cases}
 (1-x)^{-b|\gamma|},&\deg_G(v)\le b,\\
 (1-x)^{-b|\gamma|-1},&\deg_G(v)=b+1.
 \end{cases}
 \label{eq:ratiodelete}
\end{equation}
For a vertex of degree at most $b$, divide~\eqref{eq:delete} by $Q_{G-v}$, use $s\le t$, and apply Lemma~\ref{lem:rootbound}:
\[
 1-\frac{Q_G}{Q_{G-v}}
 \le\sum_{\gamma\ni v}\left(\frac{t}{(1-x)^b}\right)^{|\gamma|}w_A(\gamma)
 \le x.
\]
This proves~\eqref{eq:ratio}. For a vertex of degree $b+1$, the extra deletion factor in~\eqref{eq:ratiodelete} gives
\begin{equation}
 1-\frac{Q_G}{Q_{G-v}}
 \le\frac1{1-x}\sum_{\gamma\ni v}u^{|\gamma|}w_A(\gamma)
 <\frac{x(1+x)}{1-x}\le1.
 \label{eq:fullroot}
\end{equation}
The last inequality is equivalent to $x\le\sqrt2-1$. The strict inequality, provided by the finite-tree bound, also covers $\Delta=2$, where the last expression equals one. Thus $Q_G>0$ and the induction closes.
\end{proof}

\begin{proof}[Proof of the sufficient condition in Theorem~\ref{thm:fixed}]
Choose the shifts~\eqref{eq:activities}. Lemma~\ref{lem:components} makes each $R_\gamma$ separable. For every $S\subseteq V$, the coefficient in~\eqref{eq:positive} is
$Q_G(S)=Q_{G[S],A[S]}(t)>0$ by Proposition~\ref{prop:scalar}. Proposition~\ref{prop:positive} therefore gives a mixture of product Pauli eigenstates after normalization. No bound on the size of an anticommutation component was used.
\end{proof}

We return to Corollary~\ref{thm:weighted} to prove that $\beta \Lambda \le 1/(2e)$ suffices to guarantee separability. As a reminder, $\Lambda$ is the maximum sum of absolute coefficients over a closed term neighborhood:
\begin{equation}
 \Lambda=\max_{i\in V}\sum_{j\in N_G[i]}|J_j|.
\end{equation}

\begin{proof}[Proof of the sufficient condition in Corollary~\ref{thm:weighted}]
This can be shown by constructing a large degree version of the Hamiltonian with many copies of each term. Suppose $\beta\Lambda<1/(2e)$. For $\varepsilon>0$, replace each term $J_iP_i$ with $N_i=\left\lceil\frac{|J_i|}{\varepsilon}\right\rceil$ many copies each with coefficient $J_i/N_i$.
Every new coefficient has magnitude at most $\varepsilon$, and the largest closed neighborhood size of the split graph is
\[
 D_\varepsilon=\max_i\sum_{j\in N_G[i]}N_j,
 \qquad \varepsilon D_\varepsilon\longrightarrow\Lambda.
\]
Its maximum degree $\Delta_\varepsilon=D_\varepsilon-1$ tends to infinity. For $b=\Delta-1$, the explicit formula~\eqref{eq:tDelta} gives
\begin{equation}
 t_\Delta=\frac{1+O(b^{-2})}{2eb},\qquad
 z_\Delta=\frac{1+O(b^{-2})}{2eb}.
 \label{eq:tasymptotic}
\end{equation}
Indeed, $x_\Delta=(2b)^{-1}+O(b^{-3})$ and
$b\log((1-x_\Delta)/(1+x_\Delta))=-1+O(b^{-2})$.
Thus as $\epsilon \to 0$, we have $\Delta_\varepsilon\beta\varepsilon\to \beta\Lambda$ and $\Delta_\varepsilon z_{\Delta_\varepsilon}\to \frac1{2e}$.

For sufficiently small $\varepsilon$, $\beta\varepsilon<z_{\Delta_\varepsilon}$, and Theorem~\ref{thm:fixed} applies to the unchanged Hamiltonian. At $\beta\Lambda=1/(2e)$, take a limit from below noting that the convex hull of the finitely many product Pauli eigenstate projectors is closed.
\end{proof}

\begin{remark}
    The same asymptotic analysis explains the constant in~\cite[Appendix~B]{ZK} in their cluster expansion bound. Namely, in the maximization defining $t_\Delta$, substitute a trial value $x=\theta/b$. Then, one recovers the optimization in~\cite{ZK} where
    \[
     b\,x\left(\frac{1-x}{1+x}\right)^b\longrightarrow\theta e^{-2\theta}.
    \]
    This is maximized to $1/(2e)$ at $\theta=1/2$. 
\end{remark}

\section{Trees make the bound sharp}
\label{sec:tree}

We now show the sharpness of the transition by realizing a Hamiltonian whose partially transposed thermal graph state has negative eigenvalues, thereby violating the Peres-Horodecki criterion for separable states~\cite{Peres,HORODECKI19961}. Thermal graph states have an extensive history~\cite{HDB,Cavalcanti}. Kay's analysis contains exactly the polynomial we study on trees, as well as a compatible separable decomposition~\cite[Equation~(2) and Section~III.1]{Kay}. Our Hamiltonian has a different placement of physical qubits. The standard graph-state generators $X_v\prod_{u\sim v}Z_u$ have term-overlap graph $T^2$, joining vertices at distance at most two. The construction below has term-overlap graph $T$ itself, which is essential for sharpness in $\Delta$.

Let $T=(V,E)$ be a finite tree with at least two vertices. Put qubits $A_e,B_e$ on each edge, and orient the edges arbitrarily. Define
\begin{equation}
 Q_v=\prod_{e\ \mathrm{outgoing\ from}\ v}X_{A_e}X_{B_e}
     \prod_{e\ \mathrm{incoming\ to}\ v}Z_{A_e}Z_{B_e},
 \qquad H_T=-\sum_{v\in V}Q_v.
 \label{eq:treeH}
\end{equation}
There are $N=2|E|$ physical qubits. The terms of the Hamiltonian above can be mapped to independent checks on a stabilizer code~\cite{Gottesman}.

\begin{lemma}[Gibbs state of tree Hamiltonian]
\label{lem:treegenerators}
The $Q_v$ commute, are independent stabilizer generators, and have term-overlap graph exactly $T$. Every assignment of eigenvalues $\pm1$ occurs, and
\begin{equation}
 Z_{H_T}(w)=2^N(\cosh w)^{|V|},\qquad
 \rho_\beta(H_T)=2^{-N}\prod_v(I+\tanh(\beta) Q_v).
 \label{eq:treeZ}
\end{equation}
The partition-function identity holds for all complex $w$.
\end{lemma}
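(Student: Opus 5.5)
We verify the four claims of Lemma~\ref{lem:treegenerators} directly from the definition~\eqref{eq:treeH}: commutation, term-overlap graph, independence, and the Gibbs formula.

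\emph{Commutation and term-overlap graph.} Two distinct terms $Q_u,Q_v$ share physical qubits only if $u,v$ are joined by an edge $e$, because qubits $A_e,B_e$ are acted on only by the two endpoints of $e$. If $e=\{u,v\}$ is oriented $u\to v$, then $Q_u$ acts as $X_{A_e}X_{B_e}$ and $Q_v$ as $Z_{A_e}Z_{B_e}$ on these qubits. They anticommute on exactly two qubits, so $Q_u$ and $Q_v$ commute. Non-adjacent terms have disjoint supports. Every term with an incident edge acts nontrivially on that edge's qubits; since $T$ has at least two vertices, every vertex has such an edge. Hence the overlap graph is exactly $T$.

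\emph{Independence.} Suppose $\prod_{v\in U}Q_v=\pm I$ for a nonempty $U$. Because $T$ is connected with at least two vertices and $U\neq\varnothing$, we have $U\neq V$ or $U=V$. If $U\ne V$, some edge $e$ joins $v\in U$ to $w\notin U$. On $A_e$ the product then acts as a single $X$ or $Z$, which is not proportional to $I$. If $U=V$, pick any edge $e$: on $A_e$ the product acts as $XZ\propto Y\ne I$. Either way we have a contradiction, so the $Q_v$ are independent commuting Pauli operators. No product is $-I$, so they generate a stabilizer group.

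\emph{Gibbs state.} By standard stabilizer theory, $|V|$ independent commuting generators on $N$ qubits admit every sign pattern $(\pm1)^V$, each with degenerate eigenspace of dimension $2^{N-|V|}$. We compute
\[
e^{wH_T}=\prod_v e^{-wQ_v}\cdot(\text{sign conventions})=\prod_v(\cosh w\, I+\sinh w\,Q_v)
\]
up to the sign of $w$. Expanding the product, every nonempty product of the $Q_v$ is a non-identity Pauli by independence, so it is traceless. Therefore $\Tr e^{\pm wH_T}=2^N(\cosh w)^{|V|}$ for all complex $w$; the identity is polynomial in $\cosh w$ and $\sinh w$ and uses no analyticity assumption.

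Normalizing $e^{-\beta H_T}=\prod_v e^{\beta Q_v}=(\cosh\beta)^{|V|}\prod_v(I+\tanh\beta\,Q_v)$ gives $\rho_\beta(H_T)=2^{-N}\prod_v(I+\tanh(\beta)Q_v)$. None of these steps is a real obstacle; the only point requiring care is the independence argument, specifically the $U=V$ case.
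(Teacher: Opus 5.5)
Your proof is correct and follows essentially the same route as the paper: commutation from the two shared anticommuting qubits, independence from a non-identity Pauli factor of $Q_U$ on an edge qubit, and the partition function from expanding $e^{-wH_T}=\prod_v(\cosh w\,I+\sinh w\,Q_v)$ with every nonempty product traceless. The differences are minor. The paper avoids your connectivity case split by noting that on any edge meeting $U$ the product is $XX$, $ZZ$, or $YY$ up to sign. Instead of citing standard stabilizer theory for every sign pattern occurring, it checks that $\Tr\prod_v\tfrac12(I+s_vQ_v)=2^{N-|V|}>0$ using the same tracelessness fact.
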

\begin{proof}
We first show that the terms $Q_V$ are independent stabilizer checks~\cite{Gottesman}. Adjacent terms anticommute on each of their two shared qubits, so they commute overall. Nonadjacent terms have disjoint supports and also commute. Every nonempty product $Q_U=\prod_{v\in U}Q_v$ is not the identity since on an edge meeting $U$, $Q_U$ is either $XX$, $ZZ$, or $YY$ up to sign. Thus no product of terms is a scalar multiple of $I$, giving independent stabilizer checks~\cite{Gottesman}.

Eigenbases of $H_T$ and $\rho_\beta(H_T)$ can be indexed by assignments of signs $s_v\in\{\pm1\}$ to vertices. To see this, note that for any choice of signs $s_v\in\{\pm1\}$, the commuting projector $\prod_v[(I+s_vQ_v)/2]$ has trace $2^{N-|V|}>0$. These projectors are orthogonal for distinct assignments of the signs. Then, expanding each $e^{wQ_v}=\cosh (w)I+\sinh (w)Q_v$ and taking the trace proves~\eqref{eq:treeZ}.
\end{proof}

We use the Peres–Horodecki criterion to test for entanglement~\cite{Peres,HORODECKI19961}. For an operator on $\mathsf A\otimes\mathsf B$, its \emph{partial transpose} transposes only the matrix indices in $\mathsf A$, in the computational basis. Partial transposes of separable states stay positive, so a negative eigenvalue of the partially transposed state certifies the state is entangled~\cite{Peres,HORODECKI19961}. Take $\mathsf A=\{A_e:e\in E\}$ and $\mathsf B=\{B_e:e\in E\}$. A $Y$ occurs on $A_e$ in $Q_U$ exactly when both endpoints of $e$ belong to $U$. Since $Y^T=-Y$ while $X^T=X$ and $Z^T=Z$,
\[
 Q_U^{T_{\mathsf A}}=(-1)^{|E(T[U])|}Q_U.
\]
On a joint eigenvector with $Q_v=-1$, using Lemma~\ref{lem:treegenerators}, the partially transposed state therefore has eigenvalue
\begin{equation}
 2^{-N}\sum_{U\subseteq V}(-1)^{|U|+|E(T[U])|}s^{|U|}
 =2^{-N}q_T(s), \qquad s=\tanh\beta.
 \label{eq:PT}
\end{equation}
The equality uses $|E(T[U])|=|U|-c(T[U])$: every induced subgraph of a tree is a forest. This is exactly Kay's polynomial on a forest~\cite{Kay}.

\Needspace{6\baselineskip}
\begin{lemma}[Negativity above the tree threshold]
\label{lem:treenegative}
For $b=\Delta-1\ge1$ and every $s\in(t_\Delta,1)$, a finite complete rooted $b$-ary tree $T$ has $q_T(s)<0$.
\end{lemma}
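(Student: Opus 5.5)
We prove the lemma through the root-split recursion from the proof sketch. Let $T_h$ be the complete rooted $b$-ary tree of height $h$; $T_0$ is a single vertex. Write
\[
q_{T_h}(s)=A_h+B_h,
\]
where $A_h$ sums over subsets $U$ avoiding the root and $B_h$ over subsets containing it. We use the weight $(-1)^{c(T[U])}s^{|U|}$, which agrees with \eqref{eq:PT}. Then $A_0=1$ and $B_0=-s$.

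For $T_{h+1}$, choose a subset in each of the $b$ child subtrees independently.
\begin{itemize}
\item If the root is absent, component counts add, so $A_{h+1}=(A_h+B_h)^b$.
\item If the root is present, it contributes a factor $s$ and one new component. Each child subtree whose root is selected merges its root component into the new one, which flips that child's sign. Hence $B_{h+1}=-s(A_h-B_h)^b$.
\end{itemize}

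Set $x_h=-B_h/A_h$ whenever $A_h>0$. Then $x_0=s$, and while $A_h>0$ and $x_h<1$,
\[
A_{h+1}=A_h^b(1-x_h)^b>0,\qquad x_{h+1}=s\left(\frac{1+x_h}{1-x_h}\right)^b .
\]
Also $q_{T_h}(s)=A_h(1-x_h)$, so it suffices to find an $h$ with $A_h>0$ and $x_h>1$.

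The analytic step uses the map $f_s(y)=s\bigl((1+y)/(1-y)\bigr)^b$ on $[0,1)$, which is increasing.

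First, $f_s$ has no fixed point in $[0,1)$. A fixed point $y$ would satisfy
\[
s=y\left(\frac{1-y}{1+y}\right)^b\le\max_{0\le y\le1}y\left(\frac{1-y}{1+y}\right)^b=t_\Delta,
\]
contradicting $s>t_\Delta$. So $f_s(y)>y$ on $[0,1)$, since $f_s(0)=s>0$ and $f_s(y)-y$ cannot change sign.

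Second, the iterates cannot stay in $[0,1)$ forever. They would be increasing and bounded by $1$, hence converge to some $y^*\le1$. If $y^*<1$, continuity makes $y^*$ a fixed point, which is impossible. If $y^*=1$, then $f_s(x_h)\to\infty$, contradicting boundedness. So some first $h$ has $x_h\ge1$, and we have $A_h>0$ because $x_{h-1}<1$.

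It remains to rule out the boundary case $x_h=1$, which would give $q_{T_h}=0$ rather than a negative value. I expect this to be the only delicate point. Since $x_h=1$ is a single algebraic condition in $s$, there are two ways to handle it.

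\begin{enumerate}
\item Use the freedom in $s$: the equation $x_h(s)=1$ has finitely many roots, and the statement is per $s$. This route is unsatisfying, because the lemma must hold for every fixed $s$.
\item Pass to the next height. If $x_h=1$, then $A_{h+1}=A_h^b(1-x_h)^b=0$ and
\[
B_{h+1}=-s\,A_h^b(1+x_h)^b=-s\,2^bA_h^b<0,
\]
so $q_{T_{h+1}}(s)=A_{h+1}+B_{h+1}=B_{h+1}<0$. This works directly.
\end{enumerate}

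We use the second option. If $x_h>1$, then $q_{T_h}(s)=A_h(1-x_h)<0$. If $x_h=1$, then $q_{T_{h+1}}(s)<0$. Either way, a finite complete $b$-ary tree has $q_T(s)<0$.
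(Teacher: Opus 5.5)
Your proof is correct and follows the paper's argument essentially step for step. Both use the root-split recursion for $A_h,B_h$, the ratio $x_h=-B_h/A_h$ with $q_h=A_h(1-x_h)$, and the absence of a fixed point for $s>t_\Delta$ to force the increasing iterates to reach one; both then handle the equality case $x_h=1$ the same way, via $q_{h+1}=-s(2A_h)^b<0$. Your monotone-convergence argument is just a more explicit version of the paper's remark that the iterates can neither converge below one nor approach one without crossing it.
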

\begin{proof}
For a tree of height $h$, write $q_h=A_h+B_h$, where $A_h$ sums the contributions from vertex sets not containing the root and $B_h$ sums those containing it. At height zero,
$A_0=1$ and $B_0=-s$. Joining $b$ child trees gives
\begin{equation}
 A_{h+1}=(A_h+B_h)^b,\qquad
 B_{h+1}=-s(A_h-B_h)^b.
 \label{eq:AB}
\end{equation}
When the root is selected, each selected child component joins its component and its sign flips. While preceding polynomials are positive, $A_h>0$ and setting $x_h=-B_h/A_h$ gives
\begin{equation}
 x_0=s,\qquad x_{h+1}=s\left(\frac{1+x_h}{1-x_h}\right)^b,
 \qquad q_h=A_h(1-x_h).
 \label{eq:cavity}
\end{equation}
A fixed point $y$ of $x_h$ satisfies
\[
s=y\left(\frac{1-y}{1+y}\right)^b.
\]
The right hand side has maximum $t_\Delta$ attained by $y=x_\Delta=\sqrt{b^2+1}-b$.
Thus, when $s>t_\Delta$, there is no fixed point.
Furthermore, if $s>t_\Delta$, the recursion in~\eqref{eq:cavity} forces $x_h$ to iteratively increase. It cannot converge below one, since that would give a fixed point. The iterates cannot approach one without crossing it either, since the map diverges there. Thus some finite iterate reaches or exceeds one. A strict crossing gives $q_h<0$. In the equality case, $A_h=-B_h>0$ and~\eqref{eq:AB} gives $q_{h+1}=-s(2A_h)^b<0$.
\end{proof}

\begin{proof}[Sharpness in Theorem~\ref{thm:fixed} and Corollary~\ref{thm:weighted}]
If $\beta>z_\Delta$, then $s=\tanh\beta>t_\Delta$. Lemma~\ref{lem:treenegative} and~\eqref{eq:PT} give an entangled Gibbs state of overlap degree at most $\Delta$. Multiplying $H_T$ by $J$ proves the claim for general coefficient bound $J$.

For the weighted constant, let $b\to\infty$ and choose $s=(1+1/b)t_\Delta$. The initial iterates in~\eqref{eq:cavity} are $O(1/b)$, so a tree first producing a negative polynomial has height at least two for large $b$. Its maximum degree is $b+1$, and its unit-coefficient Hamiltonian has $\Lambda=b+2$. By~\eqref{eq:tasymptotic},
\[
 \beta\Lambda=(b+2)\arctanh\bigl((1+1/b)t_\Delta\bigr)\longrightarrow\frac1{2e}.
\]
Every example lies strictly above $1/(2e)$ by the separability guarantee already proved. Thus the constant cannot be increased.
\end{proof}

The largest term in this construction acts on $2\Delta$ qubits, although each physical qubit belongs to exactly two terms. 
For $\Delta=2$, the value $t_2=3-2\sqrt2$ also appears as the limiting first positive root of the path recurrence in~\cite[Proposition~16]{TAM} and in the analysis of~\cite[Section~III]{Kay}.

\begin{remark}[Connection to independence polynomial]
For a graph $F$, its independence polynomial is $I_F(\lambda)=\sum_{S\subseteq V(F)\,\mathrm{independent}}\lambda^{|S|}$. At negative $\lambda$, it has a probabilistic interpretation where it enters the inclusion--exclusion conditions for avoiding dependent bad events. This is the setting of Shearer's sharp form of the Lov\'asz local lemma~\cite{Shearer,ScottSokal}. 
Shearer's lemma gives a critical value corresponding to a recursion on a $b$-ary tree of the form $y_{h+1}=p/(1-y_h)^b$ with critical value $\max_{0<y<1}y(1-y)^b=b^b/(b+1)^{b+1}$. Our recursion has the extra numerator $(1+x_h)^b$.
\end{remark}

\begin{remark}[Partition function zeros]
\label{rem:zeros}
The polynomial $q_G$ differs from the physical partition function $Z_H(w)=\Tr e^{-wH}$, and their zero sets need not coincide.
In our tree Hamiltonian, Lemma~\ref{lem:treegenerators} gives $Z_{H_T}(w)=2^N(\cosh w)^{|V|}$. It has no zeros in $|\operatorname{Im}w|<\pi/2$, independently of degree, although its Gibbs state can be entangled at $\beta=O(1/\Delta)$. Thus, zero-freeness alone does not imply separability~\cite{PZC,ZK}.
\end{remark}

\section{Classical preparation below the separability threshold}
\label{sec:sampling}

We prove Theorem~\ref{thm:sampling} giving an efficient algorithm for sampling states in the separable decomposition. The input is an explicit Pauli Hamiltonian $H=\sum_{i\in V}J_iP_i$ on $n$ qubits, with $m=|V|$ terms, $|J_i|\le J$, and term-overlap degree at most $\Delta\ge2$.
The temperature is restricted by $\beta J\le(1-\delta)z_\Delta$.
We construct a randomized classical algorithm that outputs a sample product Pauli eigenstate $\ket\psi$ in time polynomial in $n,m,1/\epsilon$ satisfying
\begin{equation}
 \frac12\left\|\mathbb E\,\ket\psi\bra\psi-\rho_\beta(H)\right\|_1
 \le\epsilon.
 \label{eq:alg-goal}
\end{equation}

Zero coefficients in the Hamiltonian are discarded. If $\beta=0$ or no terms remain, independent uniformly random $Z$ eigenstates give the maximally mixed state exactly. Henceforth assume $\beta>0$ and $m\ge1$. Numerical input and bit-complexity considerations are detailed in Remark~\ref{rem:alg-precision}.

The distribution to be sampled comes from
\eqref{eq:shifts}. Proposition~\ref{prop:positive} shows that in the fully separable regime, the Gibbs state can be decomposed as $\rho_\beta(H)=\sum_{\Gamma}\pi(\Gamma)\rho_\Gamma$ where $\pi$ is a distribution over the compatible families of connected components of terms and $\rho_\Gamma$ is a separable state. 
To be more precise, let us denote
\begin{equation}
 R_\gamma=r_\gamma I+D_\gamma,\qquad
 a_\gamma=\tau(R_\gamma)=r_\gamma+\tau(D_\gamma).
 \label{eq:alg-traces}
\end{equation}
As we are in the separable regime, shifts are chosen so that $a_\gamma$ and $Q_G(S)$ are positive. The contribution of a set of polymers $\Gamma$ is
\begin{equation}
 \pi(\Gamma)=
 \frac{Q_G\!\left(V\setminus N_G[\bigcup_{\gamma\in\Gamma}\gamma]\right)
       \prod_{\gamma\in\Gamma}a_\gamma}{\tau(F_V)}.
 \label{eq:alg-family-law}
\end{equation}
Each separable matrix $R_\gamma$ is nontrivially supported on sites $U_\gamma=\bigcup_{i\in\gamma}\supp(P_i)$ and a compatible set of polymers in $\Gamma$ have support on sites $U_\Gamma=\bigcup_{\gamma\in\Gamma}U_\gamma$. Proposition~\ref{prop:positive} then gives the exact decomposition
\begin{equation}
 \begin{split}
 \rho_\beta(H)&=\sum_{\Gamma\ \mathrm{compatible}}\pi(\Gamma)\rho_\Gamma,\\
 \rho_\Gamma&=
 \left(\bigotimes_{\gamma\in\Gamma}
       \frac{R_\gamma}{\Tr_{U_\gamma}R_\gamma}\right)
       \otimes\frac{I_{U_\Gamma^c}}{2^{|U_\Gamma^c|}}.
 \end{split}
 \label{eq:alg-state-decomposition}
\end{equation}

Section~\ref{sec:alg-distribution} samples $\Gamma$ approximately from
$\pi$ by choosing polymers one at a time. Rather than enumerate all
families in~\eqref{eq:alg-family-law}, we estimate their conditional
weights using polymer partition functions on subsets of the term labels.
To build an $\epsilon$ approximate sampler, only polymers of size $O_\delta(\log(m/\epsilon))$ need be considered as larger choices have exponentially small probability after any previous choices. Section~\ref{sec:alg-oracle} gives the partition-function
estimates, and Section~\ref{sec:alg-states} samples product states from
$\rho_\Gamma$. A polymer of $k$ terms generates at most $2^k$ Pauli strings,
so its mixture terms can be enumerated efficiently at this cutoff, even
when its physical support is large. The decomposition in~\eqref{eq:alg-state-decomposition} is exact, but the cutoff and numerical estimates make the sampler approximate. 

For the estimates we use Barvinok's zero-free Taylor approximation~\cite{BarvinokBook} and bounded-degree coefficient algorithms~\cite{PatelRegts,HPR,YYZ,HMS}. Lemma~\ref{lem:alg-counting} gives a self-contained proof of the required monomial-weight case of~\cite[Theorem~2.2]{HPR}, and we verify its hypotheses for our conditioned models. Related quantum cluster-expansion algorithms appear in~\cite{MH,MM,PZC,MannWaite}.

\subsection{Sampling the polymers}
\label{sec:alg-distribution}

Recall from~\eqref{eq:FD} and~\eqref{eq:inversion} that
\[
 \rho_\beta(H)=\frac{F_V}{\Tr F_V},\qquad
 F_V=\frac{e^{-\beta H}}{\prod_{i\in V}\cosh(\beta|J_i|)}.
\]
For any choice of shifts, the polynomial in~\eqref{eq:shifts} is
\begin{equation}
 Q_G(S)=\sum_{\substack{\Gamma\ \mathrm{compatible}\\
                      \gamma\subseteq S\ (\gamma\in\Gamma)}}
             \prod_{\gamma\in\Gamma}(-r_\gamma),
 \qquad Q_G(\varnothing)=1.
 \label{eq:alg-Q}
\end{equation}
Equation~\eqref{eq:positive} expresses $F_V$ as a sum of products of
$R_\gamma$, with coefficient
$Q_G(V\setminus N_G[\bigcup_{\gamma\in\Gamma}\gamma])$ for family $\Gamma$.
Disjoint supports give
$\tau(\prod_{\gamma\in\Gamma}R_\gamma)=\prod_{\gamma\in\Gamma}a_\gamma$.
Taking normalized traces therefore gives~\eqref{eq:alg-family-law},
including $\sum_\Gamma\pi(\Gamma)=1$.

Shifts chosen for the algorithm here will be slightly different than those used in proving the state is separable. 
Recall shorthand notation $t=t_\Delta$, $z=z_\Delta$, and the weights
$w_A$ from~\eqref{eq:parameters} and~\eqref{eq:w}. Define
\begin{equation}
 \sigma=\frac{t+\tanh(\beta J)}2,\qquad
 \mathcal R=\frac{t}{\sigma}>1,\qquad
 r_\gamma=\sigma^{|\gamma|}w_A(\gamma).
 \label{eq:alg-parameters}
\end{equation}
The intermediate value $\sigma$ allows for a margin for a zero-free disk of radius $\mathcal R$. These shifts replace~\eqref{eq:activities} only in this section. Keeping the individual $\tanh z_i$ factors in the proof of Lemma~\ref{lem:components}, specifically \eqref{eq:localai}--\eqref{eq:redfactor}, gives
\begin{equation}
 \normp{D_\gamma}\le\tanh(\beta J)^{|\gamma|}w_A(\gamma)
 \le(2-\mathcal R)r_\gamma,
 \qquad r_\gamma-\normp{D_\gamma}\ge(\mathcal R-1)r_\gamma>0.
 \label{eq:alg-cone-margin}
\end{equation}
Here $\tanh(\beta J)/\sigma=2-\mathcal R$.
Thus $0<a_\gamma\le2r_\gamma$, and Proposition~\ref{prop:scalar} ensures that $Q_{G[S],A[S]}(\sigma)>0$ as required for separability. 

We now sample a family of polymers from the distribution $\pi$ one step at a time. Each step will either declare that a term $v$ belongs to no polymer or a polymer containing $v$ is in the family.
To sample, first choose an undecided term $v$. 
Either we declare that $v$ belongs to no selected polymer, with probability $p_\varnothing$, or we select a polymer $\gamma$ containing $v$, with probability $p_\gamma$.
These probabilites are given by a polymer partition function which we assume oracle access to for now. Section~\ref{sec:alg-oracle} later justifies the assumption by giving an efficient algorithm for estimating the partition functions.

At each step, we track subset $S\subseteq V$ storing the labels not blocked by polymers already selected, and $B\subseteq S$ storing the labels declared not to belong to any selected polymer. The undecided labels are $S\setminus B$.
Selecting $\gamma$ removes its closed neighborhood from $S$.
Declaring $v$ unselected adds it to $B$ but does not remove it from $S$.
The partition function used to sample is given by the total weight of the allowed families in~\eqref{eq:alg-family-law} conditioned on subsets $S$ and $B$:
\begin{equation}
 \mathcal Z(S,B)=
 \sum_{\substack{\Gamma\ \mathrm{compatible}\\
                  \gamma\subseteq S\setminus B\ (\gamma\in\Gamma)}}
 Q_G\!\left(S\setminus N_G[\bigcup_{\gamma\in\Gamma}\gamma]\right)
 \prod_{\gamma\in\Gamma}a_\gamma>0.
 \label{eq:alg-mixed-positive}
\end{equation}
This is a partition function measuring the weight of allowed completions conditioned on previous polymers selected and vertices declared not to be selected. The factors $a_\gamma$ from previously selected polymers are common to every completion and cancel when conditioning. Initially $\mathcal Z(V,\varnothing)=\tau(F_V)$.

For an eligible polymer $\gamma$, define the sets remaining after its selection by
\begin{equation}
 S_\gamma=S\setminus N_G[\gamma],\qquad
 B_\gamma=B\cap S_\gamma.
 \label{eq:alg-child-sets}
\end{equation}
For chosen $v\in S\setminus B$,
either $v$ is placed into $B$ or a polymer $\gamma$ containing $v$ is selected so the sum in~\eqref{eq:alg-mixed-positive}
splits into
\begin{equation}
 \mathcal Z(S,B)=\mathcal Z(S,B\cup\{v\})+
 \sum_{\substack{v\in\gamma\subseteq S\setminus B\\
                         G[\gamma]\ \mathrm{connected}}}
      a_\gamma\mathcal Z(S_\gamma,B_\gamma).
 \label{eq:alg-decision-sum}
\end{equation}
Conditional probabilities for not selecting $v$ or selecting a polymer $\gamma$ are consequently
\begin{equation}
 p_\varnothing=\frac{\mathcal Z(S,B\cup\{v\})}{\mathcal Z(S,B)},
 \qquad
 p_\gamma=a_\gamma\frac{\mathcal Z(S_\gamma,B_\gamma)}{\mathcal Z(S,B)}.
 \label{eq:alg-pgamma}
\end{equation}
For example, for a single Pauli term, $\tau(D_{\{v\}})=0$, so these probabilities are $p_\varnothing=1-r_{\{v\}}$ and $p_{\{v\}}=r_{\{v\}}$.
Starting at $(S,B)=(V,\varnothing)$ and stopping when $S=B$ samples from the distribution~\eqref{eq:alg-family-law} in at most $m$ decisions. 
We are helped by the fact that polymers of large size have a vanishing contribution to $\mathcal Z(S,B)$. 
\begin{lemma}[Conditional polymer-size bound]
\label{lem:alg-tail}
For every $B\subseteq S\subseteq V$, $v\in S\setminus B$, and integer
$K\ge0$,
\begin{equation}
 \sum_{\substack{v\in\gamma\subseteq S\setminus B\\
             G[\gamma]\ \mathrm{connected},\ |\gamma|>K}}p_\gamma
 \le\frac{2}{\mathcal R^{K+1}-1}.
 \label{eq:alg-tail}
\end{equation}
\end{lemma}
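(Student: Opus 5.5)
The plan is to bound each conditional probability $p_\gamma$ by a polymer weight of the form $2\,\sigma^{|\gamma|}w_A(\gamma)$ times a ratio of partition functions. That ratio would then be controlled by the deletion ratios from Proposition~\ref{prop:scalar}, and the resulting sum compared with Lemma~\ref{lem:rootbound}. The margin $\mathcal R=t/\sigma>1$ supplies geometric decay in $|\gamma|$.

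\textbf{Step 1: rewrite $\mathcal Z(S,B)$ as a polymer partition function.} Expand $Q_G(S\setminus N_G[\bigcup\Gamma])$ in~\eqref{eq:alg-mixed-positive} and merge $\Gamma$ with the inner family $\Gamma'$. A polymer $\gamma\subseteq S$ that avoids $B$ can come from either family, so it carries total activity $a_\gamma-r_\gamma=\tau(D_\gamma)$. A polymer meeting $B$ can only come from $\Gamma'$, so it carries $-r_\gamma$. Hence
\[
 \mathcal Z(S,B)=\sum_{\substack{\Gamma\ \mathrm{compatible}\\ \gamma\subseteq S}}\prod_{\gamma\in\Gamma}c_\gamma,
 \qquad |c_\gamma|\le r_\gamma .
\]
This is a polymer model dominated by the model $Q_{G[S],A[S]}(\sigma)$, whose activities are $-r_\gamma$. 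By~\eqref{eq:alg-cone-margin} we also have $0<a_\gamma\le 2r_\gamma$. Therefore
\[
 p_\gamma\le 2\sigma^{|\gamma|}w_A(\gamma)\,\frac{\mathcal Z(S_\gamma,B_\gamma)}{\mathcal Z(S,B)}.
\]

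\textbf{Step 2: ratio bound (main obstacle).} I would show that
\[
 \frac{\mathcal Z(S_\gamma,B_\gamma)}{\mathcal Z(S,B)}\le\frac{Q_{G[S_\gamma]}(\sigma)}{Q_{G[S]}(\sigma)}\le(1-x)^{-b|\gamma|-1}.
\]
The second inequality is exactly the deletion-chain argument of~\eqref{eq:ratiodelete}, applied at $s=\sigma\le t$. The first inequality is the Scott--Sokal/Shearer comparison: if the alternating-sign model with activities $-r_\gamma$ is positive on all induced subsets, then any model with $|c_\gamma|\le r_\gamma$ satisfies
\[
 \frac{Z_c(S)}{Z_c(S-v)}\ge\frac{Q(S)}{Q(S-v)}>0
\]
for every vertex deletion. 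I would prove this by induction on $|S|$. Use the deletion recurrence~\eqref{eq:delete} for both models, and bound each term $c_\gamma Z_c(S\setminus N[\gamma])/Z_c(S-v)$ in absolute value by the corresponding $Q$-term, using the inductive ratio bounds along a deletion chain. Telescoping these single-vertex ratios along the chain from $S$ down to $S_\gamma$ then gives the displayed inequality. This step needs the most care: $\mathcal Z$ carries the extra conditioning set $B$, and the activities $c_\gamma$ are real but not sign-definite, so the induction must be run on the absolute-value comparison.

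\textbf{Step 3: sum the tail.} Combining Steps 1 and 2 and writing $\sigma=t/\mathcal R$, the tail sum is at most
\[
 \frac{2}{1-x}\sum_{\substack{\gamma\ni v,\ |\gamma|>K}}\mathcal R^{-|\gamma|}u^{|\gamma|}w_A(\gamma),\qquad u=\frac{t}{(1-x)^b}.
\]
Every term in the sum has $\mathcal R^{-|\gamma|}\le\mathcal R^{-(K+1)}$. Lemma~\ref{lem:rootbound} bounds the remaining sum $\sum_{\gamma\ni v}u^{|\gamma|}w_A(\gamma)$ by $x(1+x)$, and $x(1+x)/(1-x)\le1$ by~\eqref{eq:fullroot}. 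The tail is therefore at most $2\mathcal R^{-(K+1)}\le 2/(\mathcal R^{K+1}-1)$. If the constants in Step 2 turn out slightly worse, the stated $-1$ in the denominator leaves room. An alternative is to insert a factor $\lambda^{|\gamma|}$ into the weights with $\lambda\in[1,\mathcal R]$, which still keeps $\sigma\lambda\le t$. Bounding the full generating sum at $\lambda=\mathcal R$ and applying Markov's inequality in $|\gamma|$ then gives a bound of the stated form.
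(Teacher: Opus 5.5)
Your Step~1 and the comparison half of Step~2 are sound. Write $Q(\cdot)=Q_{G[\cdot],A[\cdot]}(\sigma)$. The Dobrushin/Scott--Sokal induction over label deletions does give $\mathcal Z(S_\gamma,B_\gamma)/\mathcal Z(S,B)\le Q(S_\gamma)/Q(S)$, because polymers inside $S_\gamma$ keep their activities and Proposition~\ref{prop:scalar} gives positivity on every label subset.

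The gap is the second inequality, $Q(S_\gamma)/Q(S)\le(1-x)^{-b|\gamma|-1}$, which is not what \eqref{eq:ratiodelete} says. That chain starts at $G-v$, so it bounds $Q(S_\gamma)/Q(S\setminus\{v\})$. Deleting $v$ itself contributes the further factor $Q(S\setminus\{v\})/Q(S)>1$, and \eqref{eq:ratio} bounds this by $(1-x)^{-1}$ only when $\deg_{G[S]}(v)\le b$. Now suppose $v$ has full degree $\Delta$ in $G[S]$, which can happen already at the first decision of Algorithm~\ref{alg:product-sampler}. Then the chain alone costs $(1-x)^{-b|\gamma|-1}$, and the extra factor is controlled only through \eqref{eq:fullroot}, whose lower bound $1-x(1+x)/(1-x)$ is zero at $\Delta=2$.

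Concretely, take $\Delta=2$, $A$ empty, a long path, and $\gamma=\{v\}$ with $v$ interior. Then $Q(S)/Q(S\setminus\{v\})\approx1-y(1+y)/(1-y)$, where $y$ is the cavity fixed point at $\sigma$, so $Q(S_\gamma)/Q(S)\to\infty$ as $\sigma\uparrow t_2$. At $\sigma=0.17$ it is already about $12$, against the claimed $(1-x)^{-2}\approx2.9$. If you instead bound the extra factor separately via the margin, $Q(S\setminus\{v\})/Q(S)<\mathcal R/(\mathcal R-1)$, Step~3 gives $2\mathcal R^{-K}/(\mathcal R-1)$. This is strictly larger than $2/(\mathcal R^{K+1}-1)$ for every $K\ge1$, so the $-1$ in the denominator does not absorb the loss. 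The Markov variant through $\sum_\gamma p_\gamma\mathcal R^{|\gamma|}$ meets the same factor.

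The missing idea is to keep that common factor exact rather than bounding it polymer by polymer. Set $\theta_\gamma=\sigma^{|\gamma|}w_A(\gamma)Q(S_\gamma)/Q(S\setminus\{v\})$. Then \eqref{eq:delete} gives $Q(S)/Q(S\setminus\{v\})=1-\sum_{\gamma\ni v}\theta_\gamma$, so your Steps~1--2 bound the tail by $2\sum_{|\gamma|>K}\theta_\gamma/(1-\sum_{\gamma\ni v}\theta_\gamma)$. By \eqref{eq:ratiodelete}, Lemma~\ref{lem:rootbound} and \eqref{eq:fullroot}, $\theta_\gamma\le\mathcal R^{-|\gamma|}\kappa_\gamma$ with $\kappa_\gamma=u^{|\gamma|}w_A(\gamma)/(1-x)$ and $\sum_\gamma\kappa_\gamma\le1$. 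The bound is increasing in each $\theta_\gamma$ and, after substitution, linear-fractional in $\kappa$. It is therefore maximized with all mass at $|\gamma|=K+1$, which gives exactly $2/(\mathcal R^{K+1}-1)$.

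The paper reaches the same constant with no ratio estimates at all. It adds $h\,a_\gamma$ to the weights of the eligible polymers through $v$ of size greater than $K$. These pairwise intersect at $v$, so $\mathcal Z_h=\mathcal Z(S,B)(1+h\,p_{>K})$ is linear in $h$. For $|h|\le(\mathcal R^{K+1}-1)/2$, every perturbed weight has modulus at most $(1+2|h|)r_\gamma\le\mathcal R^{|\gamma|}r_\gamma=t^{|\gamma|}w_A(\gamma)$. Lemma~\ref{lem:alg-polydisc} then forces the root $-1/p_{>K}$ outside that disk.
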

The proof is in Section~\ref{sec:alg-oracle}. 
We will use a partition function estimator $\operatorname{EstimateZ}(S,B,\xi)$ that returns an estimate of $\mathcal Z(S,B)$ to within a factor $e^{\pm\xi}$ error (see Section~\ref{sec:alg-oracle}).
We will set cutoff and error to
\begin{equation}
 K=\left\lceil\frac4\delta\log\left(1+\frac{16m}{\epsilon}\right)\right\rceil,
 \qquad \xi=\frac{\epsilon}{100m}.
 \label{eq:alg-cutoff}
\end{equation}
Let $\mathcal C$ be the eligible polymers of size at most $K$. Our sampler will estimate weights
\begin{equation}
 \widehat W_\varnothing=\operatorname{EstimateZ}(S,B\cup\{v\},\xi),
 \qquad
 \widehat W_\gamma=\widehat a_\gamma
                   \operatorname{EstimateZ}(S_\gamma,B_\gamma,\xi)
 \label{eq:alg-continuation-weights}
\end{equation}
and draw samples according to
\begin{equation}
 \Pr(Y=y)=\frac{\widehat W_y}
 {\widehat W_\varnothing+\sum_{\gamma\in\mathcal C}\widehat W_\gamma},
 \qquad y\in\{\varnothing\}\cup\mathcal C.
 \label{eq:alg-draw}
\end{equation}
With exact weights, this is the exact decision law conditioned on not selecting a polymer larger than $K$. Its total variation distance from the full decision law is the omitted probability, bounded by Lemma~\ref{lem:alg-tail}. 

There are $(\Delta+1)^{O(K)}$ candidate polymers containing $v$, which is enumerable in polynomial time~\cite[Lemmas~2.4--2.5]{HPR}.
The oracle estimates the full partition function
$\mathcal Z(S,B)$ as the value $K$ restricts only the polymers that can be selected. The full algorithm to sample polymers is given below in Algorithm~\ref{alg:product-sampler}.

\begin{algorithm}[H]
\caption{$\operatorname{SamplePolymers}$: approximate sampling of~\eqref{eq:alg-family-law}}
\label{alg:product-sampler}
\small
\begin{algorithmic}[1]
\Require The preprocessed instance and $K,\xi$ from~\eqref{eq:alg-cutoff}.
\State $S\gets V$; $B\gets\varnothing$; $\Gamma\gets\varnothing$.
\While{$S\setminus B\ne\varnothing$}
 \State Let $v$ be the smallest label in $S\setminus B$.
 \State List $\mathcal C=\{\gamma\subseteq S\setminus B:\ v\in\gamma,\ |\gamma|\le K,
                                     \ G[\gamma]\text{ connected}\}$.
 \State $\widehat W_\varnothing\gets\operatorname{EstimateZ}(S,B\cup\{v\},\xi)$.
 \For{$\gamma\in\mathcal C$}
  \State Compute $\widehat a_\gamma>0$ to relative error $\xi$.
  \State $S_\gamma\gets S\setminus N_G[\gamma]$; $B_\gamma\gets B\cap S_\gamma$.
  \State $\widehat W_\gamma\gets\widehat a_\gamma\operatorname{EstimateZ}(S_\gamma,B_\gamma,\xi)$.
 \EndFor
 \State Draw $Y\in\{\varnothing\}\cup\mathcal C$ with probabilities~\eqref{eq:alg-draw}.
 \If{$Y=\varnothing$}
  \State $B\gets B\cup\{v\}$.
 \Else
  \State $\gamma\gets Y$; $\Gamma\gets\Gamma\cup\{\gamma\}$.
  \State $S\gets S\setminus N_G[\gamma]$; then $B\gets B\cap S$ using the updated $S$.
 \EndIf
\EndWhile
\State \Return $\Gamma$.
\end{algorithmic}
\end{algorithm}

\subsection{A partition function oracle from zero-freeness}
\label{sec:alg-oracle}
We will use Barvinok's method to estimate the partition function $\mathcal Z(S,B)$~\cite{HPR,BarvinokBook}.
To proceed, let us rewrite the partition function~\eqref{eq:alg-mixed-positive} in a convenient form. For polymer weights $v_\gamma$ define
\begin{equation}
 \Xi(\mathbf v)=\sum_{\Gamma\ \mathrm{compatible}}
                         \prod_{\gamma\in\Gamma}v_\gamma.
 \label{eq:alg-Xi}
\end{equation}
The partition function of interest is $\mathcal Z(S,B)=\Xi(\mathbf v^{S,B})$, where
\begin{equation}
 v_\gamma^{S,B}=
 \begin{cases}
  0,&\gamma\not\subseteq S
       \quad\text{(unavailable)},\\
  -r_\gamma,&\gamma\subseteq S,\ \gamma\cap B\ne\varnothing
       \quad\text{(cannot be selected)},\\
  \tau(D_\gamma),&\gamma\subseteq S\setminus B
       \quad\text{(selection undecided)}.
 \end{cases}
 \label{eq:alg-mixed}
\end{equation}
To see why, note that an undecided polymer has contributions from both terms in $\tau(D_\gamma)=-r_\gamma+a_\gamma$ where $-r_\gamma$ and $a_\gamma$ are the weights given if the polymer is not selected or selected respectively. A polymer meeting $B$ cannot be selected so it supplies a contribution $-r_\gamma$. 
Expand these two choices for every undecided polymer in~\eqref{eq:alg-Xi} and collect the selected $a_\gamma$ factors, exactly as in Proposition~\ref{prop:positive}. This gives $\mathcal Z(S,B)$ as required.

Barvinok's method~\cite{BarvinokBook} approximates $\log f_{S,B}(1)=\log\mathcal Z(S,B)$ via its truncated Taylor series in $q$ about zero. An additive error $\xi$ approximation to $\log  Z(S,B)$ gives a multiplicative factor approximation of $e^{\pm\xi}$ to the partition function $ Z(S,B)$. 
The following proposition states the final error guarantee which we prove at the end of this subsection.
\begin{proposition}[Partition function oracle]
\label{prop:alg-estimateZ}
For every $B\subseteq S\subseteq V$ and $0<\xi<1$, a deterministic
algorithm $\operatorname{EstimateZ}(S,B,\xi)$ returns $\widehat Z>0$ with
\begin{equation}
 e^{-\xi}\le\frac{\widehat Z}{\mathcal Z(S,B)}\le e^\xi.
 \label{eq:alg-Z-contract}
\end{equation}
The algorithm truncates the Taylor series of $\log f_{S,B}(q)$ about $q=0$ to order
\begin{equation}
 L=\left\lceil\frac4\delta\log\frac{16(m+1)}{\delta\xi}\right\rceil.
 \label{eq:alg-oracle-cost}
\end{equation}
The runtime is $\operatorname{poly}(n,m,L)(\Delta+1)^{O(L)}$. The order $L$ bounds the Taylor series truncation error by $\xi/4$ (see Remark~\ref{rem:alg-precision} for numerical precision).
\end{proposition}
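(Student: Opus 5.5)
We need a plan for Proposition~\ref{prop:alg-estimateZ}, the Barvinok oracle. The interpolation $f_{S,B}(q)$ is not defined in the text above, so I will define it myself: scale every polymer weight by $q^{|\gamma|}$, so $f_{S,B}(q)=\Xi(q^{|\cdot|}\mathbf v^{S,B})$. This is a polynomial in $q$ of degree at most $m$, with $f(0)=1$ and $f(1)=\mathcal Z(S,B)$. The plan has three steps: show $f$ has no zeros on a disk of radius $\mathcal R'>1$ (a margin such as $\mathcal R^{1/2}$ or $\mathcal R$) with a quantitative margin; deduce truncation error via Barvinok's lemma; and compute the first $L$ Taylor coefficients of $\log f$ in $\operatorname{poly}(n,m)(\Delta+1)^{O(L)}$ time.

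\textbf{Zero-freeness.} I will express $f$ through the separable decomposition. Expanding undecided polymers as $-r_\gamma+a_\gamma$, as in Proposition~\ref{prop:positive}, gives
\[
f(q)=\sum_{\Gamma\subseteq S\setminus B}Q^{(q)}(S_\Gamma)\prod_\gamma q^{|\gamma|}a_\gamma ,
\]
where $Q^{(q)}$ is the polynomial~\eqref{eq:alg-Q} with $r_\gamma$ replaced by $q^{|\gamma|}r_\gamma$. For complex $q$ this is no longer positive, so I would argue differently: treat $\Xi$ as a hard-core polymer gas and apply the Kotecký–Preiss / Scott–Sokal (Dobrushin) criterion with weights $|q|^{|\gamma|}|v_\gamma|$.

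By~\eqref{eq:alg-cone-margin}, $|v_\gamma|\le r_\gamma=\sigma^{|\gamma|}w_A(\gamma)$ in all three cases, since $|\tau(D_\gamma)|\le\|D_\gamma\|_{\mathrm P,1}$. For $|q|\le\mathcal R$ we then have $|q|^{|\gamma|}|v_\gamma|\le t^{|\gamma|}w_A(\gamma)$. The Scott–Sokal theorem says that if the alternating-sign polymer gas with weights $-\rho_\gamma$ has all subset evaluations positive, then the complex gas with $|v_\gamma|\le\rho_\gamma$ is zero-free, and moreover $|\log(\Xi(S)/\Xi(S-v))|$ is controlled by the positive case. The needed positivity, $Q_{G[S],A[S]}(t)>0$ for all $S$, is exactly Proposition~\ref{prop:scalar}. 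Hence $f$ is zero-free on $|q|\le\mathcal R$.

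\textbf{Bound on $\log f$.} I would also extract $|\log f(q)|\le m\log\frac{1}{1-x}$, or a similar bound, by telescoping over vertex deletions. Each factor $\Xi(S)/\Xi(S-v)$ has modulus bounded below by the real analogue, which is at least $1-x$ by~\eqref{eq:ratio}; the degree-$(b+1)$ root case gives a constant factor via~\eqref{eq:fullroot}. A similar bound above gives $|\log f|\le O(m)$ on the disk. The margin $\mathcal R=t/\sigma$ satisfies $\mathcal R-1\gtrsim\delta$ for fixed $\Delta$, since $\beta J\le(1-\delta)z_\Delta$.

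\textbf{Truncation.} I would apply the standard lemma for a function analytic and zero-free on $|q|<\mathcal R$ whose logarithm is bounded by $M$ there. The Taylor tail of $\log f$ at $q=1$ after order $L$ is at most $\frac{M\mathcal R^{-L}}{1-1/\mathcal R}$, or Barvinok's $\frac{m}{(L+1)\mathcal R^{L}(\mathcal R-1)}$ bound for polynomials via the roots. With $\log\mathcal R\gtrsim\delta/4$, choosing $L$ as in~\eqref{eq:alg-oracle-cost} makes this at most $\xi/4$.

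\textbf{Computing the coefficients.} I would compute the coefficients of $f$ up to degree $L$ as sums over compatible families of total size at most $L$. Following~\cite{HPR,PatelRegts}, I would instead compute the cluster/power-sum coefficients of $\log f$ directly. Enumerate connected polymers of size at most $L$ in $(\Delta+1)^{O(L)}$ per vertex, compute each $v_\gamma$ exactly (a $2^{|\gamma|}$-term Pauli computation, trace in $\operatorname{poly}(n)$), and use Newton identities restricted to connected structures (Lemma~\ref{lem:alg-counting}). Exponentiating the truncated series and rounding then gives $\widehat Z$ with the remaining error budget $3\xi/4$ absorbed by numerics.

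\textbf{Main obstacle.} I expect the hardest step to be the uniform bound on $|\log f|$, or the Scott–Sokal comparison, on the complex disk. The positivity from Proposition~\ref{prop:scalar} covers full vertex sets, but the gas here has mixed weights: polymers meeting $B$ are fixed at $-r_\gamma$. One must check that these are still dominated by $\rho_\gamma=t^{|\gamma|}w_A(\gamma)$ and that the Shearer-type comparison applies with polymer-level rather than vertex-level compatibility. I would first try to reduce this to Proposition~\ref{prop:scalar} by viewing polymers as vertices of the polymer compatibility graph, and invoking Scott–Sokal's domination theorem there.
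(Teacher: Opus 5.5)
Your plan is essentially the paper's proof: the same interpolation $f_{S,B}(q)=\Xi((v^{S,B}_\gamma q^{|\gamma|})_\gamma)$, zero-freeness on $|q|\le\mathcal R$ from $|v^{S,B}_\gamma|\le r_\gamma$ via the Scott--Sokal criterion applied to the polymer gas dominated by $t^{|\gamma|}w_A(\gamma)$, fed by the positivity $Q_{G,A}(s)>0$ along $0\le s\le t$ from Proposition~\ref{prop:scalar}. The remaining steps also match: Barvinok's root-factorization tail bound $m\mathcal R^{-L}/((L+1)(\mathcal R-1))$ with $\log\mathcal R\ge\delta/4$, and the connected-set coefficient computation of Lemma~\ref{lem:alg-counting}. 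The uniform bound on $|\log f_{S,B}|$ that you flag as the main obstacle is unnecessary, because the root-based bound needs only $\deg f_{S,B}\le m$ and zero-freeness on the disk; likewise the mixed weights on polymers meeting $B$ cause no trouble, since the Scott--Sokal conclusion covers the whole polydisc $|v_\gamma|\le t^{|\gamma|}w_A(\gamma)$ regardless of sign.
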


To apply Barvinok's algorithm, we must establish a zero free region for $\Xi$~\cite{BarvinokBook}, which we can obtain rather directly from the positivity proven in Proposition~\ref{prop:scalar}.

\begin{lemma}[Uniform zero-free region]
\label{lem:alg-polydisc}
If $|v_\gamma|\le t^{|\gamma|}w_A(\gamma)$ for every polymer, then
$\Xi(\mathbf v)\ne0$. In particular,
\begin{equation}
 f_{S,B}(q):=\Xi\bigl((v_\gamma^{S,B}q^{|\gamma|})_\gamma\bigr)
 \ne0\qquad (|q|\le\mathcal R).
 \label{eq:alg-univariate}
\end{equation}
Here $f_{S,B}(0)=1$, $f_{S,B}(1)=\mathcal Z(S,B)$, and
$\deg f_{S,B}\le m$.
\end{lemma}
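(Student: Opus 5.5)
The plan is to rerun the vertex-deletion induction of Proposition~\ref{prop:scalar} with complex polymer weights, rather than appeal to a general Shearer/Scott--Sokal theorem. I avoid the general route because Scott--Sokal would need positivity on arbitrary subfamilies of polymers, while Proposition~\ref{prop:scalar} only gives it on families of the form ``all polymers inside a vertex set $S$''.

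\textbf{Setup.} Fix weights $\mathbf v$ with $|v_\gamma|\le t^{|\gamma|}w_A(\gamma)$. For a vertex set $W\subseteq V$, let $\Xi_W$ denote $\Xi$ restricted to compatible families of polymers $\gamma\subseteq W$. Polymers are connected sets of $G[W]$, and $w_A(\gamma)$ depends only on $\gamma$, so restricting $W$ never changes a weight. Unavailable polymers just have $v_\gamma=0$, which still satisfies the bound.

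\textbf{Induction.} I induct on $|W|$ and prove two things at once: $\Xi_W\neq 0$, and
\[
 \left|\frac{\Xi_W}{\Xi_{W\setminus\{v\}}}\right|\ge 1-x
 \quad\text{whenever }\deg_{G[W]}(v)\le b .
\]
Splitting according to the polymer containing $v$ gives the analogue of \eqref{eq:delete}:
\[
 \Xi_W=\Xi_{W\setminus\{v\}}+\sum_{\substack{v\in\gamma\subseteq W\\ G[\gamma]\ \mathrm{connected}}} v_\gamma\,\Xi_{W\setminus N_G[\gamma]} .
\]
I write $\Xi_{W\setminus N_G[\gamma]}/\Xi_{W\setminus\{v\}}$ as a telescoping product. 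The deletion order is the one used for \eqref{eq:ratiodelete}: the rest of $\gamma$ along a spanning tree rooted at $v$, parents before children, then the external neighbours. Every deleted vertex then has current degree at most $b$, so the induction hypothesis bounds each factor in modulus by $(1-x)^{-1}$. By the count \eqref{eq:neighborhood}, the whole product has modulus at most $(1-x)^{-b|\gamma|}$, or $(1-x)^{-b|\gamma|-1}$ when $\deg(v)=b+1$.

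\textbf{Closing the bound.} Dividing by $\Xi_{W\setminus\{v\}}$ and using the triangle inequality gives
\[
 \left|1-\frac{\Xi_W}{\Xi_{W\setminus\{v\}}}\right|\le \sum_{\gamma\ni v}u^{|\gamma|}w_A(\gamma),
\]
where $u$ is as in \eqref{eq:u}. Lemma~\ref{lem:rootbound} bounds the right-hand side by $x$ when $\deg(v)\le b$, which gives the modulus ratio bound. When $\deg(v)=b+1$, the extra factor gives the bound $x(1+x)/(1-x)$, and Lemma~\ref{lem:rootbound} makes it strict; as in \eqref{eq:fullroot} this is at most $1$, so $\Xi_W\ne0$. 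The main care point is that only moduli and the triangle inequality enter, so Lemma~\ref{lem:rootbound}, which bounds a positive sum, applies verbatim. One should also check that the base case $|\Xi_{\{v\}}|=|1+v_{\{v\}}|\ge1-t\ge1-x$ holds.

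\textbf{Specialization to $f_{S,B}$.} I need $|v^{S,B}_\gamma q^{|\gamma|}|\le t^{|\gamma|}w_A(\gamma)$ for $|q|\le\mathcal R$. For undecided polymers, $|\tau(D_\gamma)|\le\normp{D_\gamma}\le r_\gamma$ by \eqref{eq:alg-cone-margin}; blocked polymers have modulus exactly $r_\gamma$; unavailable ones are $0$. Hence the modulus is at most $\mathcal R^{|\gamma|}\sigma^{|\gamma|}w_A(\gamma)=t^{|\gamma|}w_A(\gamma)$, since $\mathcal R\sigma=t$. The remaining claims are direct: $f_{S,B}(0)=1$ because only the empty family survives; $f_{S,B}(1)=\mathcal Z(S,B)$ by the identity derived after \eqref{eq:alg-mixed}; and $\deg f_{S,B}\le m$ because a compatible family uses disjoint labels, so $\sum_{\gamma\in\Gamma}|\gamma|\le m$.
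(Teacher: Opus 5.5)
Your argument is correct, but it takes a different route from the paper. The paper proves the lemma in one step by citing the implication (a)$\Rightarrow$(c) of Theorem~2.10 in \cite{ScottSokal}. With $M_\gamma=t^{|\gamma|}w_A(\gamma)$, the curve $\mathbf h(s)=(-s^{|\gamma|}w_A(\gamma))_\gamma$, $0\le s\le t$, runs from $\mathbf 0$ to $-\mathbf M$ inside the nonpositive orthant. Proposition~\ref{prop:scalar} gives $\Xi(\mathbf h(s))=Q_{G,A}(s)>0$ along it, and Shearer-region theory upgrades positivity on this single curve to zero-freeness on the whole polydisc $|v_\gamma|\le M_\gamma$.

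Your stated reason for avoiding Scott--Sokal is therefore off. The path criterion needs positivity of the full polymer partition function only along one path. Positivity on arbitrary subfamilies of polymers is one of the equivalent conclusions there, not a hypothesis.

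Your route instead observes that the deletion induction of Proposition~\ref{prop:scalar} used only two things: upper bounds on $Q_{G-N_G[\gamma]}/Q_{G-v}$, and the positive sum bounded in Lemma~\ref{lem:rootbound}. It therefore transfers verbatim to complex weights through the triangle inequality. The following steps all check out:
\begin{itemize}
\item the deletion order;
\item the count \eqref{eq:neighborhood};
\item the strict inequality at degree $b+1$, which is essential for $\Delta=2$;
\item the specialization $|v^{S,B}_\gamma q^{|\gamma|}|\le\mathcal R^{|\gamma|}r_\gamma=t^{|\gamma|}w_A(\gamma)$, where you correctly supply $|\tau(D_\gamma)|\le\normp{D_\gamma}\le r_\gamma$, which the paper leaves implicit.
\end{itemize}

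The paper's version is shorter and ties the result directly to Shearer's lemma. Yours is self-contained and avoids the external theorem. It also yields the quantitative bound $|\Xi_W|\ge(1-x)|\Xi_{W\setminus\{v\}}|$ for $\deg_{G[W]}(v)\le b$. Finally, it shows that Proposition~\ref{prop:scalar} and this lemma are essentially one statement, since real positivity along $\mathbf h(s)$ follows from polydisc zero-freeness by continuity from $\Xi(\mathbf 0)=1$.
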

\begin{proof}
$\Xi$ is the independence polynomial of a graph whose vertices are polymers, and we can apply known results for the zero free region of the independence polynomial~\cite{ScottSokal}.
To apply this, set $M_\gamma=t^{|\gamma|}w_A(\gamma)$ and let $\mathbf M = (M_\gamma)_\gamma$ denote the vector storing values of $M_\gamma$ for all $\gamma$. Implication (a)$\Rightarrow$(c) of Theorem~2.10 of~\cite{ScottSokal} says that $\Xi$ is nonzero throughout $|v_\gamma|\le M_\gamma$ if there is a continuous path $\mathbf h(s)$ from $\mathbf h(0)=\mathbf 0$ to $\mathbf h(t)=-\mathbf M$ strictly in the nonpositive real orthant (i.e. no entry of $\mathbf h(s)$ can be positive for $0\le s\le t$) where $\Xi(\mathbf h(s))>0$ is strictly positive in that path.
Proposition~\ref{prop:scalar} provides such a path where
\[
 \mathbf h(s)=(-s^{|\gamma|}w_A(\gamma))_\gamma,\qquad 0\le s\le t,
\]
satisfying $\mathbf h(0)=\mathbf0$, $\mathbf h(t)=-\mathbf M$, and $\Xi(\mathbf h(s))=Q_{G,A}(s)>0$.
For every $(S,B)$ and $|q|\le\mathcal R$,
\[
 |v_\gamma^{S,B}q^{|\gamma|}|
 \le r_\gamma\mathcal R^{|\gamma|}
 =M_\gamma.
\]
Thus $f_{S,B}(q)\ne0$ on the disk $|q|\le\mathcal R$. 
$\deg f_{S,B}\le m$ since any compatible family of polymers is an independent set and there are at most $m$ terms in the Hamiltonian. 
\end{proof}

The zero-free region shown above lets us prove the tail bound in Lemma~\ref{lem:alg-tail}.
\begin{proof}[Proof of Lemma~\ref{lem:alg-tail}]
Fix the current $(S,B)$ and label $v$. The sum in~\eqref{eq:alg-tail}
is the probability of selecting a polymer through $v$ with more than
$K$ labels. Denote the set of these eligible large polymers by
\[
 \mathscr L=\{\gamma:\ v\in\gamma\subseteq S\setminus B,
                 \ G[\gamma]\text{ connected},\ |\gamma|>K\},
 \qquad p_{>K}=\sum_{\gamma\in\mathscr L}p_\gamma.
\]
For a complex number $h$, define a new weight assignment and its
partition function by
\[
 u_\gamma(h)=v_\gamma^{S,B}+h a_\gamma\mathbf1_{\{\gamma\in\mathscr L\}},
 \qquad
 \mathcal Z_h=\Xi(\mathbf u(h))
             =\sum_{\Theta\ \mathrm{compatible}}
                         \prod_{\theta\in\Theta}u_\theta(h).
\]
Every member of $\mathscr L$ contains $v$, so a compatible family
$\Theta$ contains at most one such member. Expanding the displayed
products therefore yields a constant term and a linear term only:
\begin{align}
 \mathcal Z_h
 &=\mathcal Z(S,B)+h\sum_{\gamma\in\mathscr L}a_\gamma
     \sum_{\substack{\Theta\ \mathrm{compatible}\\
             \theta\ \mathrm{compatible\ with}\ \gamma\ (\theta\in\Theta)}}
                      \prod_{\theta\in\Theta}v_\theta^{S,B}\notag\\
 &=\mathcal Z(S,B)+h\sum_{\gamma\in\mathscr L}
                         a_\gamma\mathcal Z(S_\gamma,B_\gamma)
  =\mathcal Z(S,B)(1+h p_{>K}).
 \label{eq:alg-linear-perturbation}
\end{align}
In the inner sum, all polymers incompatible with $\gamma$ are excluded.
The nonzero remaining weights are exactly those on
$S_\gamma=S\setminus N_G[\gamma]$, with forbidden labels
$B_\gamma=B\cap S_\gamma$; all their weights are unchanged.
This proves the second equality. The last equality is the conditional
probability formula~\eqref{eq:alg-pgamma}.

If $|h|\le(\mathcal R^{K+1}-1)/2$, every changed weight obeys
\[
 |u_\gamma(h)|\le r_\gamma+2|h|r_\gamma
 \le\mathcal R^{|\gamma|}r_\gamma=t^{|\gamma|}w_A(\gamma).
\]
Unchanged weights obey the same bound, so Lemma~\ref{lem:alg-polydisc}
excludes a zero of $\mathcal Z_h$ in this disk. If $p_{>K}>0$, the
zero $h=-1/p_{>K}$ of~\eqref{eq:alg-linear-perturbation} must lie
outside the disk. This gives~\eqref{eq:alg-tail}.
\end{proof}

Now we turn to estimating $f_{S,B}(q)$ in~\eqref{eq:alg-univariate}. We will combine Barvinok's error bound in the Taylor truncation~\cite[Lemma~2.2.1]{BarvinokBook} with the polymer enumeration and runtimes from~\cite[Theorem~2.2]{HPR}. Implementations of this are developed in~\cite{PatelRegts,YYZ}.

\Needspace{16\baselineskip}
\begin{lemma}[Approximating the conditional partition function]
\label{lem:alg-counting}
Fix $(S,B)$ and denote the Taylor series of $\log f_{S,B}$ as
\begin{equation}
 \log f_{S,B}(q)=\sum_{j\ge1}c_jq^j,
 \label{eq:alg-generic-polymer}
\end{equation}
using the logarithm with value zero at $q=0$.
If a weight $v_\gamma^{S,B}$ on $k$ labels can be evaluated to absolute error $2^{-p}$ in $2^{O(k)}\operatorname{poly}(n,m,k,p)$ time, then the Taylor approximation $\exp(\sum_{j=1}^L c_j)$ can be evaluated in $\operatorname{poly}(n,m,L)(\Delta+1)^{O(L)}$ time
(with numerical errors as in Remark~\ref{rem:alg-precision}).
Its truncation error satisfies
\begin{equation}
 \left|\log\mathcal Z(S,B)-\sum_{j=1}^L c_j\right|
 \le\frac{m\mathcal R^{-(L+1)}}{(L+1)(1-\mathcal R^{-1})}.
 \label{eq:alg-log-tail}
\end{equation}
Thus an error at most $\xi$ in~\eqref{eq:alg-log-tail} gives an
$e^{\pm\xi}$ multiplicative approximation before numerical error.
\end{lemma}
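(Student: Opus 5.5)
The proof has two parts: the truncation bound \eqref{eq:alg-log-tail}, which comes from Barvinok's argument applied to the zero-free disk of Lemma~\ref{lem:alg-polydisc}, and the runtime, which comes from computing the $c_j$ by cluster or coefficient counting in the style of~\cite{PatelRegts,HPR}.

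\emph{Truncation error.} Since $f_{S,B}$ is a polynomial of degree at most $m$ with $f_{S,B}(0)=1$, we can factor it as
\[
 f_{S,B}(q)=\prod_{k=1}^{d}\left(1-\frac{q}{\zeta_k}\right),\qquad d\le m .
\]
By Lemma~\ref{lem:alg-polydisc}, every root satisfies $|\zeta_k|>\mathcal R$. Using the principal branch of each factor gives $\log f_{S,B}(q)=-\sum_k\sum_{j\ge1}(q/\zeta_k)^j/j$, so that
\[
 c_j=-\frac1j\sum_k\zeta_k^{-j},\qquad |c_j|\le \frac{m\mathcal R^{-j}}{j}.
\]
This branch agrees with the logarithm normalized to vanish at zero, by analyticity on the disk. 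At $q=1$ the tail is therefore bounded by
\[
 \sum_{j>L}\frac{m\mathcal R^{-j}}{j}\le\frac{m}{L+1}\cdot\frac{\mathcal R^{-(L+1)}}{1-\mathcal R^{-1}},
\]
which is \eqref{eq:alg-log-tail}. Because $f_{S,B}(1)=\mathcal Z(S,B)>0$ is real, the real part of the error controls the ratio. The imaginary part can be discarded, since the computed $c_j$ are real.

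\emph{Computing $c_1,\dots,c_L$.} We have $f_{S,B}(q)=\sum_k e_kq^k$, where $e_k$ is the sum over compatible families $\Gamma$ with $\sum_{\gamma\in\Gamma}|\gamma|=k$ of the products of the weights $v_\gamma^{S,B}$. Newton's identities relate the $c_j$ to $e_1,\dots,e_L$ in $O(L^2)$ arithmetic operations, via $k e_k=\sum_{j=1}^k j c_j e_{k-j}$. The difficulty is that $e_k$ is not a local count. Families of total size $k$ may be spread over the whole graph, so their number grows like $m^k$. Following~\cite{PatelRegts,HPR}, we instead express $c_j$ as a sum over connected objects, namely the cluster expansion. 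Here $c_j$ equals the sum over clusters, which are multisets of polymers whose incompatibility graph is connected and whose total size is $j$, of Ursell-function-weighted products. The union of the labels in such a cluster is a connected set in $G$ of size at most $j$.

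Following~\cite[Theorem~2.2]{HPR}, I would carry out the computation in four steps.
\begin{enumerate}
\item Enumerate all connected label sets $U$ of size at most $L$. There are $m(e\Delta)^{O(L)}$ of them, and they can be listed in time polynomial in their number~\cite[Lemmas~2.4--2.5]{HPR}.
\item For each $U$, compute the coefficients of $\log$ of the restricted polynomial $f_U$ up to order $|U|$ by brute-force expansion over polymers inside $U$. This costs $(\Delta+1)^{O(L)}$ time, since $U$ has at most $2^{|U|}$ subsets.
\item Obtain the connected contribution of $U$ by M\"obius inversion over subsets of $U$.
\item Sum these contributions to get $c_j$.
\end{enumerate}
The identity behind step~3 is that $\log f$ is additive over connected pieces: the coefficient of $q^j$ in $\log f_{S,B}$ equals the sum over connected $U$ of the terms that use exactly the labels in $U$, because disconnected contributions factor and cancel in the logarithm. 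Each weight $v_\gamma^{S,B}$ is one of $0$, $-r_\gamma$, or $\tau(D_\gamma)=2^{-n}\Tr D_\gamma$. Computing these requires expanding $D_\gamma$ over at most $2^{|\gamma|}$ Pauli strings, which fits the assumed $2^{O(k)}\operatorname{poly}$ cost.

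The main obstacle is the justification of step~3, that $c_j$ is supported on connected sets of at most $j$ labels. I would argue it directly from multiplicativity: if the label set splits into two groups with no edges of $G$ between them, then $f$ factors as a product of the two restricted polynomials, and the logarithm is a sum. Inclusion--exclusion then isolates the contribution from each connected $U$. The remaining numerical precision issues are deferred to Remark~\ref{rem:alg-precision}.
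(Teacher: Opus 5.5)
Your plan matches the paper's proof. The tail bound comes from factoring $f_{S,B}$ over its roots outside $|q|\le\mathcal R$. The $c_j$ are then assembled as $c_j=\sum_T h_j(T)$, where $h_j(T)=\sum_{U\subseteq T}(-1)^{|T|-|U|}[q^j]\log f_U(q)$ and the sum runs over connected $T$ with $|T|\le j$.

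\textbf{Step~2 computes the wrong numbers as written.} You truncate $\log f_U$ at order $|U|$, but a connected set $T$ can contribute to $c_j$ for every $j\ge|T|$. You say this yourself: a cluster of total size $j$ has label union of size \emph{at most} $j$. Take a single label $v$ with nonzero weight $w=v^{S,B}_{\{v\}}$, for example $v\in B$, so $w=-r_{\{v\}}$. Then $\log(1+wq)$ gives $c_2=-w^2/2$. The only connected set is $\{v\}$, and your step~2 keeps only its order-one coefficient, so the algorithm returns $c_2=0$. The fix is to expand $\log f_U$ to order $L$ for every $U\subseteq T$, connected or not, since the M\"obius inversion needs all of them. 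Each $f_U$ has at most $2^{|U|}$ monomials, so this costs $\operatorname{poly}(L)2^{O(|U|)}$ and leaves the runtime bound intact.

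\textbf{The size bound $|T|\le j$ is not covered by your direct argument.} The multiplicativity argument in your last paragraph only shows $h_j(T)=0$ for disconnected $T$. The bound $|T|\le j$ is what restricts the sum to the $m(\Delta+1)^{O(L)}$ connected sets of size at most $L$, and it needs a separate argument. Quoting the cluster expansion is acceptable as a cited fact. The paper instead proves it directly:
\begin{itemize}
\item Expand $\log f_U=\sum_{r\ge1}\frac{(-1)^{r-1}}{r}(f_U-1)^r$.
\item An order-$j$ term comes from nonempty $W_1,\dots,W_r$ with $\sum_\ell|W_\ell|=j$. Its label set $W=\bigcup_\ell W_\ell$ therefore has $|W|\le j$, and the term appears identically in every $f_U$ with $U\supseteq W$.
\item The alternating sum $\sum_{W\subseteq U\subseteq T}(-1)^{|T|-|U|}$ equals $1$ if $W=T$ and $0$ otherwise. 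So every such term cancels in $h_j(T)$ when $|T|>j$.
\end{itemize}
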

\begin{proof}
We split the proof into that for the error from the Taylor series truncation and the runtime.

\emph{Taylor error.}
By Lemma~\ref{lem:alg-polydisc}, the polynomial $f_{S,B}$ has degree $d\le m$ and zeros $\zeta_1,\ldots,\zeta_d$ outside the disk $|q|\le\mathcal R$. Following the proof of \cite[Lemma~2.2.1]{BarvinokBook}, factorization over zeros of $f_{S,B}$ gives
\begin{equation}
 f_{S,B}(q)=\prod_{\ell=1}^d(1-q/\zeta_\ell),\qquad
 \log f_{S,B}(q)=-\sum_{j\ge1}\frac{q^j}{j}
                                   \sum_{\ell=1}^d\zeta_\ell^{-j}.
 \label{eq:alg-root-expansion}
\end{equation}
Hence $|c_j|\le m/(j\mathcal R^j)$, and the tail at $q=1$ is at most
\[
 \sum_{j>L}\frac{m}{j\mathcal R^j}
 \le\frac{m\mathcal R^{-(L+1)}}{(L+1)(1-\mathcal R^{-1})}.
\]
The polynomial is real and nonzero on $[0,1]$ and starts at one, so the logarithm there is real. Thus
\begin{equation}
 \frac{\exp(\sum_{j=1}^L c_j)}{\mathcal Z(S,B)}
 =\exp\!\left(\sum_{j=1}^L c_j-\log\mathcal Z(S,B)\right),
 \label{eq:alg-exponentiate}
\end{equation}
which gives the multiplicative estimate. 

\emph{Coefficient computation.}
The coefficients $c_1, \dots, c_L$ can be computed using the methods in~\cite{PatelRegts,YYZ} (see also~\cite{MM,PZC}).
We group the coefficients of derivatives of $\log f_{S,B}$ according to the set of term labels they use.
Polymers are connected sets of $G[S]$, and there are at most $m(\Delta+1)^{O(L)}$ connected sets of $G[S]$ of size at most $L$. These are enumerable within that bound up to polynomial factors in $n$ and $m$ (see Lemma~2.4 of~\cite{HPR}). This enumeration also appears in~\cite[Lemma~3.4]{PatelRegts}.
The coefficient decomposition in~\cite{PatelRegts,YYZ} then gives
\begin{equation}
 f_U(q)=\sum_{W\subseteq U}q^{|W|}
              \prod_{\gamma\in\comp(G[W])}v_\gamma^{S,B},
 \qquad U\subseteq S.
 \label{eq:alg-connected-coeff}
\end{equation}
Throughout this calculation $(S,B)$ and its weights are fixed and $f_S=f_{S,B}$. The coefficient formula derived by applying inclusion--exclusion is~\cite[Lemma 17]{YYZ}
\begin{equation}
 c_j=\sum_{\substack{T\subseteq S,\ G[T]\ \mathrm{connected}\\1\le|T|\le j}}
       \ \sum_{U\subseteq T}(-1)^{|T|-|U|}[q^j]\log f_U(q).
 \label{eq:alg-coefficient-sum}
\end{equation}
The convenient fact about~\eqref{eq:alg-coefficient-sum} is that it computes $c_j$ using connected sets of size at most $j$.
In fact, $h_j(T)=\sum_{U\subseteq T}(-1)^{|T|-|U|}[q^j]\log f_U(q)$ vanishes unless $T$ is connected and has at most $j$ vertices.
To see why, let $G[T]$ be disconnected, and write $T=T_1\sqcup T_2$ with both parts nonempty and no edge between them. For every $U\subseteq T$, the component weights in \eqref{eq:alg-connected-coeff} factor, giving
\[
 f_U(q)=f_{U\cap T_1}(q)f_{U\cap T_2}(q).
\]
Thus $\log f_U=\log f_{U\cap T_1}+\log f_{U\cap T_2}$.
In the alternating sum $h_j(T)=\sum_{U\subseteq T}(-1)^{|T|-|U|}[q^j](\log f_{U\cap T_1}+\log f_{U\cap T_2})$, the first term cancels upon summing over subsets of $T_2$, and the second cancels upon summing over subsets of $T_1$. Hence $h_j(T)=0$. 
Next, expand the logarithm as
\[
 \log f_U(q)=\sum_{r\ge1}\frac{(-1)^{r-1}}{r}
                              (f_U(q)-1)^r.
\]
A contribution to its coefficient of $q^j$ uses nonempty sets $W_1,\ldots,W_r\subseteq U$ with $\sum_{\ell=1}^r|W_\ell|=j$.  Because the weights are fixed, this contribution occurs with the same coefficient for every $U$ containing $W$.
Its contribution to $h_j(T)=\sum_{U\subseteq T}(-1)^{|T|-|U|}[q^j]\log f_U(q)$ is consequently multiplied by
\[
 \sum_{U:\,W\subseteq U\subseteq T}(-1)^{|T|-|U|}
 =
 \begin{cases}
  1,&W=T,\\
  0,&W\ne T.
 \end{cases}
\]
If $|T|>j$, then $W\ne T$ for every such contribution, proving
$h_j(T)=0$.

For each $T$, the subset sums in~\eqref{eq:alg-coefficient-sum} cost $2^{O(L)}$ operations to compute up to polynomial factors since each $f_U(q)$ is a sum of $2^{O(L)}$ monomials.
Summing over the enumerated connected sets proves the claimed runtime of $\operatorname{poly}(n,m,L)(\Delta+1)^{O(L)}$.
\end{proof}

We now put the pieces together to prove Proposition~\ref{prop:alg-estimateZ}.
\begin{proof}[Proof of Proposition~\ref{prop:alg-estimateZ}]
Apply Lemma~\ref{lem:alg-counting} with
\[
 L=\left\lceil\frac4\delta\log\frac{16(m+1)}{\delta\xi}\right\rceil.
\]
We need to account for the runtime to compute the weights $v_\gamma^{S,B}$ in~\eqref{eq:alg-mixed} which are either set to $-r_\gamma=-\sigma^{|\gamma|}w_A(\gamma)$ or $\tau(D_\gamma)$.
For $k=|\gamma|$, $-r_\gamma$ can be computed in $\operatorname{poly}(k,p)$ time at precision $p$ by computing the connected components of $A[\gamma]$ giving $w_A(\gamma)$.
Products of the $k$ Pauli generators span an algebra with at most $2^k$ Pauli basis strings. Computing the subset exponentials and inclusion--exclusion in~\eqref{eq:FD} evaluates $\tau(D_\gamma)$ to absolute error $2^{-p}$ in $2^{O(k)}\operatorname{poly}(n,k,p)$ time. Section~\ref{sec:alg-states} gives the details. 

For the error estimate, the temperature slack implies
\begin{equation}
 \log\mathcal R\ge\frac\delta4,\qquad
 1-\mathcal R^{-1}\ge\frac\delta4,\qquad
 \mathcal R-1\ge\frac\delta4.
 \label{eq:alg-slack}
\end{equation}
Indeed, $z\le z_2<1/2$ and $\tanh'\ge1/2$ on $[0,z_2]$, so
$t-\tanh(\beta J)\ge\delta z/2\ge\delta t/2$.
Thus $\mathcal R^{-1}=\sigma/t\le1-\delta/4$, giving all three
inequalities. With the choice of $L$,
\[
 \mathcal R^{-(L+1)}\le e^{-\delta L/4}
 \le\frac{\delta\xi}{16(m+1)},
\]
so the error in~\eqref{eq:alg-log-tail} is at most $\xi/4$.
Work with enough precision so that computed value $\sum_{j=1}^L c_j$ has additive error at most $\xi/4$, as justified in Remark~\ref{rem:alg-precision}. Exponentiation contributes a further multiplicative factor $e^{\pm\xi/4}$.
The total factor is therefore $e^{\pm3\xi/4}$, proving \eqref{eq:alg-Z-contract}. Lemma~\ref{lem:alg-counting} gives the runtime.

\end{proof}

\subsection{Sampling product states from the polymers}
\label{sec:alg-states}

It remains to sample $\rho_\Gamma$ in~\eqref{eq:alg-state-decomposition} which as a reminder is equal to
\[
\rho_\Gamma =
 \left(\bigotimes_{\gamma\in\Gamma}
       \frac{R_\gamma}{\Tr_{U_\gamma}R_\gamma}\right)
       \otimes\frac{I_{U_\Gamma^c}}{2^{|U_\Gamma^c|}}.
\]
Each $\rho_\Gamma$ is a product of terms of the form $R_\gamma/\Tr_{U_\gamma}R_\gamma$ which act on disjoint sets of qubits so we treat them independently.
Identity operators acting on the remaining qubits can be handled by sampling uniformly at random from Pauli Z eigenstates.

\begin{proposition}[Local factor sampling]
\label{prop:alg-factor}
For a polymer of size $k$ and $0<\xi<1$, one can compute $a_\gamma$ to relative error $\xi$ and sample the product-state mixture to total variation error $\xi$ in $2^{O(k)}\operatorname{poly}(n,k,\log(1/\xi))$ time, with
$\Delta,\delta$ fixed and numerical conventions as in
Remark~\ref{rem:alg-precision}.
\end{proposition}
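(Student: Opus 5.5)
The plan is to compute $D_\gamma$ exactly in a reduced Pauli basis, read off $a_\gamma$ and the decomposition from Lemma~\ref{lem:cone}, and then sample from that decomposition.

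\textbf{Reducing to a small algebra.} Let the polymer be $\gamma=\{i_1,\ldots,i_k\}$, with generators $S_{i_1},\ldots,S_{i_k}$. Every product of these generators equals a phase times $\prod_{\ell\in L}S_{i_\ell}$ for some $L\subseteq\gamma$. So every operator appearing in~\eqref{eq:FD} is a combination of at most $2^k$ Pauli strings. Each string is determined by its label set $L$ together with a phase we can track.

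We represent such strings symplectically, as bit vectors of length $2|U_\gamma|\le 2n$. Commutation signs and products then cost $\operatorname{poly}(n,k)$ per operation.

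\textbf{Computing $D_\gamma$.} The plan has four steps.
\begin{enumerate}
\item For each $B\subseteq\gamma$, form the Hermitian matrix $M_B=\sum_{i\in B}z_iS_i$. Represent it in the regular representation on the span of the at most $2^{|B|}$ strings generated by $B$, with multiplication computed by the symplectic rule.
\item Compute $\exp(M_B)$ in that representation to absolute precision $2^{-p}$. A dense matrix exponential of a $2^k\times2^k$ matrix takes $2^{O(k)}\operatorname{poly}(p)$ time. An alternative is a Taylor expansion truncated at order $O(p+kz)$; this is valid because the Pauli $1$-norm of $M_B$ is at most $kz$, so submultiplicativity~\eqref{eq:paulisubmultiplicativity} bounds the truncation error.
\item Divide by $\prod_{i\in B}\cosh z_i$.
\item Combine the $2^k$ terms by inclusion--exclusion, as in~\eqref{eq:FD}.
\end{enumerate}
This gives the Pauli coefficients $d_P$ of $D_\gamma$, at most $2^k$ of them, each to absolute error $2^{k+O(1)}2^{-p}$. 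Then $\tau(D_\gamma)=d_I$. We compute $r_\gamma=\sigma^{|\gamma|}w_A(\gamma)$ by finding the components of $A[\gamma]$, and set $a_\gamma=r_\gamma+d_I$.

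\textbf{Relative error for $a_\gamma$.} The key bound is~\eqref{eq:alg-cone-margin}, which gives $a_\gamma\ge r_\gamma-\normp{D_\gamma}\ge(\mathcal R-1)r_\gamma$. Combined with~\eqref{eq:alg-slack}, this yields $a_\gamma\ge(\delta/4)\sigma^k\eta^{O(k)}$.

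This lower bound is $2^{-O(k)}$ once $\Delta$ and $\delta$ are fixed and $\beta J$ is not too small. So choosing $p=O(k+\log(1/\xi)+\log(1/\sigma))$ gives relative error $\xi$.

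I expect this numerical lower bound to be the main obstacle. The issue is that $\sigma$ could be tiny when $\beta$ is tiny. I would try to handle this in two ways. First, we can compute everything in units of $\sigma^k$, since the Taylor coefficients of $D_\gamma$ scale as $\prod\tanh z_i\le\tanh(\beta J)^k$, which is comparable to $\sigma^k$. Second, we can defer to Remark~\ref{rem:alg-precision} for input bit-length conventions.

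\textbf{Sampling.} By Lemma~\ref{lem:cone}, $R_\gamma$ decomposes as
\[
\Bigl(r_\gamma-\sum_{P}|d_P|\Bigr)I+\sum_{P}|d_P|\bigl(I+\operatorname{sgn}(d_P)P\bigr).
\]
Note that $d_P$ is real and its imaginary part is numerical noise. Restricted to $U_\gamma$, the identity term has normalized weight $r_\gamma-\sum|d_P|$ and each $I\pm P$ term has weight $|d_P|$; these are nonnegative by~\eqref{eq:alg-cone-margin}.

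The sampler first picks a summand with probability proportional to its weight. For a summand $I$, it outputs uniformly random $Z$ eigenstates on $U_\gamma$. For a summand $I+sP$, it does the following:
\begin{itemize}
\item on each qubit where $P$ acts nontrivially, draw a uniformly random eigenstate of that single-qubit Pauli, except on the last such qubit, where it chooses the sign so that the product of the eigenvalues equals $s$;
\item on the identity positions of $P$, use random $Z$ eigenstates.
\end{itemize}
This reproduces $(I+sP)/\Tr(I+sP)$ exactly. The total variation error comes only from the perturbed weights. These are accurate to $2^{-p}$ relative to the margin $(\mathcal R-1)r_\gamma$, which gives error at most $\xi$. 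The total cost is $2^{O(k)}\operatorname{poly}(n,k,\log(1/\xi))$.
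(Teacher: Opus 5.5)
Your route matches the paper's. You expand $D_\gamma$ over the at most $2^k$ Pauli strings generated by the polymer, read off $a_\gamma=r_\gamma+d_I$, sample from the decomposition of Lemma~\ref{lem:cone} with the parity-constrained eigenstate rule, and control numerical error by the margin $(\mathcal R-1)r_\gamma$ from \eqref{eq:alg-cone-margin}. Two steps need fixing, though.

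\textbf{The flagged obstacle is not real.} The ``main obstacle'' you identify does not exist, and you leave it hedged rather than resolved. By \eqref{eq:alg-parameters}, $\sigma=(t+\tanh(\beta J))/2\ge t/2$, so $\sigma$ is bounded below by a constant depending only on $\Delta$, however small $\beta$ is. Combined with $\eta\ge1/3$ and $\alpha\ge1$, this gives $r_\gamma\ge(t/6)^k$ and $a_\gamma\ge(\delta/4)(t/6)^k$. Hence absolute precision $2^{-p}$ with $p=O_{\Delta,\delta}(k+\log(1/\xi))$ suffices directly, which is exactly how the paper concludes. No rescaling by $\sigma^k$ is needed. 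Your stated justification for it is also off: $\tanh(\beta J)^k$ can be far smaller than $\sigma^k$ when $\beta$ is small. This does no harm, since only absolute accuracy measured against $r_\gamma$ matters.

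\textbf{The identity branch of your sampler is mis-weighted.} Branch probabilities must be the normalized traces of the summands, and these add up to $a_\gamma=\tau(R_\gamma)=r_\gamma+d_I$. Your weights, $r_\gamma-\sum|d_P|$ for the identity plus $|d_P|$ for each $I\pm P$, add up to $r_\gamma$ under either reading of the sum.
\begin{itemize}
\item If the sum includes $P=I$, as in Lemma~\ref{lem:cone}, that summand $|d_I|\bigl(I+\operatorname{sgn}(d_I)I\bigr)$ is $2|d_I|I$ or $0$. Its trace is not proportional to its coefficient like the others, and your parity rule is undefined for $P=I$, $s=-1$.
\item If the sum runs over $P\ne I$, you drop $d_I$ and effectively sample $R_\gamma-d_II$ instead of $R_\gamma$.
\end{itemize}
The quantity $d_I=\tau(D_\gamma)$ is generally nonzero once $\gamma$ contains an anticommuting pair. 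For example, for $\{X,Z\}$ on one qubit with equal $z$, $d_I=\cosh(\sqrt2 z)/\cosh^2 z-1\approx-z^4/6$. So this is a bias that is not controlled by $\xi$.

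The fix is the paper's bookkeeping:
\begin{itemize}
\item give the identity branch weight $c_\gamma=r_\gamma+d_I-\sum_{P\ne I}|d_P|\ge(\mathcal R-1)r_\gamma$;
\item give each non-identity branch weight $|d_P|$;
\item divide all weights by $a_\gamma=c_\gamma+\sum_{P\ne I}|d_P|$.
\end{itemize}
With these two corrections your argument goes through as in the paper.
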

\begin{proof}
A polymer of size $k$ contains $k$ Pauli terms spanning an algebra of at most $2^k$ Pauli strings.
As a reminder, $R_\gamma = r_\gamma I+D_\gamma$.
We can write $D_\gamma$ in the Pauli basis of this algebra so $D_\gamma=d_I I+\sum_{P\ne I}d_PP$, so $\tau(D_\gamma)=d_I$.
The runtime needed to obtain this expansion up to Pauli coefficient $\ell^1$ error at most $2^{-p}$ is $2^{O(k)}\operatorname{poly}(n,k,p)$ by computing the exponentials and inclusion--exclusion arguments within this algebra.
The decomposition in Lemma~\ref{lem:cone} gives
\begin{equation}
 \begin{split}
 R_\gamma&=c_\gamma I+
      \sum_{P\ne I}|d_P|\bigl(I+\operatorname{sgn}(d_P)P\bigr),\\
 c_\gamma&=r_\gamma+d_I-\sum_{P\ne I}|d_P|
       \ge(\mathcal R-1)r_\gamma>0,\qquad
 a_\gamma=c_\gamma+\sum_{P\ne I}|d_P|.
 \end{split}
 \label{eq:alg-local-mixture}
\end{equation}
Division of the coefficients by $a_\gamma$ gives probability $|d_P|/a_\gamma$ for a Pauli $P$. We sample an eigenstate by first sampling a Pauli and then a uniformly random product eigenstate of that Pauli. This is detailed in Algorithm~\ref{alg:sample-factor}. 

\begin{algorithm}[H]
\caption{$\operatorname{SampleFactor}(\gamma,\xi)$}
\label{alg:sample-factor}
\small
\begin{algorithmic}[1]
\Require A selected polymer $\gamma$ and total variation tolerance $\xi$.
\State Compute $D_\gamma=d_I I+\sum_{P\ne I}d_PP$ and $r_\gamma,c_\gamma,a_\gamma$.
\State Draw the identity branch with probability $c_\gamma/a_\gamma$, or a
\Statex \hspace{\algorithmicindent}nonidentity Pauli branch $P$ with probability $|d_P|/a_\gamma$.
\If{the identity branch is chosen}
 \State \Return independent uniformly random $Z$ eigenstates on $U_\gamma$.
\EndIf
\State List the nonidentity sites of $P$ as $i_1,\ldots,i_\ell$.
\State Draw independent uniform signs $\lambda_1,\ldots,\lambda_{\ell-1}\in\{-1,1\}$.
\State $\lambda_\ell\gets\operatorname{sgn}(d_P)\prod_{j<\ell}\lambda_j$.
\State At $i_j$, output the eigenstate of $P_{i_j}$ with eigenvalue $\lambda_j$;
\Statex \hspace{\algorithmicindent}on $U_\gamma\setminus\supp(P)$, use independent uniform $Z$ eigenstates.
\State \Return the tensor product of these single-qubit states.
\end{algorithmic}
\end{algorithm}

For numerical stability, it suffices to approximate to error
\[
 E:=\normp{\widetilde D_\gamma-D_\gamma}
      +|\widetilde r_\gamma-r_\gamma|
 \le\frac{\xi}{16}(\mathcal R-1)r_\gamma.
\]
The errors then in $c_\gamma$ and $a_\gamma$ are at most $E$, so both remain positive and $a_\gamma$ is estimated to relative error at most $\xi/16$.
The same bound changes the normalized signed-Pauli distribution by at most $2E/a_\gamma\le\xi/8$ in total variation.
Expanded in the Pauli basis, $D_\gamma$ is a weighted sum of at most $2^k$ Pauli monomials.
Since $\sigma\ge t/2$, $\eta\ge1/3$, and $\alpha\ge1$, we have $r_\gamma\ge(t/6)^k$. Consequently $p=O_{\Delta,\delta}(k+\log(1/\xi))$ bits of precision suffice for bounding the numerical error in the coefficients of $D_\gamma$ and $r_\gamma$.

\end{proof}

\begin{remark}[Numerical input and bit complexity]
\label{rem:alg-precision}
To describe numerical complexity, assume the coefficients of the Pauli Hamiltonian, $\beta$, and $\epsilon$ are given as binary rational numbers, and let $\mathsf L$ be their total input length. The displayed runtimes suppress a polynomial dependence on $\mathsf L$. The full sampler has complexity
\[
 \operatorname{poly}(n,m,\mathsf L,K,L)(\Delta+1)^{O(K+L)}.
\]

Only polynomially many precision bits are needed. The estimates needed for sampling are controlled by the positive margin in Proposition~\ref{prop:alg-factor}. For the partition function oracle, one can round each polymer weight to a fraction in $[-1,1]$ and perform the necessary calculations with rational arithmetic. Since $f_U(0)=1$ and $|v_\gamma^{S,B}|<1$, errors of $2^{-p}$ in the precision of the weights change $\sum_{j=1}^L c_j$ in Lemma~\ref{lem:alg-counting} by at most $m\exp(O_\Delta(L^2))2^{-p}$. Thus $p=O_\Delta(L^2+\log(m/\xi))$ suffices for additive error $\xi/4$.
The bounds on the roots in~\eqref{eq:alg-root-expansion} gives
$|\log\mathcal Z(S,B)|=O_\delta(m)$, so even an exponentially small
$\mathcal Z(S,B)$ has a relative approximation with polynomial bit
length.

Finally, round each polymer-decision distribution to probabilities with a common power-of-two denominator, with total variation error at most $\epsilon/(4m)$ per decision. Rounding within Algorithm~\ref{alg:sample-factor} is included in its tolerance $\xi$.
\end{remark}

\subsection{Proof of the sampling theorem}
\label{sec:alg-conclusion}

\begin{proof}[Proof of Theorem~\ref{thm:sampling}]
For $\rho_\beta(H)=\sum_\Gamma \pi(\Gamma)\rho_\Gamma$, the algorithm first samples a set of polymers according to $\pi$ and then samples a product state from $\rho_\Gamma$. 
To sample a set of polymers, run Algorithm~\ref{alg:product-sampler}. For each selected polymer, run Algorithm~\ref{alg:sample-factor} to sample from $\rho_\Gamma$ filling uncovered qubits there with independent uniform $Z$ eigenstates.

Without the size cutoff and with exact probabilities, the conditional probabilities \eqref{eq:alg-pgamma} sample $\pi$. 
Conditioning the decision to sample on not selecting a polymer of size larger than $K=O\left(\log\frac{m}{\epsilon}\right)$ changes the distribution by at most $\epsilon/(8m)$ in total variation, by Lemma~\ref{lem:alg-tail}. Following \eqref{eq:alg-continuation-weights} and \eqref{eq:alg-draw}, $W_y$ denotes the weights over eligible polymers of size no larger than $K$ with distribution $q_y=W_y/(\sum_{y'}W_{y'})$.
Setting $\xi=\epsilon/(100m)$, Proposition~\ref{prop:alg-estimateZ} constructs estimate $\widehat W_y$ to $W_y$ up to multiplicative error
\[
 (1-\xi)e^{-\xi}\le\frac{\widehat W_y}{W_y}
                  \le(1+\xi)e^\xi.
\]
With this error bound, we also have $\sum_y|\widehat W_y-W_y| \le 3\xi\sum_y W_y$.
We thus have
\[
 \frac12\sum_y\left|
    \frac{\widehat W_y}{\sum_{y'}\widehat W_{y'}}-
    \frac{W_y}{\sum_{y'}W_{y'}}\right|
 \le\frac{\sum_y|\widehat W_y-W_y|}{\sum_yW_y} \le 3\xi.
\]
Thus numerical evaluation changes the law of $W_y$ by at most $3\xi$.

We now accumulate these errors over the at most $m$ choices of the sampler (each polymer decision removes at least one undecided label, so there are at most $m$ decisions). 
Let $\mu_i$ and $\widehat\mu_i$ be the distributions of the first $i$ decisions in the ideal sampler and in the sampler using the cutoff and estimated continuation weights, respectively. Write $p_i(\,\cdot\mid h)$ and $\widehat p_i(\,\cdot\mid h)$ for the distribution of the next decision given the history $h$. The preceding bounds give, uniformly over
valid histories,
\[
 d_{\mathrm{TV}}\bigl(p_i(\,\cdot\mid h),
                      \widehat p_i(\,\cdot\mid h)\bigr)
 \le \frac{\epsilon}{8m}+3\xi.
\]
Summing over all possible histories,
\begin{align*}
 d_{\mathrm{TV}}(\mu_i,\widehat\mu_i)
 &=\frac12\sum_{h,y}
   \left|\mu_{i-1}(h)p_i(y\mid h)
   -\widehat\mu_{i-1}(h)\widehat p_i(y\mid h)\right|\\
 &\le d_{\mathrm{TV}}(\mu_{i-1},\widehat\mu_{i-1})
   +\sum_h\widehat\mu_{i-1}(h)\,
      d_{\mathrm{TV}}\bigl(p_i(\,\cdot\mid h),
                           \widehat p_i(\,\cdot\mid h)\bigr)\\
 &\le d_{\mathrm{TV}}(\mu_{i-1},\widehat\mu_{i-1})
   +\frac{\epsilon}{8m}+3\xi.
\end{align*}
The first inequality uses that each conditional distribution sums to one. Since $\mu_0=\widehat\mu_0$, iterating this inequality yields
\[
 d_{\mathrm{TV}}(\pi,\widehat\pi)
 \le d_{\mathrm{TV}}(\mu_m,\widehat\mu_m)
 \le m\left(\frac{\epsilon}{8m}+3\xi\right),
\]
where $\widehat\pi$ is the resulting law of the sampler for $\Gamma$. 
Now fix a family $\Gamma$ outputted by this approximate sampler.
Let $\nu_\Gamma$ and $\widehat\nu_\Gamma$ be the exact and implemented laws of the product-state output conditioned on selecting $\Gamma$. Applying the triangle inequality gives
\[
 d_{\mathrm{TV}}(\nu_\Gamma,\widehat\nu_\Gamma)
 \le \sum_{\gamma\in\Gamma}
       d_{\mathrm{TV}}(\nu_\gamma,\widehat\nu_\gamma)
 \le |\Gamma|\xi\le m\xi,
\]
by Proposition~\ref{prop:alg-factor}. 
The ideal and approximated laws are therefore bounded in TV distance by
\begin{align*}
 d_{\mathrm{TV}}\left(
     \sum_\Gamma\pi(\Gamma)\nu_\Gamma,\,
     \sum_\Gamma\widehat\pi(\Gamma)\widehat\nu_\Gamma
                 \right)
 &\le d_{\mathrm{TV}}(\pi,\widehat\pi)
   +\sum_\Gamma\widehat\pi(\Gamma)
       d_{\mathrm{TV}}(\nu_\Gamma,\widehat\nu_\Gamma)\\
 &\le m\left(\frac{\epsilon}{8m}+4\xi\right).
\end{align*}
For the first inequality, insert $\sum_\Gamma\widehat\pi(\Gamma)\nu_\Gamma$ and use the triangle
inequality.

Finally, to account for finite precision, we allocate $\epsilon/(4m)$ additional total variation error to rounding weights as stated in Remark~\ref{rem:alg-precision}. This adds at most $\epsilon/4$ to the output error. Thus, if $p_\psi$ and $q_\psi$ are the ideal and final implemented output probabilities,
\[
 d_{\mathrm{TV}}(p,q)
 \le m\left(\frac{\epsilon}{8m}+4\xi\right)+\frac{\epsilon}{4}
 <\epsilon.
\]
The ideal law $p_\psi$ averages to $\rho_\beta(H)$ and since each $\ket\psi\bra\psi$ has trace norm one,
\[
 \frac12\left\|
      \sum_\psi q_\psi\ket\psi\bra\psi-\rho_\beta(H)
             \right\|_1
 =\frac12\left\|
      \sum_\psi(q_\psi-p_\psi)\ket\psi\bra\psi
             \right\|_1
 \le\frac12\sum_\psi|q_\psi-p_\psi|
 <\epsilon.
\]
This proves~\eqref{eq:sampling-guarantee}.

Across all decisions there are at most $m(\Delta+1)^{O(K)}$ candidates (either polymers or vertices) for sampling, counting repetitions. Setting $K=O\left(\delta^{-1}\log\frac{m+1}{\delta\epsilon}\right)$ and similarly $ L=O\left(\delta^{-1}\log\frac{m+1}{\delta\epsilon}\right)$, Propositions~\ref{prop:alg-estimateZ} and~\ref{prop:alg-factor} give total runtime
\begin{equation}
 \operatorname{poly}(n,m,K,L)(\Delta+1)^{O(K+L)}.
 \label{eq:alg-runtime}
\end{equation}
With the numerical convention in Remark~\ref{rem:alg-precision}, this
is polynomial in $n,m,1/\epsilon$ for fixed $\Delta,\delta$.
\end{proof}

\section{Discussion}
\label{sec:discussion}

The universal threshold in Theorem~\ref{thm:fixed} is shown to be optimal and approached by a family of commuting Pauli Hamiltonians whose term-overlap graphs are trees. This picture bears similarities to several classical arguments in which a recursive description on trees determines a given threshold. Examples include Shearer positivity criteria for the independence polynomial, walk tree recursions for approximate counting, and proofs of correlation decay and zero freeness~\cite{ScottSokal,Weitz,LSSContraction,ShaoSun,PetersRegtsIsing,FernandezProcacci}.
Trees do not always constitute a class of systems approaching a statistical threshold. For example, Sly constructed a multistate spin system with uniqueness on the $d$-regular tree but nonuniqueness on another graph of the same degree~\cite{SlyTreesVsGraphs}. 

Recursively calculated thresholds appear in other areas of quantum information.
For quantum satisfiability, the quantum Lov\'asz local lemma and the quantum Shearer bound give graph-dependent conditions for frustration-freeness~\cite{AKS,SattathFF}. Shearer's bound is tight in the general setting where local dimensions are allowed to grow~\cite{HeLiSunZhangQLLL}. Another example of a quantum threshold appears in magic-state distillation, where the success of a purification protocol is governed by a recurrence formula~\cite{BravyiKitaevMagic}.

Our proof establishes positivity after collecting the full expansion, rather than preserving positivity through the pinning steps of~\cite{BLMT,PZC}. At bounded degree, the resulting decomposition supports the approximate sampler in Theorem~\ref{thm:sampling}. Its conditional polymer partition functions remain in a common zero-free region, which permits their efficient approximation. Our sampler does not by itself extend the temperature guarantees of the earlier pinning procedures, and it is unclear if the pinning procedures give efficient algorithms up to the threshold $z_\Delta$.

Several open questions remain. The sharpness construction allows the locality of a term to grow with $\Delta$, so the optimal constants at a fixed locality are undetermined (e.g. say $k=2$ local Hamiltonians). One can also ask for thermal thresholds relative to larger classes than product states, such as convex mixtures of states prepared by shallow circuits or circuits in a fixed level of the magic hierarchy~\cite{AG,Parham}. Finally, random Hamiltonians may have typical thresholds different from the worst-case bound. Results on product-state and ground-state energies of quantum $p$-spin models~\cite{AGK} provide one setting in which to investigate this distinction. 

More broadly, we anticipate there are connections between the proof ideas here and the growing literature on quantum or classical algorithms for Gibbs state preparation~\cite{CKG,CKBG}. Mixing time bounds have been proven for suitable Gibbs samplers on commuting models~\cite{KastoryanoBrandao,KochanowskiEtAl}. Recent work on a quantum Dobrushin condition~\cite{QuantumDobrushin} and mixing time bounds from cluster expansions~\cite{Bergamaschi} or Lieb-Robinson proofs~\cite{RFAOptimal} give guarantees of fast mixing times at high-temperature but the thresholds there are likely not optimally determined. Another variant of this question asks when it is computationally challenging to sample from the computational basis of an entangled Gibbs state~\cite{bergamaschi2024quantum}. 
Finding exact thresholds for zeros in the physical partition function (Fisher zeros) of Pauli Hamiltonians is also an interesting question where substantial progress has been made in~\cite{HMS,ZK,PZC,MM,MannWaite,MH}.
One should not immediately expect that thresholds of separability, zero-freeness of the physical partition function, and rapid mixing to coincide automatically~\cite{ZK,PZC}. For example, Remark~\ref{rem:zeros} gives an explicit example of a separation between separability and zero-free thresholds for the Hamiltonian family studied here.
Nonetheless, it would be interesting to see if the ideas here can help in improving bounds on these other thresholds.

\section*{Acknowledgements}
I thank Eric Anschuetz, Saeed Mehraban, and Alexander Zlokapa for helpful discussions. 
I acknowledge support from the Hastings Initiative for AI and Humanity and NSF Award 2624545. 

\medskip
\noindent
\textbf{Personal note:} 
My goal in posting this is to spark new conversations and share with the community a set of results that I found very enlightening.
I hope we find time to collaborate and talk to each other about the interesting things we discover. 
Otherwise, this manuscript will only be another addition to the recent proliferation of papers in our community.
So in that light, if this paper interests you, I would love to hear from you.
}

\bibliographystyle{quantum}
\bibliography{references}

\end{document}